\author{}
\theoremstyle{plain}
\newtheorem{remark}{Remark}
\newtheorem{theorem}{Theorem}
\newtheorem{definition}{Definition}
\newtheorem{proposition}{Proposition}
\numberwithin{equation}{section}
\newcommand\ma{\mathbf{A}}
\newcommand\mb{\mathbf{B}}
\newcommand\mc{\mathbf{C}}
\newcommand\md{\mathbf{D}}
\newcommand\me{\mathbf{E}}
\newcommand\mh{\mathbf{H}}
\newcommand\mi{\mathbf{I}}
\newcommand\mr{\mathbf{R}}
\newcommand\mj{\mathbf{J}}
\newcommand\mx{\mathbf{X}}
\newcommand\mw{\mathbf{W}}
\newcommand\mm{\mathbf{M}}
\newcommand\mpp{\mathbf{P}}
\newcommand\mo{\mathbf{O}}
\newcommand\muu{\mathbf{U}}
\newcommand\mLambda{\bm{\Lambda}}
\newcommand*{\supp}{\mathrm{supp}}
\title{Dynamic networks clustering via mirror distance}
\date{}
\author{Runbing Zheng\thanks{Department of Applied Mathematics and Statistics, Johns Hopkins University}
\and Avanti Athreya\footnotemark[1]
\and Marta Zlatic\thanks{MRC Laboratory of Molecular Biology, University of Cambridge}
\and Michael Clayton\footnotemark[3]
\and Carey E. Priebe\footnotemark[1]}
\begin{document}

\maketitle

\renewcommand{\thefootnote}{}
\footnotetext{This work was supported in part by Office of Naval Research (ONR) Science of Autonomy Award Number N00014-24-1-2278.}
\renewcommand{\thefootnote}{\arabic{footnote}}

\begin{abstract}
The classification of different patterns of network evolution, for example in brain connectomes or social networks, is a key problem in network inference and modern data science.   Building on the notion of a network's Euclidean mirror, which captures its evolution as a curve in Euclidean space, we develop the Dynamic Network Clustering through Mirror Distance (DNCMD), an algorithm for clustering dynamic networks based on a distance measure between their associated mirrors.
	We provide theoretical guarantees for DNCMD to achieve exact recovery of distinct evolutionary patterns for latent position random networks both when underlying vertex features change deterministically and when they follow a stochastic process.
	We validate our theoretical results through numerical simulations and demonstrate the application of DNCMD to understand edge functions in Drosophila larval connectome data, as well as to analyze temporal patterns in dynamic trade networks.	
\end{abstract}

\noindent%
{\it Keywords:} multiple dynamic networks, cluster analysis, spectral embedding

\section{Introduction}

%
%
%
%
 A network is a powerful way to represent the relationships or connections among a collection of objects, and a dynamic network can further describe how the structure of these relationships or connections evolves over time, with important applications in many fields, including social network analysis \cite{zuzul2021dynamic,wasserman1994social,kumar2006structure,palla2007quantifying}, neuroscience \cite{eschbach2020recurrent,hutchison2013dynamic,fox2005human}, and finance \cite{ellington2020dynamic,karim2022determining,wanke2019dynamic}.
For example, \cite{wang2021optimal,chen2024euclidean} study changepoint detection in a dynamic network whose evolution is governed by an underlying random walk process; \cite{yu2018netwalk, aggarwal2011outlier, manzoor2016fast} explore anomaly detection in single dynamic networks;  \cite{aynaud2010static, aynaud2013communities, xu2014dynamic} investigate community discovery in dynamic networks; and in \cite{pensky2024clustering, wang2023multilayer}, the authors focus on clustering and online changepoint detection in diverse multilayer networks. 
In this paper, we also study dynamic networks, and explore the relationships among multiple dynamic networks, aiming to cluster them based on their evolutionary patterns.

Brain neural networks provide a motivating example for the dynamic networks clustering problem: 
a practical problem in neuroscience is to identify synaptic connections that have significant influence during specific neural processes \cite{ko2011functional,song2005highly}.
For instance, during processes such as learning associations between stimuli and rewards or punishments, the evolution of brain neural networks can be studied through real experiments or simulations \cite{eschbach2020recurrent}.
In \cite{eschbach2020recurrent}, neuroscientists exploit techniques to simulate the time-varying behavior of brain neural networks by the removal of specific edges between specific pairs of neurons. For each edge, replicate simulations are conducted, resulting in multiple dynamic networks corresponding to the initial removal of a specific edge.
We wish to analyze how dynamic networks evolve, and in particular whether the associated removed edge significantly impacts subsequent network development. 
If clustering reveals that the dynamic networks associated with a particular removed edge are highly similar to each other but significantly different from those of other edges, it suggests that this edge is important and has a potentially unique function in this process. 
Discerning patterns in network evolution is not restricted to neural networks. Related important inference questions arise in various domains, including the evolution of distinct subcommunities in social networks \cite{palla2007quantifying}, interaction patterns in organizational networks  \cite{zuzul2021dynamic}, temporal changes in transportation networks \cite{agterberg2022joint}, and anomalies in trade networks \cite{wang2023multilayer, de2014network,chaney2014network}. In all of these  scenarios, a principled methodology for clustering networks based on evolutionary patterns would help address key application questions of interest.

Due to the high dimensionality and sparsity common in network data, traditional clustering methods like k-means \cite{ahmed2020k} are typically ineffective for accurately clustering dynamic networks by their distinct patterns of evolution. 
To address this challenge, we develop a dynamic network clustering method based on the Euclidean \textit{mirrors} proposed in \cite{athreya2024discovering}.
These mirrors constructed for individual dynamic networks are constructed using spectral decompositions of observed networks and provide a low-dimensional representation of network evolution.
Spectral embedding methods are widely used and have proven effective for multi-sample network inference; see \cite{levin2017central,jones2020multilayer,arroyo2021inference,gallagher2021spectral,jing2021community,pantazis2022importance,agterberg2022joint} for recent advancements.
In \cite{athreya2024discovering}, the authors assume that each node in the network has an associated time-varying low-dimensional latent vector of feature data, referred to as a latent position process, with the probabilities of the connections between any two nodes determined by their corresponding vectors, and the time-varying evolution of the latent vectors exhibiting a low-dimensional manifold structure.
For a dynamic network, the differences between the latent position matrices at observed time points are measured pairwise, and then classical multidimensional scaling (CMDS) \cite{borg2007modern,wickelmaier2003introduction,li2020central} is applied to obtain a configuration in a low-dimensional space that approximately preserves the dissimilarity. This results in a curve in low-dimensional Euclidean space that mirrors the evolution of the network dynamics over time, and is called the mirror of the dynamic network.
Given the observed adjacency matrices of the network time series, the latent position matrices can be estimated using spectral embeddings \cite{tang2018limit,sussman2012consistent}, with CMDS then providing an estimate of the mirror. 
The mirror method captures important features of the network evolution and has been demonstrated to be effective for visualizing a dynamic network and conducting inference tasks related to dynamic networks, including changepoint detection \cite{athreya2024discovering,chen2023discovering,chen2024euclidean}. 

For the problem of clustering evolutionary patterns of multiple dynamic networks, 
we consider the case where the differences in the evolution of the clusters can be reflected by the mirrors, and propose the Dynamic Network Clustering through Mirror Distance (DNCMD).
The configuration obtained by CMDS is centered at the origin but has an inherent issue of identifiability concerning orthogonal transformations.
Therefore, we construct a measure of the differences between estimated mirrors by solving an orthogonal Procrustes problem \cite{schonemann1966generalized}.
After obtaining the distances between the estimated mirrors of all pairs of dynamic networks, hierarchical clustering \cite{nielsen2016hierarchical,ward1963hierarchical} is applied to produce a dendrogram and the clustering of these associated mirrors based on the computed distances. 

DNCMD has been validated as effective for real-world complex problems, such as the neuroscience motivating example above. As shown in Figure~\ref{fig:intro}, we analyze the dynamic evolution of neural networks after removing three different synaptic edge connections (see details in Section~\ref{sec:brain}), and DNCMD successfully clusters most neural networks replicates for different removed connections and their impact on neural activity. This demonstrates its potential for identifying key synaptic connections, which we illustrate in Section~\ref{sec:brain}.
\begin{figure}[htbp] 
\centering
\subfigure{%
\includegraphics[width=10cm]{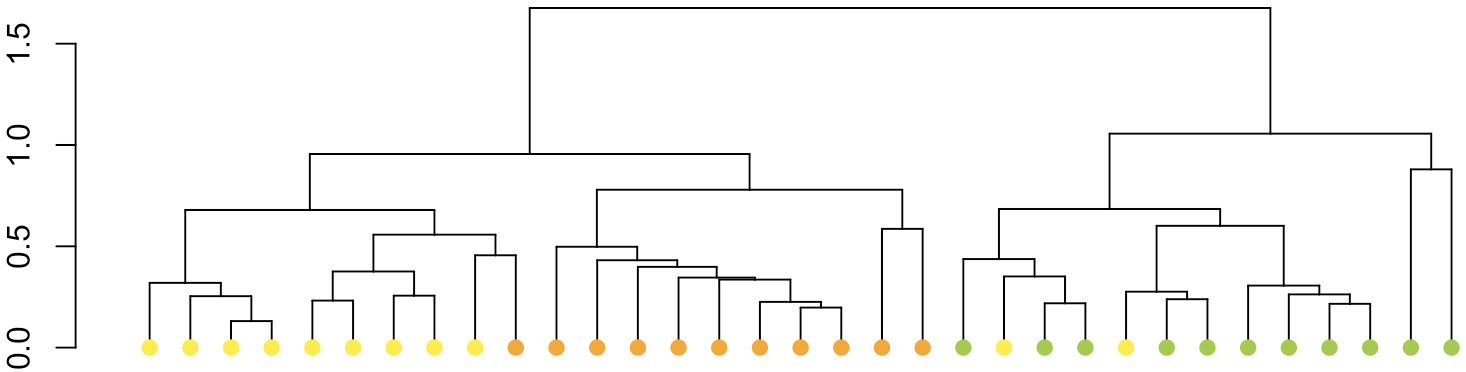}}
\subfigure{%
\includegraphics[width=4.5cm]{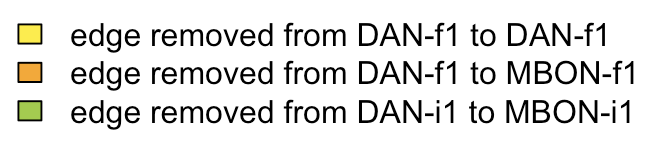}}
\caption{Dendrogram of dynamic networks of neurons for the removal of each single edge, obtained using the DNCMD algorithm. For each removed edge we have $11$ replicates.}
\label{fig:intro}
\end{figure}

For theoretical analysis, for a dynamic network, the evolution of the latent position matrix can be considered deterministic, as is the case when the features of the nodes in a network follow some predictable time-dependent pattern; it can also be random, as is the case when the actors in a network have underlying preferences that are subject to random shocks.
\cite{athreya2024discovering} considers the latent position matrix to be random and assumes that all its rows, representing the latent positions of individual nodes, are generated from a stochastic process, and proves that the mirror of the dynamic network can be consistently estimated.
In this paper, we prove the consistency result for the estimate of the mirror in the case where the latent position matrix is deterministic, allowing for flexibility in its evolution. Furthermore, for both cases, deterministic and random latent positions, we provide theoretical guarantees for DNCMD. We prove that DNCMD achieves exact recovery under mild conditions, where the term exact recovery means the clusters are correctly recovered with high probability.

The structure of our paper is as follows. In Section~\ref{sec:model}, we introduce the model for multiple dynamic networks, where the clustering structure can be distinguished by the mirror, and in this section, we first consider the case where the latent position matrices are deterministic. 
In Section~\ref{sec:alg}, we propose DNCMD to obtain the clustering result from the adjacency matrices at observed time points for multiple dynamic networks. 
In Section~\ref{sec:thm}, we first provide the theoretical guarantee for DNCMD for deterministic latent positions, and then in Section~\ref{sec:random X}, we briefly discuss the case where random latent positions are generated from stochastic processes, as considered in \cite{athreya2024discovering}, and we build the model for multiple dynamic networks with a clustering structure and provide the theoretical guarantee for DNCMD in this scenario as well.
Numerical simulations to demonstrate our theoretical results and show the clustering performance of DNCMD are presented in Section~\ref{sec:simu}. 
Section~\ref{sec:real} presents real data experiments to analyze synaptic functions in brain neural networks and temporal patterns in trade dynamic networks.
Detailed proofs of stated results and additional experiment results are provided in the supplementary material.

\subsection{Notations}

We summarize some notations used in this paper. 
For any positive integer $n$, we denote by $[n]$ the set $\{1,2,\dots, n\}$. 
For two non-negative sequences
$\{a_n\}_{n \geq 1}$ and $\{b_n\}_{n \geq 1}$, we write $a_n \lesssim
b_n$ ($a_n \gtrsim b_n$, resp.) if there exists some constant $C>0$
such that $a_n \leq C b_n$ ($a_n \geq C b_n$, resp.) for all  $n \geq 1$, and we write $a_n \asymp b_n$ if $a_n\lesssim b_n$ and $a_n\gtrsim b_n$.
The notation $a_n \ll b_n$ ($a_n \gg b_n$, resp.) means that there exists some sufficiently small (large, resp.) constant $C>0$ such that $a_n \leq Cb_n$ ($a_n \geq Cb_n$, resp.).
If $a_n/b_n$ stays bounded away from $+\infty$, we write $a_n=O(b_n)$ and $b_n=\Omega(a_n)$, and we use the notation $a_n=\Theta(b_n)$ to indicate that $a_n=O(b_n)$ and $a_n=\Omega(b_n)$.
If $a_n/b_n\to 0$, we write $a_n=o(b_n)$ and $b_n=\omega(a_n)$.
We say a sequence of events $\mathcal{A}_n$ holds with high probability if for any $c > 0$ there exists a finite constant $n_0$ depending only on $c$ such that $\mathbb{P}(\mathcal{A}_n)\geq 1-n^{-c}$ for all $n \geq n_0$.
We write $a_n = O_p(b_n)$ (resp. $a_n = o_p(b_n)$) to denote that $a_n = O(b_n)$ (resp. $a_n = o(b_n)$) holds with high probability.
We denote by $\mathcal{O}_d$ the set of $d \times d$ orthogonal
matrices. 
Given a matrix $\mathbf{M}$, we denote
its spectral, Frobenius, infinity norms, and the maximum entry (in modulus) by $\|\mathbf{M}\|$, $\|\mathbf{M}\|_{F}$, $\|\mathbf{M}\|_{\infty}$,  and $\|\mathbf{M}\|_{\max}$, respectively.

\section{Model}\label{sec:model}

In this section, we introduce our model for multiple dynamic networks with a clustering structure.
Suppose we have total $m$ dynamic networks $\{G_t^{(i)}\}_{i\in[m],t\in[T]}$ of $n$ common vertices for $T$ time points.
We build the model for multiple dynamic networks on the framework for the single dynamic network detailed in \cite{athreya2024discovering}.
Based on Random Dot Product Graph (RDPG), for each dynamic network $i\in[m]$, each vertex $s\in[n]$ at time $t\in[T]$ is associated with a latent position in $\mathbb{R}^d$, and edges between any two vertices arise independently with connection probability equal to the inner product of their respective latent position vectors. Therefore given latent position matrices $\{\mx_t^{(i)}\in\mathbb{R}^{n\times d}\}_{i\in[m],t\in[T]}$, we have $G_t^{(i)}\sim \text{RDPG}(\mx_t^{(i)})$ for all $i\in[m],t\in[T]$.

\begin{definition}[Random dot product graph \cite{athreya2018statistical}]
We say that the undirected random graph $G$ with adjacency matrix $\ma\in\mathbb{R}^{n\times n}$ is a random dot product graph with latent position matrix $\mx=[\mathbf{x}_1,\dots,\mathbf{x}_n]^\top\in\mathbb{R}^{n\times d}$ where each $\mathbf{x}_s$ denotes the latent position for vertex $s$ satisfying $\langle\mathbf{x}_s, \mathbf{x}_t\rangle=\mathbf{x}_s^\top \mathbf{x}_t\in[0,1]$ for all $s,t\in[n]$, and write $G\sim \text{RDPG}(\mx)$, if
$$\mathbb{P}[\ma|\mx]=\prod_{s<t}\langle\mathbf{x}_s, \mathbf{x}_t\rangle^{\ma_{s,t}}
\cdot(1-\langle\mathbf{x}_s, \mathbf{x}_t\rangle)^{1-{\ma_{s,t}}}.$$
We call $\mpp=\mx\mx^\top$ the connection probability matrix.
In this case, each $\ma_{s,t}$ is marginally distributed as Bernoulli$(\mpp_{s,t})$ where $\mpp_{s,t}=\langle\mathbf{x}_s, \mathbf{x}_t\rangle$.
\end{definition} 

RDPGs are a special case of latent
position graphs or graphons
\cite{Hoff2002,diaconis08:_graph_limit_exchan_random_graph,lovasz12:_large}.
In the general latent position graph model, each vertex $s$ is
associated with a latent or unobserved vector $\mathbf{x}_s\in\mathcal{X}$ where $\mathcal{X}$ is some latent space such as $\mathbb{R}^d$, and given the
collection of latent vectors $\{\mathbf{x}_s\}$, the edges are conditionally
independent Bernoulli random variables with probability $\mpp_{s,t}
= \kappa(\mathbf{x}_s, \mathbf{x}_t)$ for some kernel function $\kappa:\mathcal{X}\times \mathcal{X}\to[0,1]$.
For RDPGs, $\kappa$ is the inner product.

\begin{remark}[Orthogonal nonidentifiability in RDPGs]\label{rm:noni RDPG}
Note that if $\mx\in\mathbb{R}^{n\times d}$ is a latent position matrix, then for any $\mw\in \mathcal{O}_d$, $\mx$ and $\mx\mw$ give rise to RDPGs with the same probability matrix $\mpp=\mx\mx^\top=(\mx\mw)(\mx\mw)^\top$. Thus, the RDPG model has a nonidentifiability up to orthogonal transformation.
\end{remark}

We first consider latent position matrices $\{\mx^{(i)}_t\}$ as fixed to construct the model for simplicity and flexibility. We are also interested in the setting with randomly generated $\{\mx^{(i)}_t\}$ as described in \cite{athreya2024discovering}. For example, the latent position of each vertex can be generated from a random process \cite{chen2024euclidean}.
The model and theoretical results extended for random $\{\mx^{(i)}_t\}$ are discussed in Section~\ref{sec:random X}.

The key idea of \cite{athreya2024discovering} is to find a Euclidean analogue, called a \textit{mirror}, which is a finite-dimensional curve that retains important signal of the evolution patterns in dynamic networks. 
More specifically, a mirror of a dynamic network of RDPG model with latent position matrix $\{\mx_t\}_t$ is a dynamic vector $\{\mathbf{m}_t\}_t$ in a low-dimensional space, and the mirror keeps the overall trend of the dynamic network with that for any time points $t_1$ and $t_2$, the distance between $\mathbf{m}_{t_1}$ and $\mathbf{m}_{t_2}$ in the low-dimensional space is close to the ``distance" between $\mx_{t_1}$ and $\mx_{t_2}$.

With the above idea, for each dynamic network $i$ for $T$ time points, we formally construct the distance matrix $\md^{(i)}\in\mathbb{R}^{T\times T}$ to capture the change of latent position matrices across the $T$ time points, and have the corresponding mirror in $r$-dimensional space $\mm^{(i)}=[\mathbf{m}^{(i)}_1,\dots,\mathbf{m}^{(i)}_T]^\top \in\mathbb{R}^{T\times r}$ as following.

For each fixed $i\in[m]$, consider the dynamic network $\{G^{(i)}_t\}_{t\in[T]}$ of $n$ vertices for $T$ time points and $G^{(i)}_t\sim\text{RDPG}(\mx^{(i)}_t)$ for all $t\in[T]$.
Considering the orthogonal nonidentifiability of RDPGs as mentioned in Remark~\ref{rm:noni RDPG}, for any pair of time points $t_1$ and $t_2$ in $[T]$, we define a distance to measure the dissimilarity between the two latent position matrices $\mx^{(i)}_{t_1}$ and $\mx^{(i)}_{t_2}$ as
\begin{equation}\label{eq:Di_t1t2}
	\md^{(i)}_{t_1,t_2}:=\frac{1}{\sqrt{n}}\min_{\mo\in\mathcal{O}_d}\|\mx^{(i)}_{t_1}\mo-\mx^{(i)}_{t_2}\|_F,
\end{equation}
where $n^{-1/2}$ is used to compute the dissimilarity for the row/vertex average.
We then obtain the distance matrix $\md^{(i)}\in\mathbb{R}^{T\times T}$.



Then we find the configuration in a low-dimensional space to approximately preserve the dissimilarity across $\{\mx^{(i)}_t\}$ based on the distance matrix $\md^{(i)}$. That is, we seek $\mm^{(i)}=[\mathbf{m}^{(i)}_{1},\dots,\mathbf{m}^{(i)}_{T}]^\top\in\mathbb{R}^{T\times r}$ where the rows $\{\mathbf{m}^{(i)}_{t}\}_{t\in[T]}$ represent coordinates of points in $\mathbb{R}^{r}$ for some integer $r\geq 1$ such that their pairwise distances are ``as close as possible" to the distances given by $\md^{(i)}$, i.e. $$\|\mathbf{m}^{(i)}_{t_1}-\mathbf{m}^{(i)}_{t_2}\|\approx\md^{(i)}_{t_1,t_2}$$ for all $t_1,t_2\in[T]$. Such low-rank configuration can be derived by classical multidimensional scaling (CMDS) \cite{borg2007modern,wickelmaier2003introduction,li2020central} using the rank $r$ eigendecomposition of a matrix $\mb^{(i)}$, which is obtained from $\mathbf{D}^{(i)}$ by double centering.

\begin{definition}[CMDS embedding to dimension $r$]
Given a distance matrix $\md\in\mathbb{R}^{T\times T}$  and an embedding dimension $r$.
	We first apply double centering to compute $$\mb:=-\frac{1}{2}\mj\md^{\circ 2}\mj,$$ where $\md^{\circ 2}$ is $\md$ matrix entry-wise squared, and $\mj:=\mi-\frac{\mathbf{11}^\top}{T}$ is the centering matrix.
Then the CMDS embedding to dimension $r$ is $\mm=\muu\mLambda^{1/2},$ where $\muu\in\mathbb{R}^{T\times r}$ and the diagonal matrix $\mLambda^{(i)}\in\mathbb{R}^{r\times r}$ contains the $r$ leading eigenvectors and eigenvalues of $\mb$, respectively.
\end{definition}

\begin{remark}[Orthogonal nonidentifiability in CMDS embeddings]\label{rm:noni CMDS}
Note that the resulting configuration $\mm$ obtained by CMDS centers all points $\{\mathbf{m}_t\}_{t\in[T]}$ around the origin, resulting in an inherent issue of identifiability: $\mm$ is unique only up to an orthogonal transformation.
\end{remark}

For the multiple dynamic networks clustering problem of interest, we suppose the total $m$ dynamic networks belong to $K$ cluster, and denote their cluster labels by $\{Y_i\}_{i\in[m]}$ where each label $Y_i\in[K]$. 
For the dynamic networks in the same cluster, we suppose they share the same distance matrix, i.e. for any $i,j\in[m]$ such that $Y_i=Y_j$ we have $\md^{(i)}=\md^{(j)}$,  which means that they have similar change patterns of latent positions of the vertices across the time points. A special case is that the two dynamic networks share all the probability matrices across the time points and Theorem~\ref{thm:same D} shows that in this case the distance matrices for the latent positions are always identical.
Notice that our model is flexible to describe situations as long as the latent positions of the vertices exhibit similar patterns of variation across the time points, and the case in Theorem~\ref{thm:same D} is just a special case.


\begin{theorem}\label{thm:same D}
	If two dynamic networks $i,j\in[m]$ share the same probability matrices, i.e. $\mpp^{(i)}_t=\mpp^{(j)}_t$ for all $t\in[T]$, they then have the same distance matrix $\md^{(i)}=\md^{(j)}$. 
\end{theorem}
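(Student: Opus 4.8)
The plan is to reduce the claim to a single factorization lemma and then invoke orthogonal invariance of the Frobenius norm at each pair of time points.

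The lemma I would establish is: if $\ma,\mb\in\mathbb{R}^{n\times d}$ satisfy $\ma\ma^\top=\mb\mb^\top$, then $\ma=\mb\mw$ for some $\mw\in\mathcal{O}_d$. To prove it, set $\mpp:=\ma\ma^\top=\mb\mb^\top$ and take its spectral decomposition $\mpp=\mq_1\mLambda_1\mq_1^\top$, where $\mq_1\in\mathbb{R}^{n\times r}$ has orthonormal columns, $\mLambda_1\in\mathbb{R}^{r\times r}$ is diagonal and positive definite, and $r=\operatorname{rank}\mpp\le d$. Since $\operatorname{col}(\ma)=\operatorname{col}(\mpp)=\operatorname{col}(\mb)=\operatorname{col}(\mq_1)$, the projector $\mq_1\mq_1^\top$ fixes both $\ma$ and $\mb$. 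A direct computation then shows that $\mg:=\ma^\top\mq_1\mLambda_1^{-1/2}$ and $\mh:=\mb^\top\mq_1\mLambda_1^{-1/2}$ each have orthonormal columns and that $\ma=\mq_1\mLambda_1^{1/2}\mg^\top$ and $\mb=\mq_1\mLambda_1^{1/2}\mh^\top$. Completing $\mg$ and $\mh$ to orthogonal matrices $\widetilde\mg=[\mg\ \mg_\perp]\in\mathcal{O}_d$ and $\widetilde\mh=[\mh\ \mh_\perp]\in\mathcal{O}_d$, and using $\mb\mh_\perp=\mq_1\mLambda_1^{1/2}\mh^\top\mh_\perp=\moo$, one obtains $\mb\,\widetilde\mh\widetilde\mg^\top=\mb\mh\mg^\top=\mq_1\mLambda_1^{1/2}\mg^\top=\ma$, so $\mw:=\widetilde\mh\widetilde\mg^\top\in\mathcal{O}_d$ does the job.

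Given the lemma, for each $t\in[T]$ the hypothesis $\mpp^{(i)}_t=\mx^{(i)}_t(\mx^{(i)}_t)^\top=\mx^{(j)}_t(\mx^{(j)}_t)^\top=\mpp^{(j)}_t$ yields some $\mw_t\in\mathcal{O}_d$ with $\mx^{(i)}_t=\mx^{(j)}_t\mw_t$. Then for any $t_1,t_2\in[T]$,
\begin{align*}
\sqrt{n}\,\md^{(i)}_{t_1,t_2}
&=\min_{\mo\in\mathcal{O}_d}\bigl\|\mx^{(j)}_{t_1}\mw_{t_1}\mo-\mx^{(j)}_{t_2}\mw_{t_2}\bigr\|_F\\
&=\min_{\mo\in\mathcal{O}_d}\bigl\|\mx^{(j)}_{t_1}\mw_{t_1}\mo\mw_{t_2}^\top-\mx^{(j)}_{t_2}\bigr\|_F,
\end{align*}
the second equality using $\|\mz\mw_{t_2}^\top\|_F=\|\mz\|_F$ for any $\mz$ since $\mw_{t_2}\in\mathcal{O}_d$. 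Because $\mo\mapsto\mw_{t_1}\mo\mw_{t_2}^\top$ is a bijection of $\mathcal{O}_d$ onto itself, the last minimum equals $\min_{\mo'\in\mathcal{O}_d}\|\mx^{(j)}_{t_1}\mo'-\mx^{(j)}_{t_2}\|_F=\sqrt{n}\,\md^{(j)}_{t_1,t_2}$. Since $t_1,t_2$ were arbitrary, $\md^{(i)}=\md^{(j)}$.

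The main obstacle is the lemma in the rank-deficient case $r<d$: then $\mh\mg^\top$ alone is only a rank-$r$ partial isometry, not an orthogonal matrix, so one must pad with complementary orthonormal blocks $\mg_\perp,\mh_\perp$ and check that this padding leaves the identity $\mb\mw=\ma$ intact — which works exactly because $\mq_1\mLambda_1^{1/2}\mh^\top\mh_\perp=\moo$. If one is willing to assume the latent position matrices have full column rank $d$ for all $t$ (a standard regularity assumption for RDPGs), the lemma is immediate with $\mw_t=\mh\mg^\top$, and the remaining steps are routine manipulations exploiting orthogonal invariance of $\|\cdot\|_F$.
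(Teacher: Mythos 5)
Your proposal is correct and follows essentially the same route as the paper: both deduce from $\mpp^{(i)}_t=\mpp^{(j)}_t$ that $\mx^{(i)}_t$ and $\mx^{(j)}_t$ differ by a $d\times d$ orthogonal factor and then exploit orthogonal invariance of the Procrustes objective, your change-of-variables bijection on $\mathcal{O}_d$ being equivalent to the paper's two-inequality argument via the minimizer. The only difference is that you supply a full proof of the factorization step (including the rank-deficient case), which the paper simply asserts as standard RDPG nonidentifiability; this is added rigor rather than a different approach.
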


As we mentioned in Remark~\ref{rm:noni CMDS}, the mirror is unique only up to an orthogonal transformation. We have Theorem~\ref{thm:same M} to describe the relationship between the mirrors of dynamic networks in the same clusters with the same distance matrix.

\begin{theorem}\label{thm:same M}
	If two dynamic networks $i,j\in[m]$ have the same distance matrix $\md^{(i)}=\md^{(j)}$,  and we further suppose $\lambda_r(\mb^{(i)})>\lambda_{r+1}(\mb^{(i)})$, then their mirror matrices in $\mathbb{R}^r$ are the same up to an $r\times r$ orthogonal matrix, i.e. there exists $\mw_\mm^{(i,j)}\in\mathcal{O}_r$ such that $\mm^{(i)}\mw_\mm^{(i,j)}=\mm^{(j)}$.
\end{theorem}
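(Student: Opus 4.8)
The plan is to exploit the fact that the CMDS embedding is entirely a function of the distance matrix, so the two networks feed the \emph{same} matrix into the construction, leaving only the usual orthogonal ambiguity of an eigendecomposition with (possibly) repeated eigenvalues.

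First I would observe that $\mb^{(i)}=-\frac{1}{2}\mj(\md^{(i)})^{\circ 2}\mj$ depends only on $\md^{(i)}$, so $\md^{(i)}=\md^{(j)}$ forces $\mb^{(i)}=\mb^{(j)}=:\mb$. Consequently the two CMDS embeddings are built from the identical symmetric matrix $\mb$, and the ordered $r$ leading eigenvalues coincide, $\mLambda^{(i)}=\mLambda^{(j)}=:\mLambda=\diag(\lambda_1,\dots,\lambda_r)$ with $\lambda_1\ge\cdots\ge\lambda_r$; the hypothesis $\lambda_r(\mb)>\lambda_{r+1}(\mb)$ guarantees that this top-$r$ eigenvalue multiset, together with the associated $r$-dimensional invariant subspace $\mathcal{S}$ (the direct sum of the eigenspaces of $\lambda_1,\dots,\lambda_r$), is unambiguously defined. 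That $\lambda_r\ge 0$, needed for $\mLambda^{1/2}$ to be real, is implicit in the CMDS construction.

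Next, since $\muu^{(i)}$ and $\muu^{(j)}$ both have orthonormal columns spanning the same subspace $\mathcal{S}$, there is $\mq\in\mathcal{O}_r$ with $\muu^{(j)}=\muu^{(i)}\mq$. The key point — and the step that needs care — is that $\mq$ must commute with $\mLambda$: since each $\muu^{(\cdot)}$ diagonalizes $\mb$ restricted to $\mathcal{S}$ with the same ordered eigenvalues, $(\muu^{(i)})^\top\mb\,\muu^{(i)}=\mLambda=(\muu^{(j)})^\top\mb\,\muu^{(j)}=\mq^\top\mLambda\mq$, whence $\mLambda\mq=\mq\mLambda$. A matrix commuting with the diagonal matrix $\mLambda$ is block-diagonal with respect to the distinct values of $\mLambda$, and on each such block $\mLambda^{1/2}$ is a scalar multiple of the identity; therefore $\mq$ also commutes with $\mLambda^{1/2}$.

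Finally I would conclude $\mm^{(j)}=\muu^{(j)}(\mLambda^{(j)})^{1/2}=\muu^{(i)}\mq\,\mLambda^{1/2}=\muu^{(i)}\mLambda^{1/2}\mq=\mm^{(i)}\mq$, so $\mw_\mm^{(i,j)}:=\mq\in\mathcal{O}_r$ is the desired orthogonal matrix. I expect the only genuine obstacle to be the commutation argument just described: without it one is tempted to write $\mw=\mLambda^{-1/2}\mq\,\mLambda^{1/2}$, which is not orthogonal in general, and it is precisely the fact that $\mq$ intertwines two \emph{eigenbases} of $\mb$ with the same eigenvalue ordering — rather than an arbitrary change of orthonormal basis of $\mathcal{S}$ — that makes $\mw=\mq$ work. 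Everything else is bookkeeping about uniqueness of the top-$r$ eigenspace under the stated spectral gap.
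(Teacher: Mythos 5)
Your proposal is correct, and its core coincides with the paper's: both arguments hinge on the spectral gap $\lambda_r(\mb^{(i)})>\lambda_{r+1}(\mb^{(i)})$ forcing the top-$r$ invariant subspace of $\mb^{(i)}=\mb^{(j)}$ to be the direct sum of full eigenspaces, hence uniquely determined. Where you diverge is in the finish. The paper stops at the subspace level: equality of the orthogonal projections gives $\muu^{(i)}\mLambda\muu^{(i)\top}=\tilde\muu^{(i)}\mLambda\tilde\muu^{(i)\top}$, hence $\mm^{(i)}\mm^{(i)\top}=\mm^{(j)}\mm^{(j)\top}$, and then it invokes the standard fact that two $T\times r$ matrices with equal outer products $\mm\mm^\top$ differ by a right orthogonal factor. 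You instead construct the orthogonal matrix explicitly: writing $\muu^{(j)}=\muu^{(i)}\mq$ with $\mq=(\muu^{(i)})^\top\muu^{(j)}\in\mathcal{O}_r$ and using that both bases diagonalize $\mb$ on that subspace with the same ordered $\mLambda$, you get $\mq^\top\mLambda\mq=\mLambda$, so $\mq$ commutes with $\mLambda$ and hence with $\mLambda^{1/2}$, yielding $\mm^{(j)}=\mm^{(i)}\mq$ directly. Your route buys an explicit $\mw_\mm^{(i,j)}=\mq$ and makes transparent exactly why a naive non-orthogonal intertwiner such as $\mLambda^{-1/2}\mq\mLambda^{1/2}$ is not needed; the paper's route is shorter because it offloads the last step to the generic Gram-equality fact. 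Both correctly implicitly assume $\lambda_r\geq 0$ so that $\mLambda^{1/2}$ is real, as is built into the CMDS construction.
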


Theorem~\ref{thm:same M} shows that in the same cluster, the dynamic networks have the same mirror up to orthogonal matrices. 
On the other hand, since CMDS, as a dimensionality reduction method, inevitably compresses some signals, it is theoretically possible to have the same mirror from different distance matrices. We are more interested in the situation 
where mirrors can differentiate between the evolution patterns corresponding to different clusters.
We suppose there exists a  distinct parameter $c>0$ such that for any two dynamic networks $i,j\in[m]$ with $Y_i\neq Y_j$, we have 
\begin{equation}\label{eq:define c}
	\frac{1}{\sqrt{T}}\min_{\mo\in\mathcal{O}_r}\|\mm^{(i)}\mo-\mm^{(j)}\|_F\geq c.
\end{equation}

\section{Algorithm}\label{sec:alg}

Our algorithm for clustering the evolution patterns of dynamic networks is based on the dissimilarities between mirrors.
To measure the dissimilarity of mirrors for these $m$ dynamic networks,  for any pair $i,j\in[m]$, we define the distance between two mirrors $\mm^{(i)}$ and $\mm^{(j)}$ as
$$\md^{\star}_{i,j}=\frac{1}{\sqrt{T}}\min_{\mo\in\mathcal{O}_r}\|\mm^{(i)}\mo-\mm^{(j)}\|_F,$$
and then obtain the pairwise mirror distance matrix $\md^\star\in\mathbb{R}^{m\times m}$.
By Theorem~\ref{thm:same M} and Eq.~\eqref{eq:define c}, $\md^\star$ exhibits a distinct block structure corresponding to the cluster assignments, i.e.,
\begin{equation}\label{eq:def Dstar}
	\md^{\star}_{i,j}
	\left\{
	\begin{aligned}
	&=0 \text{ if }Y_i=Y_j,\\
	&\geq c \text{ if }Y_i\neq Y_j
	\end{aligned}
	\right.
\end{equation}
for some $c>0$.

From the observed dynamic adjacency matrices $\{\ma^{(i)}_t\}_{i\in[m],t\in[T]}$, we can get an estimate of $\md^{\star}$, and this estimate $\hat\md^{\star}$ also has the similar block structure. Then the hierarchical clustering algorithm \cite{nielsen2016hierarchical,ward1963hierarchical} can be applied to the distance matrix $\hat\md^\star$ to restore the clusters of dynamic networks.
To obtain the estimated pairwise mirror distance matrix $\hat\md^{\star}$ from observed $\{\ma^{(i)}_t\}_{i\in[m],t\in[T]}$, we first compute the consistent estimates for the underlying unknown latent position matrices $\{\mx^{(i)}_t\}_{i\in[m],t\in[T]}$ by the rank-$d$ eigendecomposition of the adjacency matrices \cite{tang2018limit,sussman2012consistent}. More specifically, each $\hat\mx^{(i)}_t$ is the adjacency spectral embedding obtained from its respective adjacency matrix $\ma^{(i)}_t$.

\begin{definition}[Adjacency spectral embedding]
Given an observed adjacency matrix $\ma\in\mathbb{R}^{n\times n}$, we define the adjacency spectral embedding with dimension $d$ as $\hat\mx=\hat\muu|\hat\mLambda|^{1/2}$, where $\hat\mLambda\in\mathbb{R}^{d\times d}$ is the diagonal matrix containing the $d$ largest eigenvalues of $\ma$ and the columns of $\hat\muu\in\mathbb{R}^{n\times d}$ are the corresponding eigenvectors. $\hat\mx$ is the estimated latent position matrix.
\end{definition}

Based on the estimated latent position matrices $\{\hat\mx^{(i)}_t\}_{i\in[m],t\in[T]}$, with the similar procedure to have $\md^\star$ based on $\{\mx^{(i)}_t\}_{i\in[m],t\in[T]}$, the estimated pairwise mirror distance matrix $\hat\md^{\star}$ is obtained. See Algorithm~\ref{Alg_clustering} for details.

\begin{algorithm}[htbp]
\caption{Dynamic Network Clustering through Mirror Distance (DNCMD)}	
\label{Alg_clustering}
\begin{algorithmic}
\REQUIRE $m$ dynamic networks for $T$ time points $\{\ma^{(i)}_t\in\mathbb{R}^{n\times n}\}_{i\in[m],t\in[T]}$; an embedding dimension $d$ for networks; a dimension $r$ for mirror/low-rank configuration for the distance matrices; a cluster number $K$.

\textbf{Step 1: Constructing the mirror for each dynamic network $i\in[m]$}
\FOR{each dynamic network $i\in[m]$}
    \FOR{each time point $t\in[T]$}
        \STATE Compute $\hat\mx^{(i)}_t=\hat\muu^{(i)}_t|\hat\mLambda^{(i)}_t|^{1/2}\in\mathbb{R}^{n\times d}$ via eigendecomposition of $\ma^{(i)}_t$
    \ENDFOR
    \STATE Construct distance matrix $\hat\md^{(i)}\in\mathbb{R}^{T\times T}$ with entries $\hat\md^{(i)}_{t_1,t_2}=\frac{1}{\sqrt{n}}\min_{\mo\in\mathcal{O}_d}\|\hat\mx^{(i)}_{t_1}\mo-\hat\mx^{(i)}_{t_2}\|_F$
    \STATE Apply CMDS to compute mirror matrix $\hat\mm^{(i)}\in\mathbb{R}^{T\times r}$ from $\hat\md^{(i)}$
\ENDFOR

 \textbf{Step 2: Clustering dynamic networks}
\STATE Construct pairwise mirror distance matrix $\hat\md^\star\in\mathbb{R}^{m\times m}$ with $\hat\md^\star_{i,j}=\frac{1}{\sqrt{T}}\min_{\mo\in\mathcal{O}_r}\|\hat\mm^{(i)}\mo-\hat\mm^{(j)}\|_F$
\STATE Apply hierarchical clustering to $\hat\md^\star$ with $K$ clusters to obtain $\{\hat Y_i\}_{i\in[m]}$

\ENSURE The estimated cluster labels $\{\hat Y_i\}_{i\in[m]}$.
\end{algorithmic}
\end{algorithm}

In practice, we can choose the dimensions $d$ and $r$ by looking at the eigenvalues of $\{\ma^{(i)}_t\}$ and $\{\md^{(i)}\}$. A ubiquitous and principled method is to examine the so-called scree plot and look for ``elbow" and ``knees" defining the cut-off between the top (signal) $d$ or $r$ dimensions and the noise dimensions. 
\cite{zhu2006automatic} provides an automatic dimensionality selection procedure to look for the ``elbow'' by maximizing a profile likelihood function.
 \cite{han2023universal} suggests another universal approach to rank inference via residual subsampling for estimating the rank.
 We can also the determine by eigenvalue ratio test \cite{ahn2013eigenvalue} or by empirical distribution of eigenvalues \cite{onatski2010determining}. 
Note that the hierarchical clustering algorithm does not require a predefined number of clusters and can provide a dendrogram, a tree-like chart that illustrates the sequences of merges or splits of clusters. When two clusters are merged, the dendrogram connects them with a line, and the height of the connection represents the distance between those clusters. The optimal number of clusters $K$ can be chosen based on hierarchical structure of the dendrogram, and then be used to obtain clustering result $\{\hat Y_i\}$.

Note that Algorithm~\ref{Alg_clustering} is inherently a distributed algorithm and can benefit from the advantages of distributed algorithms \cite{fan2019distributed,hector2021distributed,chen2022distributed}.
The network $\{\ma^{(i)}_t\}$ can be stored on separate local machines.
Each local node can perform Step~1 in Algorithm~\ref{Alg_clustering} to obtain the $n\times d$ estimated latent position matrix $\hat\mx^{(i)}_t$ and send it to a central node. The central node collects $\{\hat\mx^{(i)}\}$ and then proceeds with the subsequent steps.
With this distributed procedure, the central node does not need the capability to store enormous amounts of data, which is especially important considering that real-world networks may contain a vast number of vertices.
And between nodes, only relatively small-scale matrices need to be transferred instead of the raw data, so that the communication cost has been significantly reduced, and the privacy of the raw information has been protected.
\section{Theoretical Results}
\label{sec:thm}

We now investigate the theoretical properties of our proposed clustering algorithm for $m$ dynamic networks $\{G^{(i)}\}_{i \in [m], t \in [T]}$, each with $n$ common vertices over $T$ time points and cluster structures as described in Section~\ref{sec:model}.
Our main result, presented in Theorem~\ref{thm:Dstar error}, establishes the error rate of the estimated pairwise mirror distance matrix $\hat\md^\star$, demonstrating that $\hat\md^\star$ closely resembles $\md^\star$ and thus also captures the block structure corresponding to the clustering assignments, thereby validating the effectiveness of DNCMD. Theorem~\ref{thm:Dstar error} is derived based on the estimation error results for the estimated distance matrix $\hat\md^{(i)}$ and the corresponding estimated mirror matrix $\hat\mm^{(i)}$ in Theorem~\ref{thm:D error} and Theorem~\ref{thm:M error} for each dynamic network $i \in [m]$.


\begin{theorem}\label{thm:D error}
For a fixed $i\in[m]$, consider a dynamic network $\{G^{(i)}_t\}_{t\in[T]}$ as defined in Section~\ref{sec:model}.
Suppose that 1) $n\rho_n \gg \log n$, where $\rho_n\in(0,1]$ is a sparsity factor  depending on $n$ such that for each $i\in[m]$ and each $t\in[T]$, $\|\mpp^{(i)}_t\|=\Theta(n\rho_n)$;
2) $\{\mpp^{(i)}_t\}$ have bounded condition number, i.e. there exists a finite constant $M>0$ such that
		$\max_{i\in[m],t\in[T]}\frac{\lambda_1(\mpp^{(i)}_t)}{\lambda_d(\mpp^{(i)}_t)}\leq M,$
		where $\lambda_1(\mpp^{(i)}_t)$ (resp. $\lambda_d(\mpp^{(i)}_t)$) denotes the largest (resp. smallest) eigenvalue of $\mpp^{(i)}_t$.
We then have
	$$\|\hat\md^{(i)}-\md^{(i)}\|_{\max}\lesssim  n^{-1/2}$$
	with high probability.
\end{theorem}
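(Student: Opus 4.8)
The plan is to reduce the bound to two largely independent facts: a deterministic stability property of the orthogonal Procrustes functional, and the standard consistency of the adjacency spectral embedding. Write $g(\mz_1,\mz_2):=\min_{\mo\in\mathcal{O}_d}\|\mz_1\mo-\mz_2\|_F$ for $\mz_1,\mz_2\in\mathbb{R}^{n\times d}$, so that $\md^{(i)}_{t_1,t_2}=n^{-1/2}g(\mx^{(i)}_{t_1},\mx^{(i)}_{t_2})$ and $\hat\md^{(i)}_{t_1,t_2}=n^{-1/2}g(\hat\mx^{(i)}_{t_1},\hat\mx^{(i)}_{t_2})$. It then suffices to control $g(\hat\mx^{(i)}_{t_1},\hat\mx^{(i)}_{t_2})-g(\mx^{(i)}_{t_1},\mx^{(i)}_{t_2})$ for each pair and union over $t_1,t_2\in[T]$.

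First I would record two elementary properties of $g$. \emph{Orthogonal invariance}: $g(\mz_1\mo_1,\mz_2\mo_2)=g(\mz_1,\mz_2)$ for all $\mo_1,\mo_2\in\mathcal{O}_d$, by substituting $\mo=\mo_1^\top\mo'\mo_2$ in the minimization. \emph{Lipschitz stability}: $|g(\mz_1,\mz_2)-g(\mz_3,\mz_4)|\le\|\mz_1-\mz_3\|_F+\|\mz_2-\mz_4\|_F$, by plugging a minimizer of $g(\mz_3,\mz_4)$ into $g(\mz_1,\mz_2)$, using the triangle inequality and $\|(\mz_1-\mz_3)\mo\|_F=\|\mz_1-\mz_3\|_F$. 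Combining the two, for any $\mo_1,\mo_2\in\mathcal{O}_d$,
$$\bigl|\hat\md^{(i)}_{t_1,t_2}-\md^{(i)}_{t_1,t_2}\bigr|\le\frac{1}{\sqrt n}\Bigl(\|\hat\mx^{(i)}_{t_1}-\mx^{(i)}_{t_1}\mo_1\|_F+\|\hat\mx^{(i)}_{t_2}-\mx^{(i)}_{t_2}\mo_2\|_F\Bigr),$$
since $\md^{(i)}_{t_1,t_2}=n^{-1/2}g(\mx^{(i)}_{t_1}\mo_1,\mx^{(i)}_{t_2}\mo_2)$.

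Next I would invoke the standard consistency of the adjacency spectral embedding (cf. \cite{sussman2012consistent,tang2018limit,athreya2018statistical}): for each $t\in[T]$ there is $\mw^{(i)}_t\in\mathcal{O}_d$ with $\|\hat\mx^{(i)}_t-\mx^{(i)}_t\mw^{(i)}_t\|_F=O_p(1)$. This is where the hypotheses are used. The assumption $n\rho_n\gg\log n$ gives the concentration $\|\ma^{(i)}_t-\mpp^{(i)}_t\|=O_p(\sqrt{n\rho_n})$; since $\mpp^{(i)}_t$ has rank $d$ and, by $\|\mpp^{(i)}_t\|=\Theta(n\rho_n)$ together with bounded condition number, $\lambda_d(\mpp^{(i)}_t)=\Theta(n\rho_n)$, Weyl's inequality shows the top $d$ eigenvalues of $\ma^{(i)}_t$ are of order $n\rho_n$ and positive while the rest are $O_p(\sqrt{n\rho_n})$, so the relevant eigengap is $\Theta(n\rho_n)$. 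A Procrustes-type perturbation bound for the square-root embedding then yields
$$\min_{\mo\in\mathcal{O}_d}\|\hat\mx^{(i)}_t-\mx^{(i)}_t\mo\|_F\lesssim\frac{\sqrt d\,\|\ma^{(i)}_t-\mpp^{(i)}_t\|}{\sqrt{\lambda_d(\mpp^{(i)}_t)}}=O_p(\sqrt d)=O_p(1),$$
the cancellation $\sqrt{n\rho_n}/\sqrt{n\rho_n}=1$ being the crucial point. Taking $\mw^{(i)}_t$ to be a minimizer and substituting into the display above gives $|\hat\md^{(i)}_{t_1,t_2}-\md^{(i)}_{t_1,t_2}|=O_p(n^{-1/2})$ for each fixed pair; since the embedding bound holds with probability at least $1-n^{-c}$ for every $c>0$, a union bound over the at most $T^2$ pairs (the number of time points being fixed) upgrades this to $\|\hat\md^{(i)}-\md^{(i)}\|_{\max}\lesssim n^{-1/2}$ with high probability.

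The main obstacle is confined to the embedding step, and specifically to obtaining a Frobenius-norm error that is bounded by an absolute constant rather than carrying an extra logarithmic factor: a bound of size $\log n/\sqrt n$ would not be enough for the stated $\lesssim n^{-1/2}$. That the error is $O_p(1)$ is precisely what $\|\mpp^{(i)}_t\|=\Theta(n\rho_n)$ and the bounded condition number buy — they force the eigengap to be $\Theta(n\rho_n)$, exactly matching the $\sqrt{n\rho_n}$ from the concentration of $\ma^{(i)}_t-\mpp^{(i)}_t$. The remaining ingredients (the invariance and Lipschitz properties of $g$ and the union bound) are routine.
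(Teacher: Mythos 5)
Your proposal is correct, and its core reduction is genuinely different from the paper's. The paper fixes the Procrustes minimizers $\mw,\hat\mw$ for the two pairs $(\mx_{t_1},\mx_{t_2})$ and $(\hat\mx_{t_1},\hat\mx_{t_2})$, aligns each $\hat\mx_t$ to $\mx_t$, and is then forced to control a third term $\|\mx_{t_1}\|_F\cdot\|\mw-\mw^{(t_1)\top}\hat\mw\mw^{(t_2)}\|$; this requires a perturbation bound for polar decompositions (Li's theorem) together with the lower bound $\sigma_{\min}(\mx_{t_1}^\top\mx_{t_2})\asymp n\rho_n$, which is where the bounded-condition-number hypothesis enters a second time. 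Your route instead observes that the functional $g(\mz_1,\mz_2)=\min_{\mo}\|\mz_1\mo-\mz_2\|_F$ is invariant under right orthogonal transformations of each argument and $1$-Lipschitz in each argument in Frobenius norm, which collapses the estimate to the two ASE error terms $\|\hat\mx_{t}-\mx_{t}\mo_t\|_F$ and eliminates the orthogonal-alignment perturbation step entirely; the condition-number assumption is then used only through the ASE consistency bound (eigengap $\asymp n\rho_n$), exactly as in the paper's citation of Theorem~2.1 of \cite{tang2018limit}, so your sketch of that step via concentration of $\ma^{(i)}_t-\mpp^{(i)}_t$ plus a square-root-embedding perturbation bound is at the same level of rigor as the paper's citation. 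What your approach buys is a shorter and more transparent deterministic reduction (and it makes clear the second use of the condition-number assumption in the paper is dispensable); what the paper's approach buys is an explicit handle on the alignment matrices themselves, which it reuses verbatim in the proof of the mirror-distance bound (Eq.~(A.12)--(A.13) analogues in the proof of \cref{thm:Dstar error}). Two minor remarks: your parenthetical that $T$ is fixed is unnecessary --- the union bound is over the $T$ single-time embedding events and, with the paper's definition of ``with high probability,'' tolerates $T$ growing polynomially in $n$; and the pairwise bound follows deterministically from those $T$ events, so no separate union over $T^2$ pairs is needed.
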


Condition~(1) in the statement of Theorem~\ref{thm:D error} is on the sparsity of the graph. We can interpret $n\rho_n$ as the growth rate for the average degrees of the graphs $\{\ma^{(i)}_t\}$ generated from $\{\mpp^{(i)}_t\}$, and Condition~(1) requires that the graphs are not too sparse. 
Condition (2) pertains to the homogeneity of \(\mpp^{(i)}\) and, consequently, also to the homogeneity of the latent positions. This condition can be relaxed; the bound $M$ does not necessarily have to be constant and can depend on $n$. In this case, the error bound for $\hat{\md}^{(i)}$ will involve $M$.
We further establish results for the estimated mirror matrix $\hat\mm^{(i)}$, based on Theorem~\ref{thm:D error}.

\begin{theorem}\label{thm:M error}
Consider the setting of Theorem~\ref{thm:D error} and further suppose that $\lambda_r(\mb^{(i)})=\omega(Tn^{-1}(n\rho_n)^{1/2})$ and $\lambda_{r+1}(\mb^{(i)})=O(Tn^{-1}(n\rho_n)^{1/2})$. We then have
\begin{align*}
     \frac{1}{\sqrt{T}}\min_{\mo\in\mathcal{O}_r}\|\hat\mm^{(i)}\mo-\mm^{(i)}\|_F\lesssim
 \frac{T^{1/2}(n\rho_n)^{1/2}\lambda_1^{1/2}(\mb^{(i)})}{n \lambda_r(\mb^{(i)})}\Big(1+\frac{T(n\rho_n)^{1/2}\lambda_1^{1/2}(\mb^{(i)})}{n \lambda_r^{3/2}(\mb^{(i)})}\Big)
\end{align*}
 with high probability.
 If we further assume $\mb^{(i)}$ has bounded condition number, i.e. there exists a finite constant $M'>0$ such that $\frac{\lambda_1(\mb^{(i)})}{\lambda_r(\mb^{(i)})}\leq M'$, we then have
\begin{equation}\label{eq:MO-M}
	 \frac{1}{\sqrt{T}}\min_{\mo\in\mathcal{O}_r}\|\hat\mm^{(i)}\mo-\mm^{(i)}\|_F\lesssim
 \frac{T^{1/2}(n\rho_n)^{1/2}}{n \lambda_r^{1/2}(\mb^{(i)})}
\end{equation}
 with high probability.
\end{theorem}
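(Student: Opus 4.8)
The plan is to turn the entrywise bound on $\hat\md^{(i)}$ from \cref{thm:D error} into a spectral-norm perturbation bound for the double-centered matrix $\mb^{(i)}$, and then carry out a standard CMDS perturbation analysis (Weyl, Davis--Kahan, and a matrix square-root bound) on the pair $(\mb^{(i)},\hat\mb^{(i)})$. Throughout I fix $i$ and suppress the superscript $(i)$, write $\me:=\hat\mb-\mb=-\tfrac12\mj(\hat\md^{\circ 2}-\md^{\circ 2})\mj$, and set $\eta:=Tn^{-1}(n\rho_n)^{1/2}$. First I would bound the true distances: since $\md_{t_1,t_2}\le n^{-1/2}(\|\mx_{t_1}\|_F+\|\mx_{t_2}\|_F)$ and $\|\mx_t\|_F^2=\mathrm{tr}(\mpp_t)\le d\,\|\mpp_t\|\asymp n\rho_n$ by Condition~(2) together with $\|\mpp_t\|=\Theta(n\rho_n)$, we get $\|\md\|_{\max}\lesssim\rho_n^{1/2}$, and hence also $\|\hat\md\|_{\max}\lesssim\rho_n^{1/2}$ with high probability (note $n^{-1/2}\lesssim\rho_n^{1/2}$ since $n\rho_n\gg\log n$). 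Then $\|\hat\md^{\circ 2}-\md^{\circ 2}\|_{\max}\le\|\hat\md-\md\|_{\max}\bigl(\|\hat\md\|_{\max}+\|\md\|_{\max}\bigr)\lesssim n^{-1/2}\rho_n^{1/2}$, and since $\|\mj\|=1$ and $\|M\|_F\le T\|M\|_{\max}$ for $M\in\mathbb{R}^{T\times T}$, this gives $\|\me\|\le\|\me\|_F\lesssim Tn^{-1/2}\rho_n^{1/2}=\eta$ with high probability.

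Next I would record the spectral consequences. Weyl's inequality gives $|\lambda_k(\hat\mb)-\lambda_k(\mb)|\le\|\me\|\le\eta$ for every $k$; combined with the hypotheses $\lambda_r(\mb)=\omega(\eta)$ and $\lambda_{r+1}(\mb)=O(\eta)$ this yields $\lambda_r(\hat\mb)\asymp\lambda_r(\mb)>0$ --- so the leading $r\times r$ blocks $\mLambda,\hat\mLambda$ are positive definite, even though $\mb$ itself need not be positive semidefinite --- and an eigengap $\lambda_r(\mb)-\lambda_{r+1}(\mb)\asymp\lambda_r(\mb)$. Davis--Kahan then gives, with $\mw$ taken to be the orthogonal polar factor of $\hat\muu^\top\muu$, the bound $\|\hat\muu\mw-\muu\|_F\lesssim\zeta$ where $\zeta:=\eta/\lambda_r(\mb)$, together with the standard second-order refinement $\|\mathbf{W}_1\|\lesssim\zeta^2$ for $\mathbf{W}_1:=\hat\muu^\top\muu-\mw$.

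The core of the proof is to pass from eigenvectors and eigenvalues to the mirror $\mm=\muu\mLambda^{1/2}$, $\hat\mm=\hat\muu\hat\mLambda^{1/2}$, for which I would use the decomposition
\[
\hat\mm\mw-\mm=(\hat\muu\mw-\muu)\mLambda^{1/2}+\hat\muu\bigl(\hat\mLambda^{1/2}\mw-\mw\mLambda^{1/2}\bigr).
\]
The first term has Frobenius norm at most $\|\hat\muu\mw-\muu\|_F\,\|\mLambda^{1/2}\|\lesssim\zeta\,\lambda_1^{1/2}(\mb)$. For the second, using $\hat\mb\hat\muu=\hat\muu\hat\mLambda$, $\mb\muu=\muu\mLambda$, $\hat\mb=\mb+\me$, $\hat\muu^\top\muu=\mw+\mathbf{W}_1$, and $\hat\muu^\top(\hat\muu\mw-\muu)=-\mathbf{W}_1$, one obtains the exact identity $\hat\mLambda\mw-\mw\mLambda=\hat\muu^\top\me\muu+\mathbf{W}_1\mLambda-\hat\mLambda\mathbf{W}_1$, whence $\|\hat\mLambda\mw-\mw\mLambda\|_F\lesssim\eta+\lambda_1(\mb)\zeta^2$; the matrix square-root bound $\|\hat\mLambda^{1/2}\mw-\mw\mLambda^{1/2}\|_F\le\bigl(\lambda_r^{1/2}(\hat\mb)+\lambda_r^{1/2}(\mb)\bigr)^{-1}\|\hat\mLambda\mw-\mw\mLambda\|_F$ (from $\hat\mLambda\mw-\mw\mLambda=\hat\mLambda^{1/2}(\hat\mLambda^{1/2}\mw-\mw\mLambda^{1/2})+(\hat\mLambda^{1/2}\mw-\mw\mLambda^{1/2})\mLambda^{1/2}$ and the lower bound on the associated Sylvester operator) then gives $\|\hat\mLambda^{1/2}\mw-\mw\mLambda^{1/2}\|_F\lesssim\eta\lambda_r^{-1/2}(\mb)+\lambda_1(\mb)\zeta^2\lambda_r^{-1/2}(\mb)$. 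Adding the two contributions, absorbing $\eta\lambda_r^{-1/2}(\mb)$ into $\zeta\,\lambda_1^{1/2}(\mb)$ (valid since $\lambda_r(\mb)\le\lambda_1(\mb)$), using $\min_{\mo\in\mathcal{O}_r}\|\hat\mm\mo-\mm\|_F\le\|\hat\mm\mw-\mm\|_F$, dividing by $\sqrt T$, and substituting $\zeta=\eta/\lambda_r(\mb)$ with $\eta=Tn^{-1}(n\rho_n)^{1/2}$ reproduces the stated bound; the bounded-condition-number case $\lambda_1(\mb)\asymp\lambda_r(\mb)$ makes the parenthetical factor $1+O(\eta/\lambda_r(\mb))=1+o(1)$ and yields \cref{eq:MO-M}.

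I expect the third step to be the main obstacle: carrying a single orthogonal alignment $\mw$ consistently through the eigenvector, eigenvalue, and square-root pieces, and separating first-order ($\eta$) from second-order ($\eta^2$) contributions so that the $\eta^2$ term lands exactly in the higher-order parenthetical factor rather than in the leading term. A related subtlety --- since CMDS does not make $\mb$ positive semidefinite --- is to invoke positivity only for the top $r$ eigenvalues of $\mb$ (and of $\hat\mb$), which is precisely what the matrix square-root bound requires and which is guaranteed by the hypothesis $\lambda_r(\mb)=\omega(\eta)>0$ via Weyl.
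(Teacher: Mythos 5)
Your proposal is correct and follows essentially the same route as the paper's proof: the entrywise bound from \cref{thm:D error} is upgraded to a spectral-norm bound on $\hat\mb^{(i)}-\mb^{(i)}$ of order $Tn^{-1}(n\rho_n)^{1/2}$, and then Weyl, a $\sin\Theta$ bound with the polar alignment (including its second-order refinement of the misalignment), and the Hadamard/Sylvester square-root inequality are combined through the same splitting of $\hat\mm\mw-\mm$ into an eigenvector piece and a $\hat\mLambda^{1/2}\mw-\mw\mLambda^{1/2}$ piece. The differences are cosmetic only (which factor carries $\mLambda^{1/2}$ versus $\hat\mLambda^{1/2}$ in the decomposition, and passing from $\|\cdot\|_{\max}$ to the spectral norm via the Frobenius norm rather than via $\|\mj\|_1\,\|\cdot\|_{\max}\,\|\mj\|_\infty$ and a factor $T$).
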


The condition about the eigenvalues of the doubly centered distance matrix $\mb^{(i)}$ in Theorem~\ref{thm:M error} implies that its largest $r$ eigenvalues and their corresponding eigenvectors contain the primary signal. This further indicates that the $r$-dimensional configuration of these latent position matrices preserves the primary signal.

We then analyze the error of the estimated pairwise mirror distance matrix $\hat\md^\star$ in Theorem~\ref{thm:Dstar error} and the resulting clustering performance based on its structure.
Theorem~\ref{thm:Dstar error} shows that the clustering result of DNCMD achieves exact recovery and provides a lower bound for the distinct parameter $c$. 
Here we use the terminology exact recovery when the partition is recovered correctly with high probability.

\begin{theorem}\label{thm:Dstar error}
	Consider the setting of Theorem~\ref{thm:M error} for all $i\in[m]$. We suppose that there exists a parameter $\lambda$ such that $\lambda_r(\mb^{(i)})\asymp \lambda$ for all $i\in[m]$.
	We then have
	$$
	\|\hat\md^{\star}-\md^{\star}\|_{\max}
	\lesssim \frac{T^{1/2}(n\rho_n)^{1/2}}{n \lambda^{1/2}}
	$$
	with high probability.
	We further suppose that the  distinct parameter $c>0$ defined in Eq.~\eqref{eq:define c}, which distinguishes the mirrors for different clusters, satisfies $c=\omega(\frac{T^{1/2}(n\rho_n)^{1/2}}{n \lambda^{1/2}})$. We then have
	\begin{equation}\label{eq:clear distance}
		\hat\md^{\star}_{i,j}
	\left\{
	\begin{aligned}
	&< c/2 \text{ if }Y_i=Y_j,\\
	&> c/2 \text{ if }Y_i\neq Y_j
	\end{aligned}
	\right.
\end{equation}
with high probability, and hierarchical clustering for $\hat\md^\star$ achieves exact recovery. 
\end{theorem}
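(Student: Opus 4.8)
The plan is to build the error bound for $\hat\md^\star$ directly from the per-network mirror estimation error in \cref{thm:M error}, then translate the resulting uniform control into the clustering conclusion via an elementary property of hierarchical clustering with well-separated blocks. First I would fix a pair $i,j\in[m]$ and compare $\hat\md^\star_{i,j}=\frac{1}{\sqrt T}\min_{\mo\in\mathcal O_r}\|\hat\mm^{(i)}\mo-\hat\mm^{(j)}\|_F$ with $\md^\star_{i,j}=\frac{1}{\sqrt T}\min_{\mo\in\mathcal O_r}\|\mm^{(i)}\mo-\mm^{(j)}\|_F$. The key observation is that the map $\mathbf N\mapsto \frac{1}{\sqrt T}\min_{\mo\in\mathcal O_r}\|\mathbf N\mo-\cdot\|_F$ is $1$-Lipschitz in each argument with respect to the normalized Frobenius norm: using the reverse triangle inequality together with orthogonal invariance of $\|\cdot\|_F$, one gets
\[
\bigl|\hat\md^\star_{i,j}-\md^\star_{i,j}\bigr|
\le \frac{1}{\sqrt T}\min_{\mo\in\mathcal O_r}\|\hat\mm^{(i)}\mo-\mm^{(i)}\|_F
+\frac{1}{\sqrt T}\min_{\mo\in\mathcal O_r}\|\hat\mm^{(j)}\mo-\mm^{(j)}\|_F .
\]
Each term on the right is bounded by \cref{thm:M error}, and since $\lambda_r(\mb^{(i)})\asymp\lambda$ uniformly in $i$, each is $\lesssim T^{1/2}(n\rho_n)^{1/2}/(n\lambda^{1/2})$ with high probability. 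Taking a union bound over the $O(m^2)$ pairs (with $m$ fixed, or more generally polynomial in $n$, so the high-probability statement is preserved) yields $\|\hat\md^\star-\md^\star\|_{\max}\lesssim T^{1/2}(n\rho_n)^{1/2}/(n\lambda^{1/2})$.

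Next I would derive Eq.~\eqref{eq:clear distance}. By Eq.~\eqref{eq:def Dstar}, $\md^\star_{i,j}=0$ when $Y_i=Y_j$ and $\md^\star_{i,j}\ge c$ when $Y_i\ne Y_j$. Writing $\varepsilon_n$ for the uniform error bound just established, the assumption $c=\omega(\varepsilon_n)$ means $\varepsilon_n < c/2$ for $n$ large; hence on the high-probability event, $\hat\md^\star_{i,j}\le \md^\star_{i,j}+\varepsilon_n < c/2$ whenever $Y_i=Y_j$, and $\hat\md^\star_{i,j}\ge \md^\star_{i,j}-\varepsilon_n \ge c-c/2 = c/2$ (in fact strictly above, again for $n$ large) whenever $Y_i\ne Y_j$.

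Finally I would argue that this separation forces hierarchical clustering to recover the partition exactly. The clean way is to note that single-linkage (or any linkage whose merge heights lie between the minimum and maximum pairwise within-merge distances) applied to a distance matrix in which every within-cluster entry is $<c/2$ and every between-cluster entry is $>c/2$ produces a dendrogram in which all within-cluster merges occur at heights $<c/2$ and the first between-cluster merge occurs at height $>c/2$; cutting the dendrogram to obtain $K$ clusters therefore returns exactly the true blocks $\{Y_i\}$. For general linkage (e.g.\ Ward), the same conclusion holds because the linkage of any two subsets of distinct true clusters is a convex-combination-type average of their between-entries, hence $>c/2$, while the linkage within a true cluster stays $<c/2$; I would state this as a short lemma and verify it for the linkage used in \cref{Alg_clustering}. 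The main obstacle I anticipate is not the analytic bound — that is a routine application of \cref{thm:M error} plus Lipschitz continuity of the Procrustes distance — but rather making the hierarchical-clustering step fully rigorous for the specific linkage, since "well-separated blocks imply correct recovery" is intuitively clear but needs a careful invariant (monotonicity of merge heights relative to the $c/2$ threshold) to nail down; I would isolate that as a self-contained combinatorial lemma about linkage functions and prove it once.
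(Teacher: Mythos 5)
Your proposal is correct and reaches the stated rate, but by a genuinely different (and leaner) route than the paper for the key bound. The paper mimics its proof of \cref{thm:D error}: it fixes aligning matrices $\mw_\mm^{(i)},\mw_\mm^{(j)}$ from \cref{thm:M error}, decomposes $|\hat\md^\star_{i,j}-\md^\star_{i,j}|$ into three terms as in Eq.~\eqref{eq:D-D=...star} --- the two mirror estimation errors plus a Procrustes-alignment mismatch $\|\mw_\mm^{(i,j)}-\mw_{\mm}^{(i)\top}\hat\mw_\mm^{(i,j)}\mw_\mm^{(j)}\|$ --- and controls the third term via a polar-decomposition perturbation bound, Eq.~\eqref{eq:W-WWWstar}, which carries extra implicit requirements (a condition of the form $T(n\rho_n)^{1/2}/(n\lambda)=O(1)$ and a lower bound $\sigma_{r-1}(\mm^{(i)\top}\mm^{(j)})+\sigma_{r}(\mm^{(i)\top}\mm^{(j)})\gtrsim\lambda_r^{1/2}(\mb^{(i)})\lambda_r^{1/2}(\mb^{(j)})$ on the cross-Gram matrix of the two mirrors). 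You instead note that $d(\mathbf{N}_1,\mathbf{N}_2)=\min_{\mo\in\mathcal{O}_r}\|\mathbf{N}_1\mo-\mathbf{N}_2\|_F$ is a pseudometric (symmetry and the triangle inequality follow from orthogonal invariance of $\|\cdot\|_F$), so two applications of the triangle inequality give $|\hat\md^\star_{i,j}-\md^\star_{i,j}|\le T^{-1/2}d(\hat\mm^{(i)},\mm^{(i)})+T^{-1/2}d(\hat\mm^{(j)},\mm^{(j)})$ directly; combining with the bounded-condition-number form of \cref{thm:M error}, Eq.~\eqref{eq:MO-M} (the same form the paper invokes in Eq.~\eqref{eq:hatMW-M}), and $\lambda_r(\mb^{(i)})\asymp\lambda$ yields the max-norm rate without the alignment-mismatch term or its implicit conditions, so your bound is cleaner and, if anything, holds under slightly weaker hypotheses. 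The separation step ($<c/2$ within, $>c/2$ between when $c=\omega(T^{1/2}(n\rho_n)^{1/2}/(n\lambda^{1/2}))$) matches the paper, which asserts it without further argument. For the final clustering claim the paper offers no proof at all; your single-/complete-/average-linkage argument (while any true block is still split there is an available merge at height $<c/2$, so no between-block merge can be selected before all within-block merges finish, and cutting at $K$ recovers the blocks) is sound, though your parenthetical for Ward is not quite right as stated --- Ward's Lance--Williams update has a negative coefficient, so its merge heights are not convex combinations of the pairwise entries --- hence the self-contained linkage lemma you propose is genuinely needed for that case; since the paper leaves this step implicit as well, this is a refinement rather than a gap.
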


The error bounds in Theorem~\ref{thm:M error} and Theorem~\ref{thm:Dstar error} depend on the eigenvalues of $\{\mb^{(i)}\}$, and the scales of $\{\mb^{(i)}\}$ may increase with $T$ as $\{\mb^{(i)}\}$ are of size $T \times T$.
Actually the relationship $\lambda_r(\mb^{(i)})\asymp T\rho_n\tilde\rho$ can be derived under a general case as described in following Proposition~\ref{prop:M error2}. Recall that $\rho_n$ is the parameter indicating the sparsity of the networks or the magnitude of the latent positions, and here we further use a parameter $\tilde{\rho}$ to describe the magnitude of $\md^{(i)}$ or the change of latent positions across time points. 

Proposition~\ref{prop:M error2} and the subsequent Proposition~\ref{prop:Dstar error2} discuss a general case, providing an intuitive understanding of the conditions under which $\hat\mm^{(i)}$ converges to the true $\mm^{(i)}$, $\hat\md^\star$ converges to the true $\md^\star$, and the requirements for DNCMD to achieve exact recovery.

\begin{proposition}\label{prop:M error2}
	Consider the setting of Theorem~\ref{thm:M error}. Notice $\|\mx_t^{(i)}\|_F^2\asymp n\rho_n$, and we further suppose there exists a factor $\tilde \rho>0$ to describe the magnitude of the change of $\{\mx_t^{(i)}\}_{t\in[T]}$ such that $\min_{\mo\in\mathcal{O}_d}\|\mx^{(i)}_{t_1}\mo-\mx^{(i)}_{t_2}\|_F^2\asymp n\rho_n \tilde \rho$ for $t_1\neq t_2$. We then have $\lambda_r(\mb^{(i)})\asymp T\rho_n\tilde\rho$ and thus
	$$
	\frac{1}{\sqrt{T}}\min_{\mo\in\mathcal{O}_r}\|\hat\mm^{(i)}\mo-\mm^{(i)}\|_F
    \lesssim \frac{1}{(n\tilde\rho)^{1/2}}
	$$
	with high probability. We also have $\frac{1}{\sqrt{T}}\|\mm^{(i)}\|_F\asymp (\rho_n\tilde \rho)^{1/2}$. Then, under the assumptions $n\rho_n\tilde\rho^2=\omega(1)$ and $T=O(n^{\ell})$ for some $\ell>0$, there exists $\mw^{(i)}_\mm\in\mathcal{O}_r$ such that $$\hat\mm^{(i)}\mw^{(i)}_\mm\xrightarrow{w.h.p.} \mm^{(i)}$$
	as $n\rightarrow\infty$, where $\xrightarrow{w.h.p.}$ denotes  convergence with high probability.
\end{proposition}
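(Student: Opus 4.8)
The plan is to route everything through the eigenvalues of the double‑centered distance matrix $\mb^{(i)}$ and then invoke \cref{thm:M error}. Since $\mb^{(i)}$ and $\mm^{(i)}$ are deterministic objects built from the fixed $\{\mx_t^{(i)}\}$, the eigenvalue bookkeeping below carries no probability; randomness enters only when \cref{thm:M error} is applied at the end.

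\textbf{Step 1 (eigenvalue scaling of $\mb^{(i)}$ and size of $\mm^{(i)}$).} The hypothesis $\min_{\mo\in\mathcal{O}_d}\|\mx_{t_1}^{(i)}\mo-\mx_{t_2}^{(i)}\|_F^2\asymp n\rho_n\tilde\rho$ for $t_1\neq t_2$ says exactly that $\md^{(i)\circ2}$ has zero diagonal and off‑diagonal entries $\asymp\rho_n\tilde\rho$. Using $\mb^{(i)}=-\tfrac12\mj\md^{(i)\circ2}\mj$, $\mj^2=\mj$, and cyclicity of the trace,
$$\mathrm{tr}(\mb^{(i)})=-\tfrac12\,\mathrm{tr}\!\big(\md^{(i)\circ2}\mj\big)=\tfrac1{2T}\,\mathbf{1}^\top\md^{(i)\circ2}\mathbf{1}\asymp T\rho_n\tilde\rho,$$
the last estimate because $\mathbf{1}^\top\md^{(i)\circ2}\mathbf{1}$ sums $\asymp T^2$ off‑diagonal entries each $\asymp\rho_n\tilde\rho$; at the same time $\lambda_1(\mb^{(i)})=\|\mb^{(i)}\|\le\tfrac12\|\md^{(i)\circ2}\|_F\le\tfrac{T}{2}\|\md^{(i)\circ2}\|_{\max}\lesssim T\rho_n\tilde\rho$. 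In the setting of \cref{thm:M error}, the top $r=O(1)$ eigenvalues of $\mb^{(i)}$ carry the signal (the remaining ones being of lower order through $\lambda_{r+1}(\mb^{(i)})=O(Tn^{-1}(n\rho_n)^{1/2})$) and $\mb^{(i)}$ has bounded condition number, so $\mathrm{tr}(\mb^{(i)})\asymp\sum_{j=1}^r\lambda_j(\mb^{(i)})\asymp r\,\lambda_r(\mb^{(i)})$; combined with the two displayed bounds this gives $\lambda_1(\mb^{(i)})\asymp\lambda_r(\mb^{(i)})\asymp T\rho_n\tilde\rho$. Finally $\|\mm^{(i)}\|_F^2=\mathrm{tr}(\mLambda^{(i)})=\sum_{j=1}^r\lambda_j(\mb^{(i)})\asymp T\rho_n\tilde\rho$, hence $\tfrac1{\sqrt T}\|\mm^{(i)}\|_F\asymp(\rho_n\tilde\rho)^{1/2}$.

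\textbf{Step 2 (apply \cref{thm:M error} and conclude convergence).} The hypothesis $\lambda_r(\mb^{(i)})=\omega(Tn^{-1}(n\rho_n)^{1/2})$ required in \cref{thm:M error} becomes $T\rho_n\tilde\rho=\omega(T\rho_n^{1/2}n^{-1/2})$, i.e. $n\rho_n\tilde\rho^2=\omega(1)$, which is precisely our assumption; $\lambda_{r+1}(\mb^{(i)})=O(Tn^{-1}(n\rho_n)^{1/2})$ and the bounded condition number of $\mb^{(i)}$ from Step 1 then let us invoke \eqref{eq:MO-M}, so with high probability
$$\frac1{\sqrt T}\min_{\mo\in\mathcal{O}_r}\|\hat\mm^{(i)}\mo-\mm^{(i)}\|_F\lesssim\frac{T^{1/2}(n\rho_n)^{1/2}}{n\,\lambda_r^{1/2}(\mb^{(i)})}\asymp\frac{T^{1/2}(n\rho_n)^{1/2}}{n\,(T\rho_n\tilde\rho)^{1/2}}=\frac1{(n\tilde\rho)^{1/2}}.$$
Dividing this by $\tfrac1{\sqrt T}\|\mm^{(i)}\|_F\asymp(\rho_n\tilde\rho)^{1/2}$ shows the relative error is $\lesssim(n\rho_n\tilde\rho^2)^{-1/2}\to0$ under $n\rho_n\tilde\rho^2=\omega(1)$, which is the asserted $\hat\mm^{(i)}\mw^{(i)}_\mm\xrightarrow{w.h.p.}\mm^{(i)}$ with $\mw^{(i)}_\mm$ the optimal orthogonal Procrustes transformation. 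The condition $T=O(n^{\ell})$ is used only so that the $O(T^2)$ entrywise high‑probability estimates behind \cref{thm:D error,thm:M error} hold simultaneously after a union bound.

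\textbf{Main obstacle.} I expect Step 1 to be the crux: converting ``all off‑diagonal squared distances are comparable'' into $\lambda_r(\mb^{(i)})\asymp T\rho_n\tilde\rho$ requires showing that the trace of $\mb^{(i)}$ is not degenerately spread but concentrates, up to constants, on exactly its top $r$ eigenvalues — this is where the effective‑rank/spectral‑gap and bounded‑condition‑number hypotheses carried over from \cref{thm:M error}, together with $r=O(1)$, must be used with care. Once that scaling is established, Step 2 is pure substitution in the $\asymp/\omega/O$ calculus.
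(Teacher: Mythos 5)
Your overall strategy is the paper's: pin down $\lambda_r(\mb^{(i)})\asymp T\rho_n\tilde\rho$ and then substitute into Eq.~\eqref{eq:MO-M}, and your Step 2 (checking that $\lambda_r(\mb^{(i)})=\omega(Tn^{-1}(n\rho_n)^{1/2})$ reduces to $n\rho_n\tilde\rho^2=\omega(1)$, and comparing the error with $\tfrac{1}{\sqrt T}\|\mm^{(i)}\|_F\asymp(\rho_n\tilde\rho)^{1/2}$) matches the paper's argument. Your trace computation $\operatorname{tr}(\mb^{(i)})=\tfrac{1}{2T}\mathbf{1}^\top\md^{(i)\circ2}\mathbf{1}\asymp T\rho_n\tilde\rho$ and the bound $\lambda_1(\mb^{(i)})\lesssim T\rho_n\tilde\rho$ are also fine. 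The genuine gap is exactly at the step you flagged: the claim $\operatorname{tr}(\mb^{(i)})\asymp\sum_{j=1}^r\lambda_j(\mb^{(i)})$ does not follow from $\lambda_{r+1}(\mb^{(i)})=O(Tn^{-1}(n\rho_n)^{1/2})$. The tail contains up to $T-r$ eigenvalues, so this hypothesis only gives $\sum_{k>r}\lambda_k(\mb^{(i)})\le (T-r)\,\lambda_{r+1}(\mb^{(i)})=O(T^2n^{-1}(n\rho_n)^{1/2})$, which is negligible compared with $\operatorname{tr}(\mb^{(i)})\asymp T\rho_n\tilde\rho$ only when $T=o((n\rho_n)^{1/2}\tilde\rho)$ — a restriction not implied by $n\rho_n\tilde\rho^2=\omega(1)$ together with $T=O(n^{\ell})$ (e.g.\ $T=n^2$ with $(n\rho_n)^{1/2}\tilde\rho=\log n$ satisfies all stated hypotheses). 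In addition, $\mb^{(i)}$ need not be positive semidefinite, so "the remaining eigenvalues are of lower order" must also account for the negative part of the spectrum (those terms happen to help your lower bound, but they are not covered by the $\lambda_{r+1}$ condition). That entrywise scaling of $\md^{(i)\circ2}$ alone cannot force the trace to concentrate on $O(1)$ eigenvalues is shown by the equidistant (regular simplex) configuration, for which $\mb^{(i)}$ is a multiple of $\mj$ and the trace is spread evenly over $T-1$ eigenvalues; so your derivation must use the eigengap hypothesis quantitatively, and the way you use it breaks down for large $T$.

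The paper's proof sidesteps the tail-sum issue by never splitting the trace: it works directly with the mirror points, using the exact identity $\sum_{k=1}^r\lambda_k(\mb^{(i)})=\|\mm^{(i)}\|_F^2=\sum_{t}\|\mathbf{m}^{(i)}_t\|^2$, the centering $\sum_{t}\mathbf{m}^{(i)}_t=\mathbf{0}$ with a Jensen-type inequality, and the asserted two-sided approximation $\|\mathbf{m}^{(i)}_{t_1}-\mathbf{m}^{(i)}_{t_2}\|^2\asymp(\md^{(i)}_{t_1,t_2})^2\asymp\rho_n\tilde\rho$ (i.e.\ it takes as part of the setting that the $r$-dimensional configuration preserves the pairwise distances up to constants); this yields $\sum_{k\le r}\lambda_k(\mb^{(i)})\asymp T\rho_n\tilde\rho$ without any control of $\sum_{k>r}\lambda_k(\mb^{(i)})$, after which bounded $r$ and bounded condition number give $\lambda_r(\mb^{(i)})\asymp T\rho_n\tilde\rho$ as in your argument. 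To repair your proof you should either import that distance-preservation property of the mirror (as the paper does) or add an explicit condition of the type $T\lambda_{r+1}(\mb^{(i)})=o(\operatorname{tr}(\mb^{(i)}))$, equivalently $T=o((n\rho_n)^{1/2}\tilde\rho)$; as written, Step 1's key claim is not justified by the stated hypotheses.
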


In Proposition~\ref{prop:M error2} we suppose $\min_{\mo\in\mathcal{O}_d}\|\mx^{(i)}_{t_1}\mo-\mx^{(i)}_{t_2}\|_F^2\asymp n\rho_n \tilde \rho$ for all pairs $t_1\neq t_2$ just for ease of exposition. The proof can be adapted to allow $\min_{\mo\in\mathcal{O}_d}\|\mx^{(i)}_{t_1}\mo-\mx^{(i)}_{t_2}\|_F^2=0$ for a certain percentage of pairs, thereby handling the more general case.

Proposition~\ref{prop:M error2} shows that the estimated mirror $\hat\mm^{(i)}$ converges to the true mirror $\mm^{(i)}$ with high probability when the average degree of graphs $n\rho_n$ or the change magnitude $\tilde\rho$ is large enough, i.e., $(n\rho_n)\tilde\rho^2=\omega(1)$, and the number of time points does not grow too fast, i.e. there exists $\ell>0$ such that $T=O(n^{\ell})$. The discussion in Proposition~\ref{prop:M error2} intuitively demonstrates under which conditions our estimator of the mirror matrix is asymptotically convergent.



\begin{proposition}\label{prop:Dstar error2}
Consider the setting of Proposition~\ref{prop:M error2} for all $i\in[m]$. We then have
$$
	\|\hat\md^{\star}-\md^{\star}\|_{\max}
	\lesssim \frac{1}{(n\tilde\rho)^{1/2}}
	$$
	with high probability. 
	We further suppose the distinct parameter $c$ satisfies $c=\omega((n\tilde\rho)^{-1/2})$. 
	Then Eq.~\eqref{eq:clear distance} for $\hat\md^\star$ holds, and hierarchical clustering for $\hat\md^\star$ achieves exact recovery.
\end{proposition}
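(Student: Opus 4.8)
The plan is to obtain \cref{prop:Dstar error2} as a specialization of \cref{thm:Dstar error}: the only genuinely new input is the scale $\lambda_r(\mb^{(i)})\asymp T\rho_n\tilde\rho$, which \cref{prop:M error2} already supplies, and everything else is a short arithmetic simplification plus a deterministic linkage argument.

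\textbf{Step 1: reduce to \cref{thm:Dstar error}.} The standing hypotheses here are those of \cref{prop:M error2}, hence of \cref{thm:M error} and \cref{thm:D error}. \cref{prop:M error2} moreover gives $\lambda_r(\mb^{(i)})\asymp T\rho_n\tilde\rho$, and since $\rho_n$ and $\tilde\rho$ are common to all the networks this holds uniformly over $i\in[m]$; thus the parameter $\lambda$ required in \cref{thm:Dstar error} may be taken to be $\lambda\asymp T\rho_n\tilde\rho$. Substituting this into the conclusion of \cref{thm:Dstar error},
\[
\frac{T^{1/2}(n\rho_n)^{1/2}}{n\lambda^{1/2}}\asymp\frac{T^{1/2}(n\rho_n)^{1/2}}{n\,(T\rho_n\tilde\rho)^{1/2}}=\frac{1}{(n\tilde\rho)^{1/2}},
\]
which is exactly the asserted bound $\|\hat\md^{\star}-\md^{\star}\|_{\max}\lesssim(n\tilde\rho)^{-1/2}$. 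Likewise the hypothesis $c=\omega\bigl(T^{1/2}(n\rho_n)^{1/2}/(n\lambda^{1/2})\bigr)$ of \cref{thm:Dstar error} becomes precisely $c=\omega\bigl((n\tilde\rho)^{-1/2}\bigr)$, so Eq.~\eqref{eq:clear distance} and the exact-recovery conclusion transfer verbatim.

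\textbf{Step 2: the mechanism (for a self-contained argument).} If one prefers not to route through \cref{thm:Dstar error}, I would argue the max-norm bound directly. The Procrustes dissimilarity $d(\mathbf A,\mathbf B):=\min_{\mo\in\mathcal{O}_r}\|\mathbf A\mo-\mathbf B\|_F$ is a pseudometric on $\mathbb{R}^{T\times r}$ modulo $\mathcal{O}_r$ (the triangle inequality follows from orthogonal invariance of $\|\cdot\|_F$ and closure of $\mathcal{O}_r$ under products), so for each pair $i,j$,
\[
\bigl|\,d(\hat\mm^{(i)},\hat\mm^{(j)})-d(\mm^{(i)},\mm^{(j)})\,\bigr|\le d(\hat\mm^{(i)},\mm^{(i)})+d(\hat\mm^{(j)},\mm^{(j)}).
\]
Dividing by $\sqrt T$ gives $|\hat\md^{\star}_{i,j}-\md^{\star}_{i,j}|\le\frac{1}{\sqrt T}d(\hat\mm^{(i)},\mm^{(i)})+\frac{1}{\sqrt T}d(\hat\mm^{(j)},\mm^{(j)})$, and \cref{prop:M error2} bounds each term by $O\bigl((n\tilde\rho)^{-1/2}\bigr)$ with high probability. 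A union bound over the finitely many $i\in[m]$ then yields $\|\hat\md^{\star}-\md^{\star}\|_{\max}\lesssim(n\tilde\rho)^{-1/2}$ with high probability.

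\textbf{Step 3: clustering.} Since $c=\omega\bigl((n\tilde\rho)^{-1/2}\bigr)$ dominates the error, on the high-probability event $\|\hat\md^{\star}-\md^{\star}\|_{\max}<c/2$ for all $n$ large. Then for $Y_i=Y_j$, using $\md^{\star}_{i,j}=0$ gives $\hat\md^{\star}_{i,j}<c/2$; for $Y_i\ne Y_j$, using $\md^{\star}_{i,j}\ge c$ gives $\hat\md^{\star}_{i,j}>c/2$; this is Eq.~\eqref{eq:clear distance}. With this gap, every within-cluster entry of $\hat\md^{\star}$ is below $c/2$ and every cross-cluster entry is above it, so an agglomerative procedure (single-, average-, or complete-linkage) merges each true cluster completely before any cross-cluster merge occurs; cutting the dendrogram into $K$ groups therefore returns exactly $\{Y_i\}_{i\in[m]}$, i.e.\ exact recovery. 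I expect no real obstacle here: the substantive work — the identification $\lambda_r(\mb^{(i)})\asymp T\rho_n\tilde\rho$ — is done in \cref{prop:M error2}, and the rest is the arithmetic above plus a standard deterministic linkage argument; the only point needing a word of care is the union bound in Step 2, which is immediate for fixed $m$ but would require $m$ at most polynomial in $n$ if $m$ grew.
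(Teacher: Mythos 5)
Your Step 1 is exactly the paper's proof: it cites \cref{thm:Dstar error} with $\lambda\asymp T\rho_n\tilde\rho$ from \cref{prop:M error2}, and your arithmetic $T^{1/2}(n\rho_n)^{1/2}/(n(T\rho_n\tilde\rho)^{1/2})=(n\tilde\rho)^{-1/2}$ together with the translation of the condition on $c$ is precisely how the paper concludes. Your Steps 2--3 (the triangle inequality for the Procrustes pseudometric plus the deterministic linkage argument) are a correct, and in fact somewhat more self-contained, elaboration of details the paper leaves implicit, so the proposal is correct and follows essentially the same route.
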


	
	Proposition~\ref{prop:Dstar error2} shows the exact recovery of DNCMD requires $c=\omega((n\tilde\rho)^{-1/2})$.
	Recall that in this setting, for individual mirror matrices, we obtain $\frac{1}{\sqrt{T}}\|\mm^{(i)}\|_F\asymp (\rho_n\tilde \rho)^{1/2}$, and for pairs of mirrors from two different clusters with $Y_i\neq Y_j$, we have $\frac{1}{\sqrt{T}}\min_{\mo\in\mathcal{O}_r}\|\mm^{(i)}\mo-\mm^{(j)}\|_F\geq c$. 
	It follows that, the block structure of $\hat\md^\star$ in Eq.~\eqref{eq:clear distance} and the resulting exact recovery of DNCMD require that the ratio of the dissimilarity between clusters to the strength of mirrors satisfies
	$$
	    \frac{\frac{1}{\sqrt{T}}\min_{\mo\in\mathcal{O}_r}\|\mm^{(i)}\mo-\mm^{(j)}\|_F}{\frac{1}{\sqrt{T}}\|\mm^{(i)}\|_F}
	\gtrsim \frac{c}{(\rho_n\tilde \rho)^{1/2}}
	=\omega\Big(\frac{(n\tilde\rho)^{-1/2}}{(\rho_n\tilde \rho)^{1/2}}\Big)
	=\omega((n\rho_n)^{-1/2}\tilde\rho^{-1}).
	$$
	It means a larger average degree of graphs $n\rho_n$ or a larger change magnitude $\tilde\rho$ allows smaller relative dissimilarity between clusters.
	
	\begin{remark}
		In the discussion about Proposition~\ref{prop:Dstar error2}, we do not see that a larger number of time points is beneficial to the error rate. The same phenomenon also appears in Proposition~\ref{prop:M error2} for the estimation error for mirrors. It does not mean that we do not need to collect the dynamic networks for more time points. Too inadequate data collection can result in a loss of capture of signals. We consider the example in (a) of Figure~\ref{fig:for T}. If we just collect the data for time points $1,2,\dots,5$, the difference between dynamic network $1$ and $2$ cannot be captured. Adequate data collection is necessary for detecting discrepancies between dynamic networks, i.e. a large enough $T$ is necessary for accurate $\frac{1}{\sqrt{T}}\min_{\mo\in\mathcal{O}_r}\|\mm^{(i)}\mo-\mm^{(j)}\|_F$.
	On the other side, if $T$ is large enough to sufficiently detect the discrepancies between dynamic networks, by the theoretical analysis, we do not need to collect too much data. The example in (b) of Figure~\ref{fig:for T} is periodic and the data for one cycle can contain enough signals. 
	
\begin{figure}[htbp] 
\centering
\subfigure[]{%
\includegraphics[width=7.5cm]{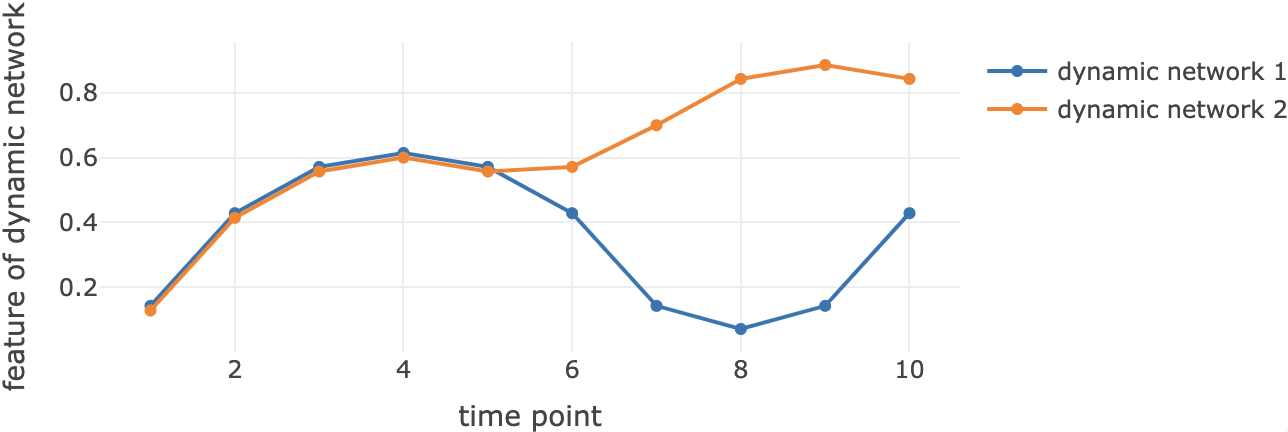}%
\label{fig:for_T_a}}
\subfigure[]{%
\includegraphics[width=7.5cm]{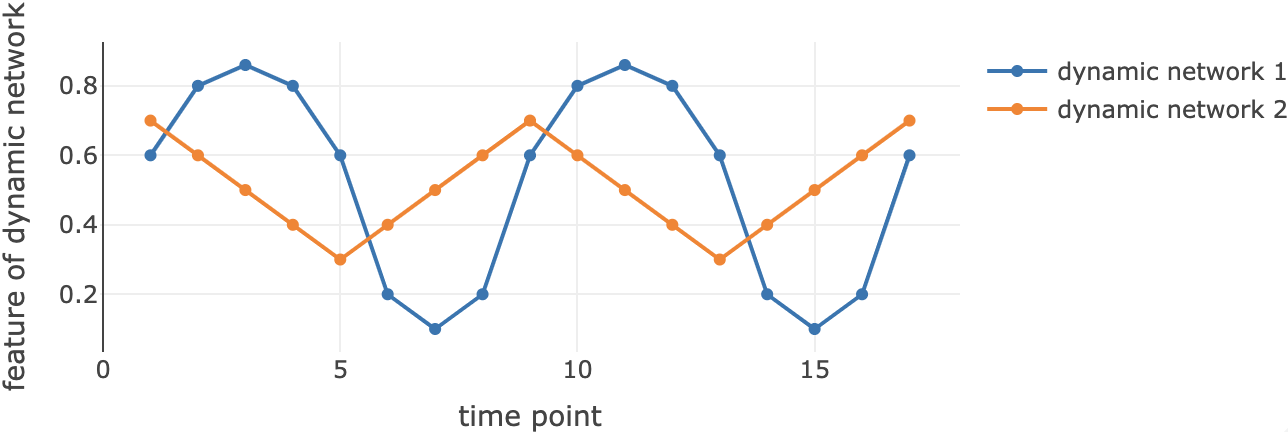}%
\label{fig:for_T_b}}
\caption{Examples for dynamic networks.}
\label{fig:for T}
\end{figure}
	\end{remark}

\subsection{Theoretical results for random latent positions}\label{sec:random X}

The model in Section~\ref{sec:model} and the analysis in Section~\ref{sec:thm} are for deterministic latent positions $\{\mx^{(i)}_t\}$.
Actually, the theoretical guarantees can also be provided for a case where $\{\mx^{(i)}_t\}$ are randomly generated from latent position stochastic processes as described in \cite{athreya2024discovering}.

In particular, for a dynamic network $\{G_t\}_{t}$ generated from RDPG models with a dynamic latent position matrix $\{\mx_t\}_{t}$, for each vertex $s\in[n]$, the $s$th row of $\{\mx_t\}_{t}$ is the corresponding dynamic latent position of this vertex. Suppose the latent positions for all vertices are i.i.d. samples of a multivariate stochastic process $X$ defined on a probability space $(\Omega,\mathcal{F},P)$; we refer to this as a latent position stochastic process.
For a particular sample point $\omega\in \Omega$, $X(t,\omega)\in\mathbb{R}^d$ is the realization of the associated latent position for this vertex at time $t$.
On the one hand, for a fixed $\omega$, $\{X(t,\omega)\}_{t}$ is the realized trajectory of a $d$-dimensional stochastic process.
On the other hand, for a given time $t$, $\{X(t,\omega)\}_{\omega\in\Omega}$ represents all possible latent positions at this time.
In order for the inner product to be a well-defined link function, we require that the distribution of $X(t,\cdot)$ follows an inner-product distribution for all $t$.
For notational simplicity, we will use $X(t,\cdot)$ and $X_t$ interchangeably.
\begin{definition}[Inner product distribution]
	Let $F$ be a probability distribution on $\mathbb{R}^d$.
	We say that $F$ is a inner product distribution, if $\mathbf{x}^\top \mathbf{x}'\in[0,1]$ for all $\mathbf{x},\mathbf{x}'\in \supp (F)$.
\end{definition}

For multiple dynamic networks $\{G^{(i)}_t\}_{i\in[m],t\in[T]}$, we denote the corresponding latent position stochastic processes by $\{X^{(i)}\}$.
For each fixed $i\in[m]$, as in the case of deterministic latent positions, we define the distance between any two latent position matrices $\mx^{(i)}_{t_1}$ and $\mx^{(i)}_{t_2}$ in Eq.~\eqref{eq:Di_t1t2} for different time points, here we also define a distance to quantify the difference between two latent position distributions $X_{t_1}$ and $X_{t_2}$.
For any pair of time points $t_1$ and $t_2$ in $[T]$, we define a distance between the two latent position distributions $X^{(i)}_{t_1}$ and $X^{(i)}_{t_2}$ as
\begin{equation}\label{eq:breve Dt1t2}
	\breve\md^{(i)}_{t_1,t_2}:=\min_{\mo\in\mathcal{O}_d} \|\mathbb{E}[(\mo X^{(i)}_{t_1}-X^{(i)}_{t_2})(\mo X^{(i)}_{t_1}-X^{(i)}_{t_2})^\top]\|^{1/2},
\end{equation}
and therefore we obtain the distance matrix $\breve\md^{(i)}\in\mathbb{R}^{T\times T}$.
In this case, to measure the sparsity of networks or the magnitude of latent position matrices, we define $\breve\mpp^{(i)}_t:=\mathbb{E}[X^{(i)}_tX^{(i)\top}_t]$, and suppose that $\breve\mpp^{(i)}_t$ has rank $d$ and $\|\breve\mpp^{(i)}_t\|\asymp \breve\rho_n$, where $\breve\rho_n$ is a sparsity factor.
Based on the distance matrix $\breve\md^{(i)}$, we have the associated mirror matrix $\breve\mm^{(i)}$, which is the CMDS embedding to dimension $r$ obtained from the distance matrix $\breve\md^{(i)}$ with the corresponding doubly centered matrix $\breve\mb^{(i)}:=-\frac{1}{2}\mj((\breve\md^{(i)})^{\circ 2})\mj^\top$.

For the multiple dynamic networks clustering problem of interest, the $m$ dynamic networks are divided into $K$ clusters with cluster labels $\{Y_i\}_{i\in[m]}$.
We suppose the dynamic networks in the same cluster generated from the same latent position stochastic process.
That is, for any $i,j\in[m]$ such that $Y_i=Y_j$, $X^{(i)}$ and $X^{(j)}$ follow the same distribution.
Then with Eq.~\eqref{eq:breve Dt1t2} we have $\breve\md^{(i)}=\breve\md^{(j)}$,
and further according to Theorem~\ref{thm:same M}, if $\lambda_r(\breve\mb^{(i)})>\lambda_{r+1}(\breve\mb^{(i)})$, the mirrors $\breve\mm^{(i)}$ and $\breve\mm^{(j)}$ are the same up to an $r\times r$ orthogonal matrix.
Recall that we are more interested in the situation where mirrors of different clusters can be distinguished. We suppose there exists a distinct parameter $\breve c>0$ such that for any two dynamic networks $i,j\in[m]$ with $Y_i\neq Y_j$, we have
$
	\frac{1}{\sqrt{T}}\min_{\mo\in\mathcal{O}_r}\|\breve\mm^{(i)}\mo-\breve\mm^{(j)}\|_F\geq \breve c.
$
We define the distance between mirrors as $\breve\md^\star_{i,j}=\frac{1}{\sqrt{T}}\min_{\mo\in\mathcal{O}_r}\|\breve\mm^{(i)}\mo-\breve\mm^{(j)}\|_F$, and then have the pairwise mirror distance matrix $\breve\md^\star$ with the clear block structure as shown in Eq.~\eqref{eq:def Dstar}.

We now analyze the theoretical results of DNCMD under this model with randomly generated latent positions.
The error bounds for $\hat\mm^{(i)}$ and $\hat\md^\star$ as the estimates of $\breve\mm^{(i)}$ and $\breve\md^\star$ is given in Theorem~\ref{thm:M error2} and Theorem~\ref{thm:Dstar error2}, respectively.
We want to emphasize that, under the assumption that for each dynamic network, the latent positions of all vertices are i.i.d. samples of a stochastic process, it is natural to allow different dynamic networks to have different numbers of vertices. Specifically, we can suppose dynamic network $i$ has $n_i$ vertices for all $i\in[m]$, and Algorithm~\ref{Alg_clustering} can be extended to such a scenario. For the following theoretical results, we let $n=\min_{i\in[m]}\{n_i\}$.
\begin{theorem}\label{thm:M error2}
Consider a dynamic network $\{G^{(i)}_t\}_{t\in[T]}$ for a fixed $i\in[m]$ as defined in Section~\ref{sec:random X}.
We suppose that $\lambda_r(\breve\mb^{(i)})=\omega(Tn^{-1}(n\breve\rho_n)^{1/2}\log n)$ and $\lambda_{r+1}(\breve\mb^{(i)})=O(Tn^{-1}(n\breve\rho_n)^{1/2}\log n).$ We then have
\begin{align*}
     \frac{1}{\sqrt{T}}\min_{\mo\in\mathcal{O}_r}\|\hat\mm^{(i)}\mo-\mm^{(i)}\|_F\lesssim
 \frac{T^{1/2}(n\breve\rho_n)^{1/2}(\log n)\lambda_1^{1/2}(\breve\mb^{(i)})}{n \lambda_r(\breve\mb^{(i)})}
	 \Big(1+\frac{T(n\breve\rho_n)^{1/2}(\log n)\lambda_1^{1/2}(\breve\mb^{(i)})}{n \lambda_r^{3/2}(\breve\mb^{(i)})}\Big)
\end{align*}
 with high probability.
 If we further assume $\breve\mb^{(i)}$ has bounded condition number, i.e. there exists a finite constant $\breve M'>0$ such that $\frac{\lambda_1(\breve\mb^{(i)})}{\lambda_r(\breve\mb^{(i)})}\leq \breve M'$, we then have
\begin{equation*}
	 \frac{1}{\sqrt{T}}\min_{\mo\in\mathcal{O}_r}\|\hat\mm^{(i)}\mo-\breve\mm^{(i)}\|_F\lesssim
 \frac{T^{1/2}(n\breve\rho_n)^{1/2}\log n}{n \lambda_r^{1/2}(\breve\mb^{(i)})}
\end{equation*}
 with high probability.
\end{theorem}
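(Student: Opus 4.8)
The plan is to reduce \cref{thm:M error2} to the deterministic analysis behind \cref{thm:M error} via a single new ingredient: an entrywise control of $\hat\md^{(i)}$ against the population distance matrix $\breve\md^{(i)}$. Once $\|\hat\md^{(i)}-\breve\md^{(i)}\|_{\max}\lesssim n^{-1/2}\log n$ is established with high probability, the passage from the distance matrix to the CMDS embedding is exactly the computation in the proof of \cref{thm:M error}, carried out with the sparsity factor $\rho_n$ replaced throughout by $\breve\rho_n$ and one factor of $\log n$ tracked along.

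First I would bound $\|\hat\md^{(i)}-\breve\md^{(i)}\|_{\max}$ by interposing the realized distance matrix $\md^{(i)}$, whose entries $\md^{(i)}_{t_1,t_2}=\tfrac1{\sqrt n}\min_{\mo\in\mathcal O_d}\|\mx^{(i)}_{t_1}\mo-\mx^{(i)}_{t_2}\|_F$ from Eq.~\eqref{eq:Di_t1t2} are built from the realized latent positions, and applying the triangle inequality. For $|\hat\md^{(i)}_{t_1,t_2}-\md^{(i)}_{t_1,t_2}|$ I would condition on $\{\mx^{(i)}_t\}$: matrix concentration of $\tfrac1n\sum_s\mathbf x^{(i)}_{t,s}\mathbf x^{(i)\top}_{t,s}$ around $\breve\mpp^{(i)}_t$ shows that, with high probability, $\|\mpp^{(i)}_t\|=\Theta(n\breve\rho_n)$ and $\mpp^{(i)}_t$ has bounded condition number, so the hypotheses of \cref{thm:D error} hold conditionally (reading the rank-$d$, $\asymp\breve\rho_n$ assumption on $\breve\mpp^{(i)}_t$ as also pinning its smallest eigenvalue to order $\breve\rho_n$), and that theorem gives a bound of order $n^{-1/2}$. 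The term $|\md^{(i)}_{t_1,t_2}-\breve\md^{(i)}_{t_1,t_2}|$ is the genuinely new part: $\md^{(i)}_{t_1,t_2}$ is the Procrustes-aligned square root of an empirical second moment of $d$-dimensional i.i.d.\ bounded vectors, while $\breve\md^{(i)}_{t_1,t_2}$ from Eq.~\eqref{eq:breve Dt1t2} is its population counterpart, so a Bernstein-type bound for the empirical second-moment matrix — made uniform over the compact group $\mathcal O_d$ by an $\epsilon$-net together with Lipschitz continuity of the Procrustes objective in $\mo$ and of $\sqrt{\cdot}$, and strengthened to the exponential-tail level of the paper's conventions — gives a bound of order $n^{-1/2}\log n$, which is the source of the logarithmic inflation relative to \cref{thm:D error}. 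A union bound over the $O(T^2)$ pairs then finishes this estimate.

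With the distance-matrix bound in hand I would rerun the proof of \cref{thm:M error} essentially verbatim. Writing the entries of $\hat\md^{(i)\circ2}-\breve\md^{(i)\circ2}$ as $(\hat\md^{(i)}_{t_1,t_2}-\breve\md^{(i)}_{t_1,t_2})(\hat\md^{(i)}_{t_1,t_2}+\breve\md^{(i)}_{t_1,t_2})$ and using $\|\breve\md^{(i)}\|_{\max}\lesssim\breve\rho_n^{1/2}$ (take $\mo=\mi$ in Eq.~\eqref{eq:breve Dt1t2} and bound the resulting $d\times d$ matrix, $d$ fixed), linearity of the double-centering map and $\|\mj\|\le1$ give $\|\hat\mb^{(i)}-\breve\mb^{(i)}\|\le\|\hat\mb^{(i)}-\breve\mb^{(i)}\|_F\lesssim Tn^{-1}(n\breve\rho_n)^{1/2}\log n$. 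The eigenvalue-gap hypotheses $\lambda_r(\breve\mb^{(i)})=\omega(Tn^{-1}(n\breve\rho_n)^{1/2}\log n)$ and $\lambda_{r+1}(\breve\mb^{(i)})=O(Tn^{-1}(n\breve\rho_n)^{1/2}\log n)$ make this perturbation $o(\lambda_r(\breve\mb^{(i)})-\lambda_{r+1}(\breve\mb^{(i)}))$, so the perturbation bound for the CMDS embedding used in \cref{thm:M error} — a Davis--Kahan bound for the leading $r$ eigenvectors together with Weyl's inequality for the eigenvalues — applied to the rank-$r$ eigendecompositions of $\hat\mb^{(i)}$ and $\breve\mb^{(i)}$ yields, after the $T^{-1/2}$ normalization, the stated two-term bound, with leading term $\tfrac{T^{1/2}(n\breve\rho_n)^{1/2}(\log n)\lambda_1^{1/2}(\breve\mb^{(i)})}{n\lambda_r(\breve\mb^{(i)})}$ and second-order correction of the same shape as in \cref{thm:M error}. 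The corollary is then immediate: substituting $\lambda_1(\breve\mb^{(i)})\asymp\lambda_r(\breve\mb^{(i)})$, the parenthesized second term is of order $Tn^{-1}(n\breve\rho_n)^{1/2}(\log n)/\lambda_r(\breve\mb^{(i)})$, which is $o(1)$ by the gap hypothesis, so it can be absorbed and the bound collapses to $\tfrac{T^{1/2}(n\breve\rho_n)^{1/2}\log n}{n\lambda_r^{1/2}(\breve\mb^{(i)})}$.

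The main obstacle is the second term of the first stage: establishing that the empirical Procrustes distance $\tfrac1{\sqrt n}\min_{\mo\in\mathcal O_d}\|\mx^{(i)}_{t_1}\mo-\mx^{(i)}_{t_2}\|_F$ concentrates around the population distance $\breve\md^{(i)}_{t_1,t_2}$ uniformly over $\mathcal O_d$ and at the high-probability level required, since this is the only place where the randomness of the latent positions does real work. Everything downstream of it is a faithful rerun of the arguments behind \cref{thm:D error} and \cref{thm:M error}, with $\rho_n$ replaced by $\breve\rho_n$ and an extra $\log n$ at each occurrence of the distance-matrix error.
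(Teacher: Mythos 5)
Your downstream plan coincides with the paper's: the published proof of \cref{thm:M error2} consists of a single imported estimate, namely the bound \eqref{eq:random X} on $\|(\hat\md^{(i)})^{\circ 2}-(\breve\md^{(i)})^{\circ 2}\|_{\max}$ taken from the proof of Theorem~6 of \cite{athreya2024discovering}, after which the argument of \cref{thm:M error} (double centering, Weyl plus a $\sin\Theta$ bound for the top-$r$ eigenpairs, and the bounded-condition-number simplification) is repeated verbatim. Your squaring step via $\|\breve\md^{(i)}\|_{\max}\lesssim\breve\rho_n^{1/2}$, the spectral bound on $\hat\mb^{(i)}-\breve\mb^{(i)}$, and the eigen-perturbation stage are all faithful to that template. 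Where you diverge is upstream: you attempt to prove the concentration of $\hat\md^{(i)}$ around $\breve\md^{(i)}$ yourself, and the middle leg of your triangle inequality contains a genuine gap.

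Specifically, you assert that $\breve\md^{(i)}_{t_1,t_2}$ from \eqref{eq:breve Dt1t2} is ``the population counterpart'' of the realized Frobenius Procrustes distance $\md^{(i)}_{t_1,t_2}$ from \eqref{eq:Di_t1t2}, so that a Bernstein bound plus an $\epsilon$-net over $\mathcal{O}_d$ yields $|\md^{(i)}_{t_1,t_2}-\breve\md^{(i)}_{t_1,t_2}|\lesssim n^{-1/2}\log n$. It is not: since $\tfrac1n\|\mx^{(i)}_{t_1}\mo-\mx^{(i)}_{t_2}\|_F^2=\operatorname{tr}\bigl[\tfrac1n(\mx^{(i)}_{t_1}\mo-\mx^{(i)}_{t_2})^\top(\mx^{(i)}_{t_1}\mo-\mx^{(i)}_{t_2})\bigr]$, concentration of the empirical second-moment matrix shows that $(\md^{(i)}_{t_1,t_2})^2$ is within $\lesssim n^{-1/2}\log n$ of the \emph{trace} functional $\min_{\mo}\mathbb{E}\|\mo X^{(i)}_{t_1}-X^{(i)}_{t_2}\|^2$, whereas $(\breve\md^{(i)}_{t_1,t_2})^2$ is the \emph{spectral norm} of the minimizing second-moment matrix. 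For a positive semidefinite $d\times d$ matrix the trace can exceed the spectral norm by a factor of $d$, and this discrepancy is order one relative to the distance itself (take, e.g., $X^{(i)}_{t_2}=X^{(i)}_{t_1}+\varepsilon$ with small isotropic noise $\varepsilon$ independent of $X^{(i)}_{t_1}$), so no concentration argument can force the two quantities within $n^{-1/2}\log n$ of each other; they agree only when the minimizing difference second-moment matrix is essentially rank one, which neither you nor the theorem assumes. To repair your route you must either prove the concentration for the statistic actually matched to \eqref{eq:breve Dt1t2}, i.e.\ $n^{-1/2}\min_{\mo}\|\hat\mx^{(i)}_{t_1}\mo-\hat\mx^{(i)}_{t_2}\|$ in spectral norm (this is what the $d_{MV}$ machinery of \cite{athreya2024discovering} controls), or redefine $\breve\md^{(i)}$ through the trace and rework \cref{sec:random X} accordingly. (This tension is latent in the paper itself, whose Algorithm computes the Frobenius statistic while \eqref{eq:random X} is imported from a result formulated for the spectral-norm statistic, but your proof, unlike the paper's citation, asserts the problematic concentration explicitly.) The remaining first leg of your stage one, bounding $|\hat\md^{(i)}_{t_1,t_2}-\md^{(i)}_{t_1,t_2}|$ by conditioning on the latent positions and invoking \cref{thm:D error}, is reasonable, granted your added reading that $\lambda_d(\breve\mpp^{(i)}_t)\asymp\breve\rho_n$ and that $n\breve\rho_n$ grows fast enough for the conditional hypotheses to hold.
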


\begin{theorem}\label{thm:Dstar error2}
	Consider the setting of Theorem~\ref{thm:M error2}. We suppose there exists a parameter $\breve \lambda$ such that $\lambda_r(\breve\mb^{(i)})\asymp \breve\lambda$ for all $i\in[m]$.
	We then have
	$
	\|\hat\md^{\star}-\breve\md^{\star}\|_{\max}
	\lesssim \frac{T^{1/2}(n\breve\rho_n)^{1/2}\log n}{n \breve \lambda^{1/2}}
	$
	with high probability.
	We further suppose the distinct parameter $\breve c$ satisfies $$\breve c=\omega\left(\frac{T^{1/2}(n\breve\rho_n)^{1/2}\log n}{n \breve\lambda^{1/2}}\right).$$ 
	Then Eq.~\eqref{eq:clear distance} for $\hat\md^\star$ holds, and hierarchical clustering for $\hat\md^\star$ achieves exact recovery.
\end{theorem}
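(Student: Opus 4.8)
The plan is to follow the argument of \cref{thm:Dstar error} verbatim, substituting the ``breve'' objects $\breve\md^{(i)},\breve\mm^{(i)},\breve\mb^{(i)},\breve\rho_n$ for their deterministic counterparts and carrying along the extra $\log n$ factor produced by \cref{thm:M error2}. The structural fact that makes everything work is that, for fixed $r$, the map $(\mathbf{M},\mathbf{M}')\mapsto\min_{\mo\in\mathcal{O}_r}\|\mathbf{M}\mo-\mathbf{M}'\|_F$ on $T\times r$ matrices is a pseudometric (it is symmetric and satisfies the triangle inequality because $\mathcal{O}_r$ is a group), hence also obeys the reverse triangle inequality. I would first record this, together with the observation that $\breve\md^\star$ has the exact block form of Eq.~\eqref{eq:def Dstar}: within-cluster entries vanish because dynamic networks in a common cluster share a latent position process, hence share $\breve\md^{(i)}$, and then $\breve\mm^{(i)}=\breve\mm^{(j)}$ up to $\mathcal{O}_r$ by \cref{thm:same M}; between-cluster entries are $\ge\breve c$ by hypothesis.

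The first main step is the entrywise bound on $\hat\md^\star-\breve\md^\star$. Fix a pair $i,j\in[m]$ and write $d(\cdot,\cdot)$ for the Procrustes pseudometric on $T\times r$ matrices. The reverse triangle inequality gives
$$\big|\hat\md^\star_{i,j}-\breve\md^\star_{i,j}\big|=\tfrac{1}{\sqrt T}\big|d(\hat\mm^{(i)},\hat\mm^{(j)})-d(\breve\mm^{(i)},\breve\mm^{(j)})\big|\le\tfrac{1}{\sqrt T}d(\hat\mm^{(i)},\breve\mm^{(i)})+\tfrac{1}{\sqrt T}d(\hat\mm^{(j)},\breve\mm^{(j)}).$$
I would then invoke the error bound of \cref{thm:M error2} (its condition-number-simplified form) for each index: under the standing assumption $\lambda_r(\breve\mb^{(i)})\asymp\breve\lambda$ uniformly in $i$, each of the two terms on the right is $\lesssim\frac{T^{1/2}(n\breve\rho_n)^{1/2}\log n}{n\breve\lambda^{1/2}}$ with high probability. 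Since each such event has probability at least $1-n^{-c}$ for every $c>0$, a finite union bound over the $\binom{m}{2}$ pairs preserves high probability and yields $\|\hat\md^\star-\breve\md^\star\|_{\max}\lesssim\frac{T^{1/2}(n\breve\rho_n)^{1/2}\log n}{n\breve\lambda^{1/2}}$ with high probability, which is the first claim.

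The second step converts this into the block structure and the clustering guarantee. Under $\breve c=\omega\!\big(\frac{T^{1/2}(n\breve\rho_n)^{1/2}\log n}{n\breve\lambda^{1/2}}\big)$ the error is $o(\breve c)$, so on the high-probability event above and for $n$ large: if $Y_i=Y_j$ then $\hat\md^\star_{i,j}\le\|\hat\md^\star-\breve\md^\star\|_{\max}<\breve c/2$, while if $Y_i\ne Y_j$ then $\hat\md^\star_{i,j}\ge\breve\md^\star_{i,j}-\|\hat\md^\star-\breve\md^\star\|_{\max}\ge\breve c-o(\breve c)>\breve c/2$, which is exactly Eq.~\eqref{eq:clear distance}. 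Finally, with every within-cluster entry of $\hat\md^\star$ below $\breve c/2$ and every between-cluster entry above $\breve c/2$, single/complete/Ward linkage performs all $m-K$ within-cluster merges before any between-cluster merge occurs, so cutting the dendrogram at $K$ clusters returns the true partition; this is the identical verification carried out in the proof of \cref{thm:Dstar error}, and I would simply refer to it.

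I expect the only genuinely delicate content is internal to \cref{thm:M error2} (the concentration of $\hat\md^{(i)}$ around $\breve\md^{(i)}$, blending adjacency-spectral-embedding perturbation bounds with the i.i.d.\ sampling concentration for the latent position distributions, followed by a CMDS/Davis--Kahan-type passage from $\hat\md^{(i)}$ to $\hat\mm^{(i)}$), which we are entitled to assume. Within the present argument the only points warranting care are (i) making the reverse-triangle-inequality step for the Procrustes pseudometric precise across three configurations, and (ii) checking that the chosen linkage function is controlled by the $\ell_\infty$ bound on $\hat\md^\star$ so that exact recovery follows; both are routine but are where a slip would most plausibly hide.
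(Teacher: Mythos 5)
Your proposal is correct, and at the top level it follows the same plan as the paper: reduce to the per-network mirror error of \cref{thm:M error2}, take a union bound over the finitely many pairs, then separate within- and between-cluster entries of $\hat\md^\star$ and hand the resulting $\ell_\infty$ bound to hierarchical clustering. Where you genuinely diverge is the entrywise bound on $\hat\md^\star-\breve\md^\star$: you use that $(\mathbf{M},\mathbf{M}')\mapsto\min_{\mo\in\mathcal{O}_r}\|\mathbf{M}\mo-\mathbf{M}'\|_F$ is a pseudometric and apply the reverse triangle inequality, so each entry error is controlled by the two Procrustes errors $\min_{\mo}\|\hat\mm^{(i)}\mo-\breve\mm^{(i)}\|_F$ and $\min_{\mo}\|\hat\mm^{(j)}\mo-\breve\mm^{(j)}\|_F$ alone. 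The paper instead (in the proof of \cref{thm:Dstar error}, to which the proof of \cref{thm:Dstar error2} defers after importing the concentration bound Eq.~\eqref{eq:random X} from \cite{athreya2024discovering}) repeats the decomposition of Eq.~\eqref{eq:D-D=...}, which produces an extra alignment term $\frac{1}{\sqrt T}\min\{\|\mm^{(i)}\|_F,\|\mm^{(j)}\|_F\}\cdot\|\mw_\mm^{(i,j)}-\mw_\mm^{(i)\top}\hat\mw_\mm^{(i,j)}\mw_\mm^{(j)}\|$ handled via the polar-decomposition perturbation bound of \cite{rencang}; that route needs a lower bound on $\sigma_r(\mm^{(i)\top}\mm^{(j)})$ and the side condition $T(n\rho_n)^{1/2}/(n\lambda_r)=O(1)$, but yields the aligning orthogonal matrices explicitly. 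Your argument reaches the same rate more directly and without those extra conditions. Two points to keep explicit so the write-up is airtight: the simplified bound of \cref{thm:M error2} you invoke carries the bounded-condition-number hypothesis on $\breve\mb^{(i)}$ (the paper's proof of \cref{thm:Dstar error} makes the same tacit use), and the vanishing of within-cluster entries of $\breve\md^\star$ uses the eigengap $\lambda_r(\breve\mb^{(i)})>\lambda_{r+1}(\breve\mb^{(i)})$ so that \cref{thm:same M} applies; both are available under the assumed setting.
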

Recall that we suppose $\|\mathbb{E}[X^{(i)}_tX^{(i)\top}_t]\|\asymp \breve\rho_n$ with some sparsity parameter $\breve\rho$.
If we further suppose there exists a factor $\breve{\tilde\rho}>0$ to describe the magnitude of the change of $\{X^{(i)}_t\}_{t\in[T]}$ such that $\min_{\mo\in\mathcal{O}_d} \|\mathbb{E}[(\mo X^{(i)}_{t_1}-X^{(i)}_{t_2})(\mo X^{(i)}_{t_1}-X^{(i)}_{t_2})^\top]\|\asymp \breve\rho_n\breve{\tilde\rho}$ for $t_1\neq t_2$, the similar analysis as Proposition~\ref{prop:M error2} and Proposition~\ref{prop:Dstar error2} can be derived.

\section{Simulation Results}\label{sec:simu}

In this section, we perform numerical experiments to demonstrate the theoretical results from Section~\ref{sec:thm} in Section~\ref{sec:mirror estimate}, and show the clustering performance of DNCMD in Section~\ref{sec:performance}.

\subsection{Error in mirror estimate}
\label{sec:mirror estimate}

We demonstrate the theoretical results for the estimated mirror in Theorem~\ref{thm:M error} with numerical simulation.
Theorem~\ref{thm:M error} states a result for any fixed dynamic network $i\in[m]$.
For ease of exposition, in this section we will fix a value of $i\in[m]$ and thereby drop the index $i$ from our matrices.
We generate the true latent positions $\{\mx_t\}$ with the following random walk process. Recall that the true latent position of vertex $s\in[n]$ in time point $t\in[T]$ is the $s$th row of $\mx_t$. 
Let $d=1$ and denote the track of the latent position of vertex $s\in[n]$ as $\{X^{<s>}_t\in \mathbb{R}\}_{t\in[T]}$. 
Let $\tilde c\geq 0$ and $\delta_T>0$ be two constants satisfying $\tilde c+\delta_T T\leq 1$. 
For a fixed $p\in(0,1)$ and for all $s\in[n]$, we generate $\{X^{<s>}_t\in \mathbb{R}\}_{t\in[T]}$ independently as follows: $X^{<s>}_0=\tilde c$, and for $t\in[T]$,
\begin{equation}\label{eq:model p}
X^{<s>}_t= \begin{cases}
    X^{<s>}_{t-1}+\delta_T & \text{with probability } p, \\ 
    X^{<s>}_{t-1} & \text{with probability } 1-p.
\end{cases}
\end{equation}
When we set $\delta_T=(1-\tilde c)/T$, 
an observation is that, the dynamic network with the latent positions generated from this model has $\mb$ with $\lambda_1(\mb)\asymp T$ and $\lambda_k(\mb)\lesssim 1$ for $k=2,\dots, T$; see \cite{chen2024euclidean} for more details, and then we know $r=1$.
We demonstrate the error rate of $\hat\mm$ in Theorem~\ref{thm:M error} under this setting.
More specifically, we fix $\tilde c=0.1, p=0.4$, and either fix $n=100$ and vary $T\in\{20,30,40,50,70,100\}$ or fix $T=10$ and vary $n\in\{50,100,200,400,800\}$.
The estimation error of $\hat\mm$ is evaluated using $\frac{1}{\sqrt{T}}\min_{\mo\in\mathcal{O}_r}\|\hat\mm\mo-\mm\|_F$, and the results are summarized in Figure~\ref{fig:simulation_varyT}.
For this setting, 
 according to Theorem~\ref{thm:M error} and Proposition~\ref{prop:M error2}, if we only vary $T$, we have $\frac{1}{\sqrt{T}}\min_{\mo\in\mathcal{O}_r}\|\hat\mm\mo-\mm\|_F\lesssim 1$, and if we only vary $n$, we have $\frac{1}{\sqrt{T}}\min_{\mo\in\mathcal{O}_r}\|\hat\mm\mo-\mm\|_F\lesssim n^{-1/2}$. We see the changes of $\frac{1}{\sqrt{T}}\min_{\mo\in\mathcal{O}_r}\|\hat\mm\mo-\mm\|_F$ in Figure~\ref{fig:simulation_varyT} comply with the theoretical error rates obtained from Theorem~\ref{thm:M error} and Proposition~\ref{prop:M error2}.
\begin{figure}[htbp] 
\centering
\subfigure{%
\includegraphics[width=4.5cm]{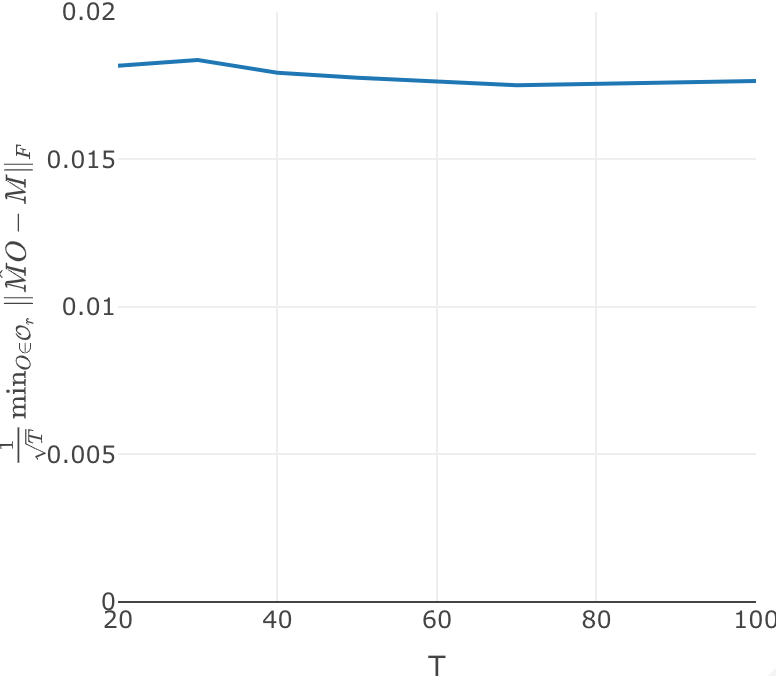}%
\label{fig:simulation_varyT_a}}
\subfigure{%
\includegraphics[width=4.5cm]{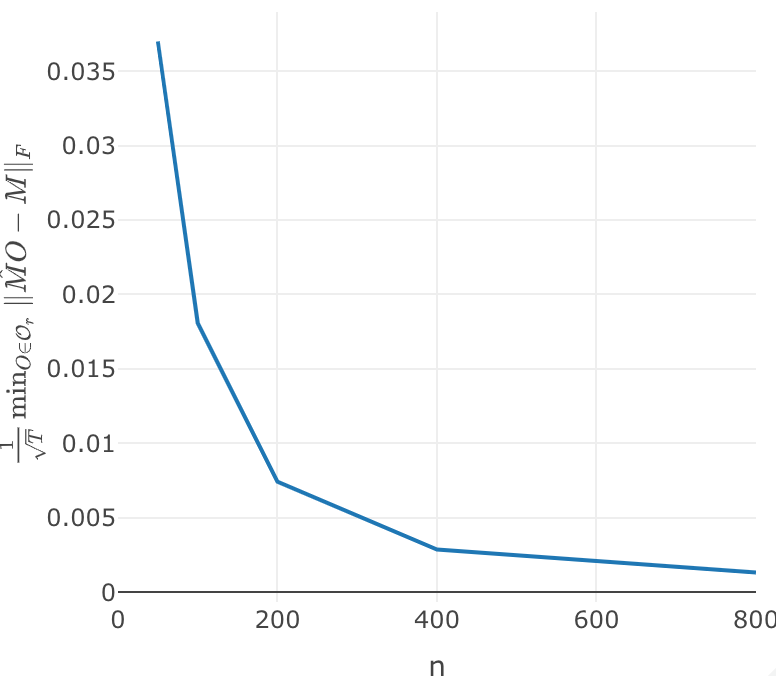}%
\label{fig:simulation_varyT_b}}
\caption{Empirical estimate rates for $ \frac{1}{\sqrt{T}}\min_{\mo\in\mathcal{O}_r}\|\hat\mm\mo-\mm\|_F$  as $T$ or $n$ changes.
	    Left panel: vary $T\in\{20,30,40,50,70,100\}$ while fixing $n=100$.
	    Right panel: vary $n\in\{50,100,200,400,800\}$ while fixing $T=10$.
	    The results are averaged over $100$ independent Monte Carlo replicates.}
\label{fig:simulation_varyT}
\end{figure}

\subsection{Clustering performance}

\label{sec:performance}

We perform a numerical experiment to show the clustering performance of DNCMD.
We consider $K=2$ clusters of the dynamic networks generated with Eq.~\eqref{eq:model p}.
These two clusters each have $20$ dynamic networks, and thus we need to cluster $m=2\times 20=40$ dynamic networks.
For the dynamic networks, we fix the number of time points $T=50$, and the middle time point $t=25$ is a change point of the probability of random walk $p$. 
More specifically, for cluster $1$, we set $p=0.45$ for $t=1,\dots,25$ and set $p=0.55$ for $t=26,\dots,50$.
And for cluster $2$, we set $p=0.55$ for $t=1,\dots,25$ and set $p=0.45$ for $t=26,\dots,50$.
According to Theorem~\ref{thm:Dstar error}, Proposition~\ref{prop:Dstar error2}, and Theorem~\ref{thm:Dstar error2}, we know DNCMD achieves exact recovery, and it means that for $n$ large enough DNCMD can recover the true clusters with high probability. We vary $n\in\{20,30,40,80,120,200\}$ to see the clustering performance of DNCMD.

To measure the clustering accuracy, we compute the adjusted rank index (ARI) for the similarity between the estimated cluster labels and the true labels.
The higher the ARI value, the closer the empirical clustering result and the true label assignment are to each other. ARI ranges from $-1$ to $1$, where $1$ indicates perfect agreement between the empirical clustering result and the true label assignment, and $0$ indicates a random agreement.
Since there is no existing clustering algorithm specifically designed for such dynamic network clustering problems, we compare the results of DNCMD with those of the classical clustering algorithm, k-means. More specifically, we consider using k-means to directly cluster the original adjacency matrices $\{\ma^{(i)}\}$, or using k-means to cluster the estimated low-rank latent position matrices $\{\hat\mx^{(i)}\}$.
We run $100$ independent Monte Carlo replicates and  summarize the result in Figure~\ref{fig:simulation2}. Figure~\ref{fig:simulation2} shows that, when $n$ is large enough (in this example, $n$ only needs to exceed about $100$), DNCMD always has the ARI very close to $1$ and can recover the true clusters perfectly. And the k-means algorithm applied to the original adjacency matrices or the latent position matrices is not very effective and stable for this problem.

\begin{figure}[htbp] 
\centering
\subfigure{%
\includegraphics[width=8cm]{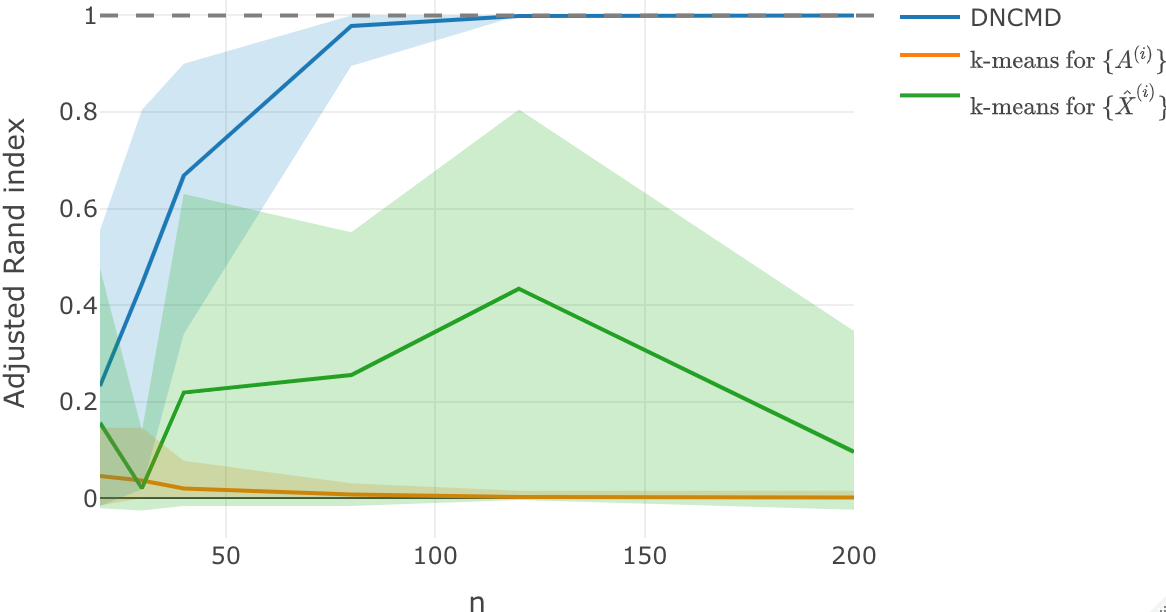}%
\label{fig:simulation2_a}}
\caption{Sample means and $90\%$ empirical confidence intervals of adjusted Rand index for DNCMD and the k-means algorithm based on $100$ independent Monte Carlo replicates. We tune $n\in\{20,30,40,80,120,200\}$. Other detailed settings can be found in Section~\ref{sec:performance}.}
\label{fig:simulation2}
\end{figure}


\section{Real Data Experiments}\label{sec:real}

In this section, we demonstrate the use of DNCMD on Drosophila larval connectome data to analyze edge functions in Section~\ref{sec:brain}, and cluster the trade dynamic networks to analyze the similarities and differences in trade evolution across different food products in Section~\ref{sec:trade}.

\subsection{Drosophila larval connectome data}\label{sec:brain}

The complete synaptic-resolution connectome of the Drosophila larval brain was recently completed \cite{winding2023connectome}, enabling the generation of biologically realistic models of its neural circuit based on known anatomical connectivity \cite{eschbach2020recurrent}.
Training such realistic models in simulations provides significant computational power and flexibility, enabling us to study and enhance our understanding of how real animal brains operate. This approach has been applied in recent papers \cite{acharyya2024consistent} to investigate the operational mechanisms of the Drosophila larval brain.

The connectivity-constrained model proposed in \cite{eschbach2020recurrent,jiang2021models} can be used to generate a recurrent network of the larval mushroom body with feedback neurons, where the connectome can be constrained, and in this process, a series of stimuli can be delivered to the network; see \cite{eschbach2020recurrent,jiang2021models} for more details. 
Therefore, by analyzing the dynamic networks generated by this model after the removal of particular edges between neurons, we can indirectly study the function of these edges.

In this experiment, we consider the process of learning a new association between a conditioned stimulus (CS) (for example, an odor) and rewards or punishments (unconditioned stimuli, US), which serve as reinforcement. We simulate the activity of the dynamic neuron network for a total of $T=160$ time points, and at $t=16$, a random odor is delivered to the neurons in the mushroom body (CS1), followed by a reward delivered at $t=20$. We set all stimuli to last for $3$ time points. After this initial CS of odor with the reward, the odor is also delivered again without the reward at $t=80$ (CS2) and $t=140$ (CS3). The association may be weakened by exposure to the same CS without reinforcement during CS2, so we may observe a decrease in network attraction from CS2 to CS3. Here, we define network attraction as the ratio of the strength of activity in neurons responsible for attraction to the strength of activity in neurons responsible for aversion. Thus, this process includes the learning of the association and its extinction. A schematic of this process can be seen in Figure~\ref{fig:process}.
\begin{figure}[htbp]
\centering
\includegraphics[width=7cm]{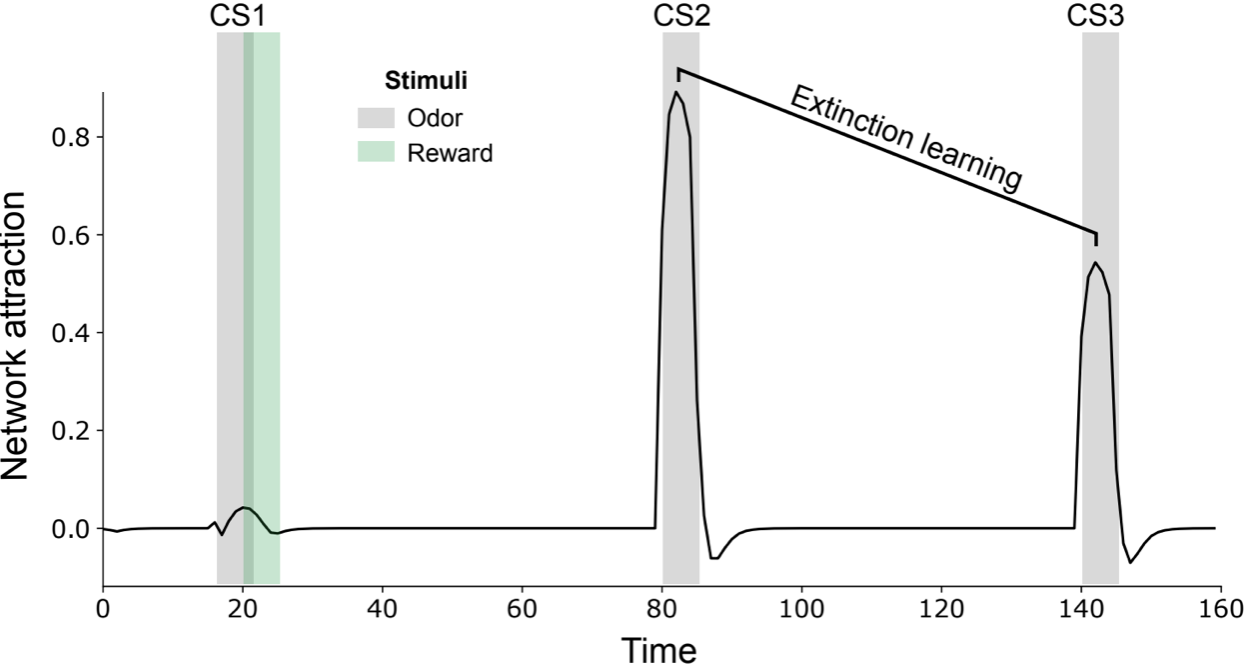}
\caption{Schematic of the process including the learning of an association and its extinction.}
\label{fig:process}
\end{figure}

We study the edges that may have a significant and specific impact on the aforementioned process.
We consider $13$ edges that may have an impact. To study their specific impacts, for each one removed, we generate the corresponding dynamic network of neurons for the aforementioned process using the connectivity-constrained model.
For removal of each single edge, we obtain $11$ replications using different randomization seeds, resulting in a total of $m=13 \times 11=143$ dynamic networks, where each network has $n=140$ nodes.
We can expect that if most replicates resulting from the removal of a particular edge are highly similar to each other and distinctly different from the dynamic networks generated by removing other edges, forming a distinct cluster, it would indicate that this edge plays a critical role in the process of building associations and has a highly specific impact.

We applied the DNCMD algorithm to obtain the dendrogram shown in Figure~\ref{fig:dendrogram}, with the embedding dimension for vertex latent positions set to $d=5$ and the dimension for mirror set to $r=3$. 
In Figure~\ref{fig:dendrogram}, we can clearly see that most of the dynamic networks resulting from the removal of edge $6$ form a distinct cluster, indicating that edge $6$ plays a critical and unique role in this process.
\begin{figure*}[htbp] 
\centering
\subfigure{%
\includegraphics[width=0.92\textwidth]{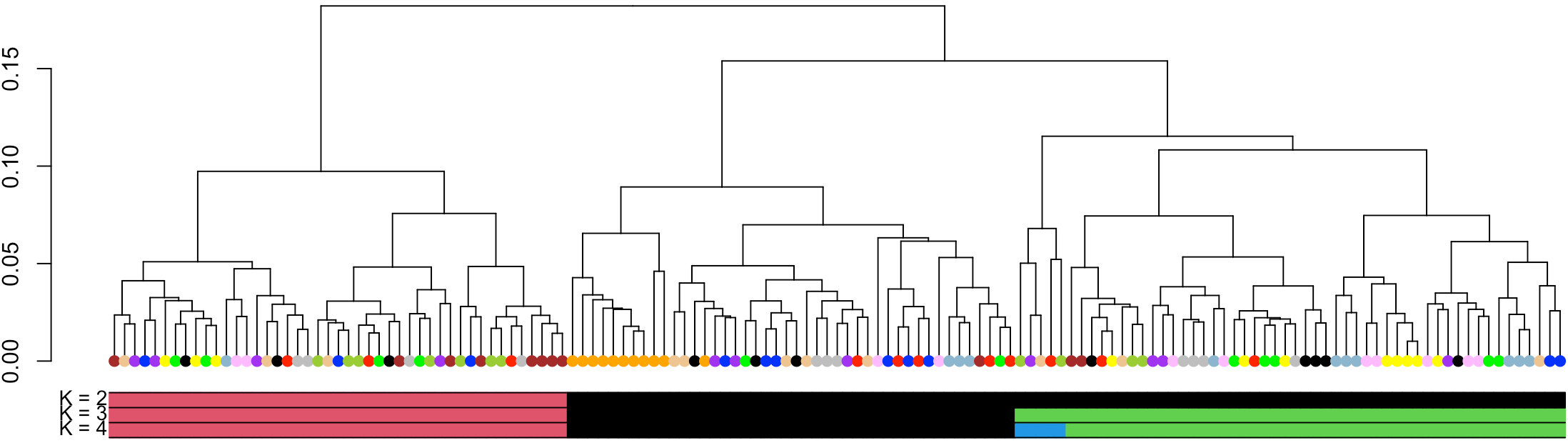}%
\label{fig:dendrogram_a}}
\hfil
\subfigure{%
\includegraphics[width=0.03\textwidth]{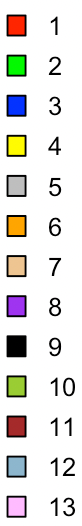}%
\label{fig:dendrogram_b}}
\caption{Dendrogram of dynamic networks of neurons for the removal of each single edge, obtained using the DNCMD algorithm. 
The labels represent the index of the removed edge (from $1$ to $13$) and are shown in different colors, as indicated in the legend on the right, and for each removed edge we have $11$ replicates.}
\label{fig:dendrogram}
\end{figure*}

Neuroscientifically, it is established (post facto) that edge $6$, which does form its own cluster, is in fact the edge that should be distinct --
it is the only edge associated with a feedback neuron (FBN). (Dopaminergic neuron DAN-f1 makes weak but reliable downstream connections onto FBN-1.)
This result provides a compelling proof of principle for the utility of DNCMD,
and motivates a follow-on analysis to provide predictions that can be tested experimentally in live animals.

In addition, we also develop numerical measures to quantify the concentration of replicates for each label and to assess functional similarity between edges in Section~\ref{sec:supp_brain} of the supplementary material. These quantitative analyses further confirm that edge $6$ is special and reveal that certain edge pairs (e.g., edges $1$ and $7$, edges $5$ and $9$) exhibit similar functions.

\subsection{Trade dynamic networks} \label{sec:trade}

We use the trade dynamic networks between countries for different
food and agriculture products during the year from $2005$ to $2022$. The data is
collected by the Food and Agriculture Organization of the United
Nations and is available at
\url{https://www.fao.org/faostat/en/#data/TM}.
We construct a collection of dynamic networks for $T=2022-2005+1=18$ time points, one dynamic network for each food or agriculture product,
where vertices represent countries
and edges in each network represent trade relationships between countries.
For each product $i$ and each time point $t$, we obtain the adjacency
matrix $\mathbf{A}^{(i)}_t$ by (1) we set $(\ma^{(i)}_t)_{r,s}=(\ma^{(i)}_t)_{s,r}=1$ if there is product $i$ trade between countries $r$ and $s$; (2) we ignore the links between
countries $r$ and $s$ in $\mathbf{A}^{(i)}_t$ if their
total trade amount for the product $i$ at the time point $t$ is less than two hundred thousands US
dollars; (3) finally we extract the {
 intersection} of the 
  largest connected components of $\{\mathbf{A}^{(i)}_t\}$ to get the networks for the common involved countries.
  The resulting adjacency matrices $\{\ma^{(i)}_t\}_{i\in[m],t\in[T]}$ corresponding to $m=18$ dynamic networks for $T=18$ time points on a set of $n=58$ vertices.

To analyze the relationships between trade patterns of these $m=18$ products, we apply Algorithm~\ref{Alg_clustering} to obtain the hierarchical clustering result for the $m=18$ dynamic networks with the embedding dimensions $d$ and $r$ chosen to be $2$. Figure~\ref{fig:food} presents the dendrogram of the hierarchical clustering result. 
From Figure~\ref{fig:food}, we see there is a high degree of correlation between the trade pattern relationships and the types of products. 
More specifically, beer and spirits both are alcoholic beverages produced from grain-based materials with similar consumer bases, and they show very similar trade patterns in the figure. Crude organic material n.e.c. and food preparations n.e.c. both have a certain ambiguity in their classification, and their trade trends are also be more challenging to match with other clearly classified products. Except the above four products, there appears to be two main clusters formed by raw/unprocessed products (top cluster) and processed products (bottom cluster), and it suggests discernable differences in the trading patterns for these types of products.
\begin{figure}[htbp] 
\centering
\includegraphics[width=7cm]{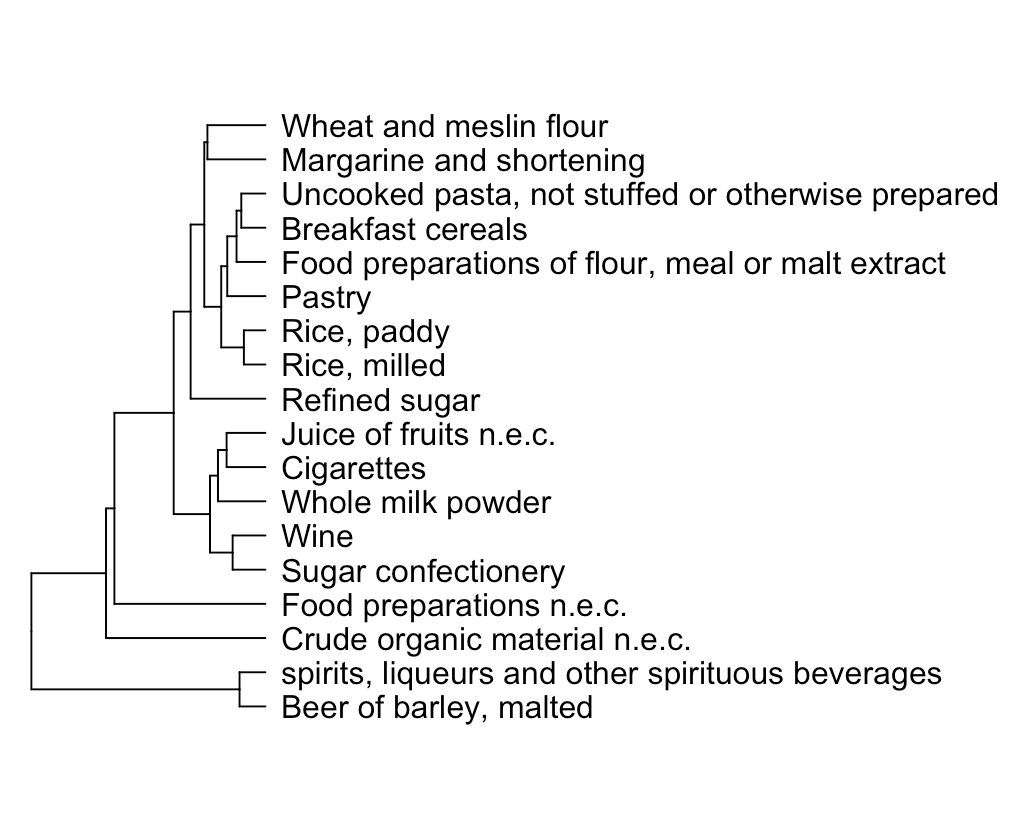}
\caption{Hierarchical clustering result for product trading patterns.}
\label{fig:food}
\end{figure}

\section{Conclusion and Discussion}

In summary, this paper presents DNCMD for clustering dynamic networks according to their evolution patterns.
Our algorithm is based on the mirror method, which captures the important features of the evolution of dynamic networks by curves in low-dimensional Euclidean space, and uses orthogonal Procrustes analysis to eliminate the non-identifiability of the mirrors, with the resulting pairwise distances used for hierarchical clustering to produce a dendrogram and the final clustering outcome.
Under mild assumptions, we establish theoretical guarantees for the exact recovery of our algorithm in two general scenarios: when vertex latent positions are deterministic and when vertex latent positions are randomly generated from stochastic processes.
The simulation experiments validate our theoretical results and demonstrate that our algorithm generally provides highly accurate clustering results.
We finally demonstrate the applications of our algorithm on real-world data. For the Drosophila larval connectome data, we show how to classify the dynamic neural networks obtained after removing each specific edge and leverage this result to indirectly analyze the function of edges in specific neural processes, such as learning a new association between an odor stimulus and a reward.
And we cluster trade dynamic networks to analyze the relationships in the evolution of food products. 

We now mention potential directions for future work. Our clustering approach is based on mirrors, and the mirror idea can be extended to other network models beyond RDPG. 
For instance, when the common subspace independent edge (COSIE) model \cite{arroyo2021inference} is applied to a dynamic network, the probability matrix at time point $t$ can be written as $\mpp_t = \muu\mr_t\muu^\top$, where $\muu$ represents the invariant latent structure, while the possibly time-varying connectivity pattern is modeled by $\{\mr_t\}_t$. Following the mirror idea, we can measure the pairwise differences between $\{\mr_t\}_t$, which represent the evolving underlying structure, to find the corresponding curve of the dynamic network in low-dimensional space under this model.
A feasible measure for the difference between $\hat\mr_{t_1}$ and $\hat\mr_{t_2}$ for any $t_1$ and $t_2$ can be found in Theorem~6 of \cite{zheng2022limit}.
Finally, the mirror approach, which finds low-dimensional configurations to extract main signals, along with the corresponding clustering method based on distances between mirrors, can be applied to problems beyond dynamic networks, as long as the problem involves similar grouped structures.

\bibliographystyle{chicago}
\bibliography{references}

\newpage

\newtheorem{lemma}{Lemma}[section]

\begin{center}%
    {\LARGE Supplementary Material for ``Dynamic networks clustering via mirror distance"\par}%
  \end{center}

\appendix

\setcounter{figure}{0}
\setcounter{table}{0}
\renewcommand{\thefigure}{\thesection.\arabic{figure}}
\renewcommand{\thetable}{\thesection.\arabic{table}}

\section{Additional Numerical Results}

\subsection{Additional analysis for Drosophila larval connectome}\label{sec:supp_brain}
Based on the dendrogram in Figure~\ref{fig:dendrogram}, we develop a numerical measure to quantify the degree of concentration of replicates for each label.
Given a number of clusters $K$, we can obtain a clustering result from the dendrogram and use a contingency table to display the frequency distribution of the clusters across the labels; see Panel~(a) of Figure~\ref{fig:contingency} for the contingency table when $K=3$ is set.
We denote the frequencies in the contingency table for $K$ clusters by $\{\mc_{i,k}^{(K)}\}_{i\in[13],k\in[K]}$, where $i$ is the index over the $13$ labels, $k$ is the index over the total $K$ clusters.
Since each label has $11$ observations, we compute the frequency rate among all dynamic networks with label $i$ as $\mr_{i,k}^{(K)}:=\mc_{i,k}^{(K)}/11$; see Panel~(b) of Figure~\ref{fig:contingency} for an example table of the corresponding frequency rates.
For any fixed $K$, if a label $i$ has a high maximum frequency rate $\max_{k\in[K]}\{\mr_{i,k}^{(K)}\}$, it suggests that the dynamic networks for this label are concentrated in one of the clusters.
Instead of using the maximum frequency rates for a specific $K$, we consider integrating them over a reasonable range of $K$.
For example, we tune $K$ from $1$ to  $K_{\max}=10$  to obtain the curve of the maximum frequency rate for each label, as shown in Panel (a) of Figure~\ref{fig:AUC}. Then, we compute the area under the curve (AUC) for each label, and normalize it by dividing the AUC by $(K_{\max}-1)$, scaling it between $0$ and $1$; see Panel~(b) of Figure~\ref{fig:AUC} for the normalized AUCs for all labels. Based on the normalized AUCs, we also see that the replicates for edge $6$ are the most concentrated. Figures~\ref{fig:contingency} and \ref{fig:AUC} quantitatively support our claim that edge $6$ is special.

\begin{figure}[htbp] 
\centering
\subfigure[Contingency table]{%
\includegraphics[height=4.5cm]{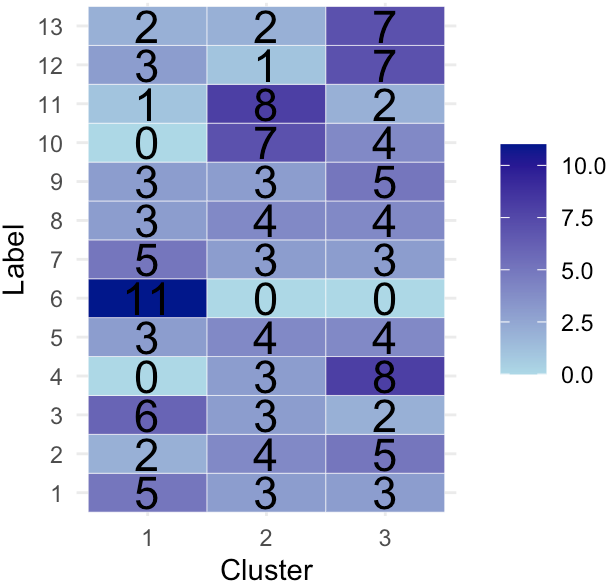}%
\label{fig:contingency_a}}
\hfil
\subfigure[Frequency rates]{%
\includegraphics[height=4.5cm]{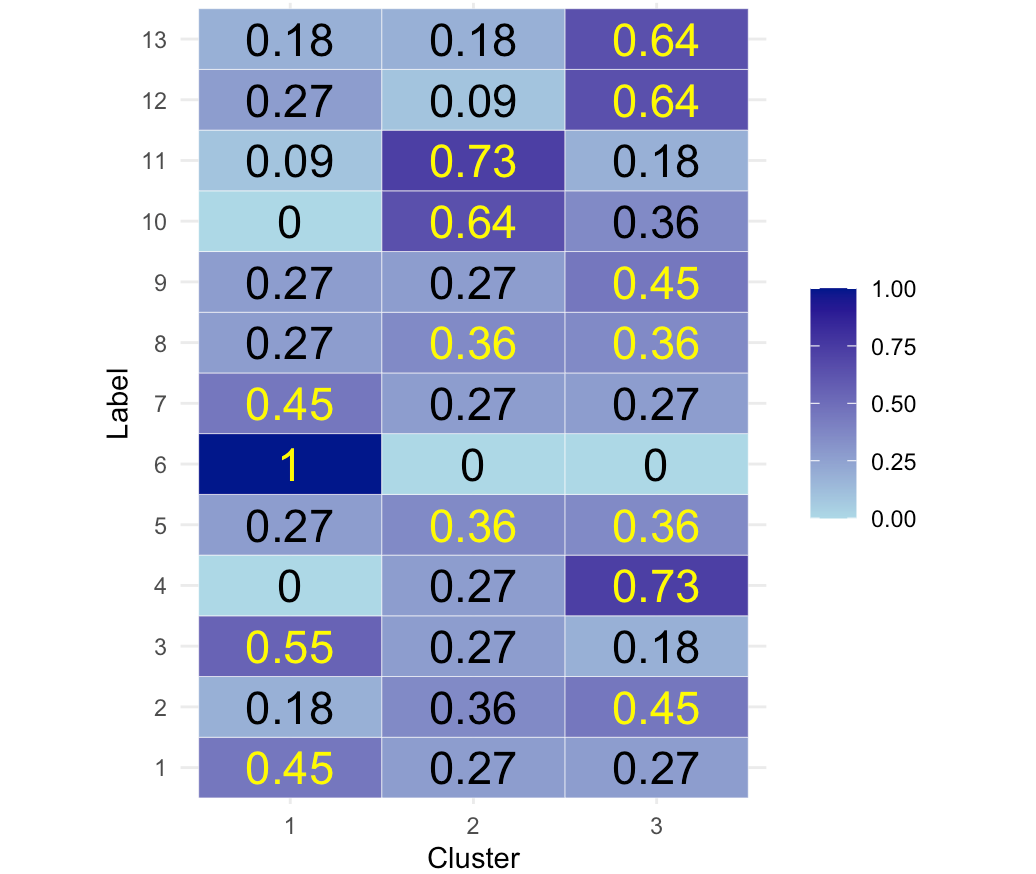}%
\label{fig:contingency_b}}
\caption{Panel (a) shows the contingency table for the case of $K=3$ clusters.
Panel (b) shows the corresponding frequency rates, and the maximum frequency rate for each label is highlighted in yellow.}
\label{fig:contingency}
\end{figure}

\begin{figure}[htbp] 
\centering
\subfigure[Maximum frequency rates]{%
\includegraphics[width=6cm]{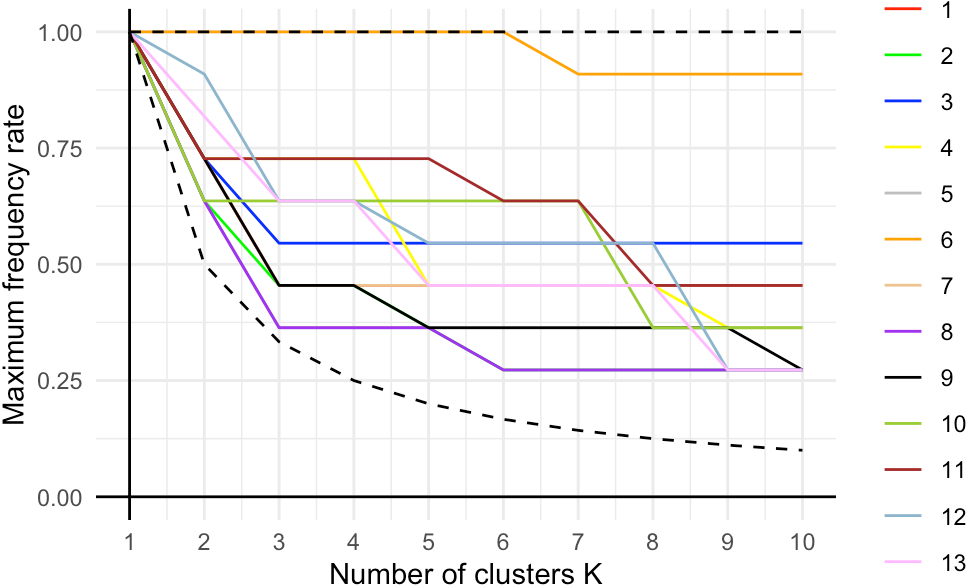}%
\label{fig:AUC_a}}
\subfigure[Normalized AUCs]{%
\includegraphics[width=6cm]{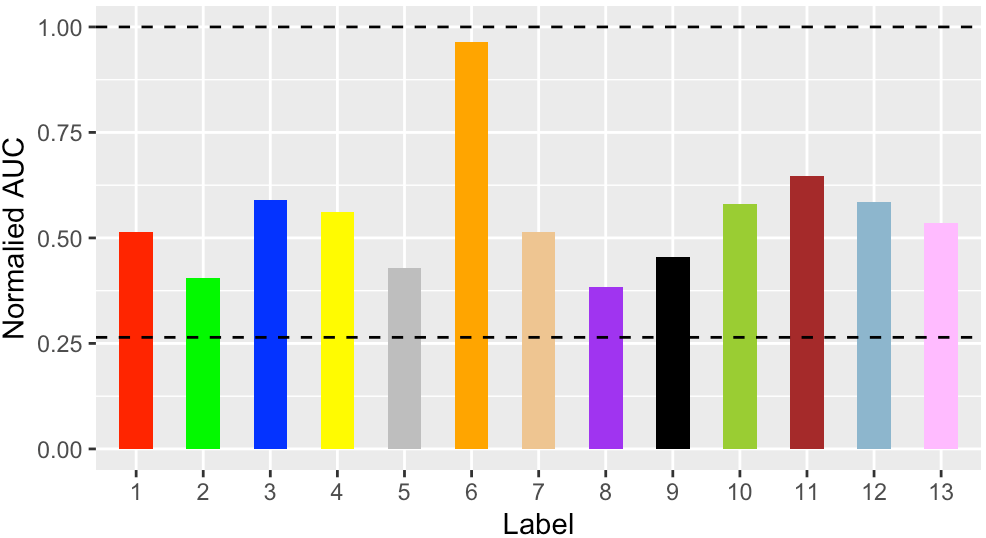}%
\label{fig:AUC_b}}
\caption{Panel~(a) shows the maximum frequency rates for all labels when 
we tune $K$ from $1$ to $K_{\max}=10$. In panel (a), the two dashed curves respectively describe the maximum value, $1$, and the minimum value, $1/K$, that the maximum frequency rate can achieve.
Panel~(b) shows the normalized AUCs for all labels, and the two dashed lines respectively describe the maximum value and the minimum value that the normalized AUC can achieve.}
\label{fig:AUC}
\end{figure}

One more question we are also concerned with is which edges might have similar functions. To address this, we measure the similarity between the distributions of replicates across clusters for each pair of edges.
Given a number of clusters $K$, for each pair of labels, we construct a measure of similarity using the Jaccard distance between their distributions across the clusters. The Jaccard distance ranges from $0$ to $1$, where $0$ indicates that the two distributions are exactly the same.
Thus, a small Jaccard distance suggests that the dynamic networks for these two labels have similar distributions across the clusters.
Figure~\ref{fig:Jaccard} shows the Jaccard distances for all pairs of labels when $K=3$.
We also consider integrating the Jaccard distances over a range of $K$.
We show the Jaccard distances for all pairs of labels when we tune $K$ from $1$ to $K_{\max}=10$ in Panel~(a) of Figure~\ref{fig:Jaccard2}. We then obtain their AUCs and also normalize it by dividing the AUC by $/(K_{\max}-1)$ in Panel~(b) of Figure~\ref{fig:Jaccard2}.
A pair of labels $(i,j)$ with a small normalized AUC of the Jaccard distance suggests that this pair of labels always has similar distributions across the clusters, further implying that the dynamic networks corresponding to these two edges have similar distributions.
From Panel~(b) of Figure~\ref{fig:Jaccard2}, we observe that edges $1$ and $7$, as well as edges $5$ and $9$, exhibit the most similar functions, whereas edge $6$ is distinct from all other edges.


\begin{figure}[htbp] 
\centering
\includegraphics[height=5cm]{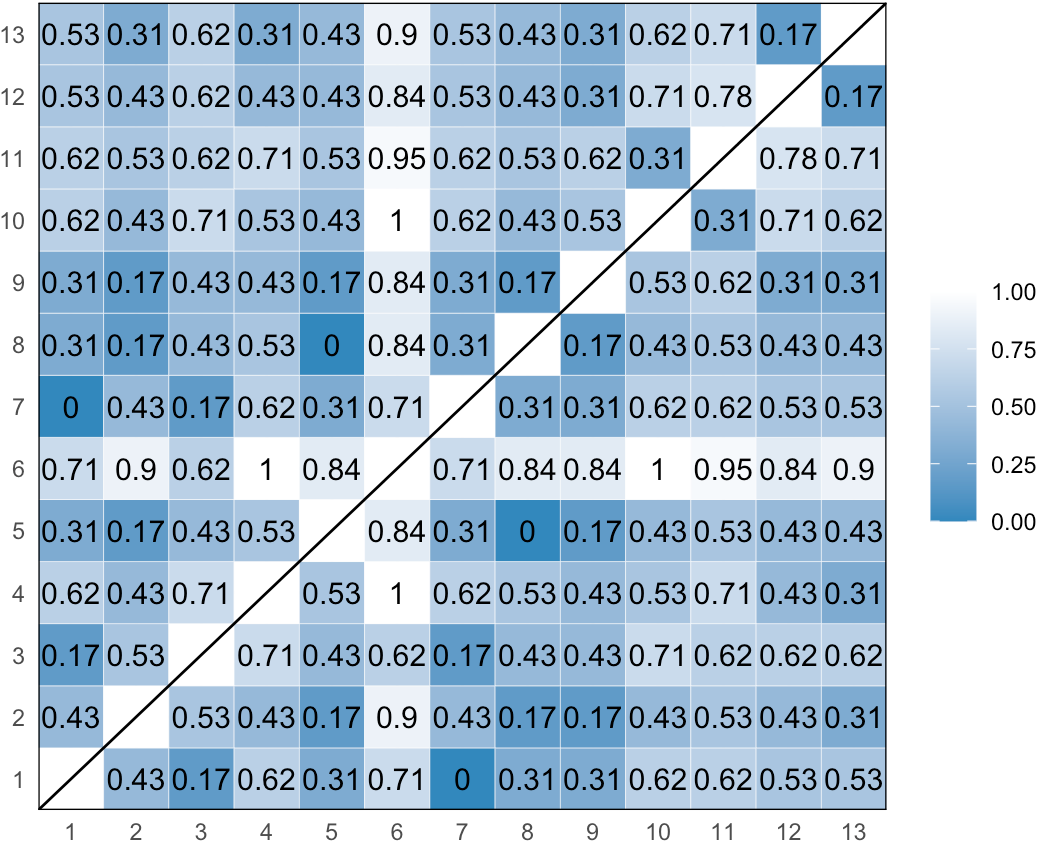}
\caption{Jaccard distances between the distributions across the clusters for all pairs of labels for $K=3$.}
\label{fig:Jaccard}
\end{figure}

\begin{figure}[htbp] 
\centering
\subfigure[Jaccard distances]{%
\includegraphics[height=5cm]{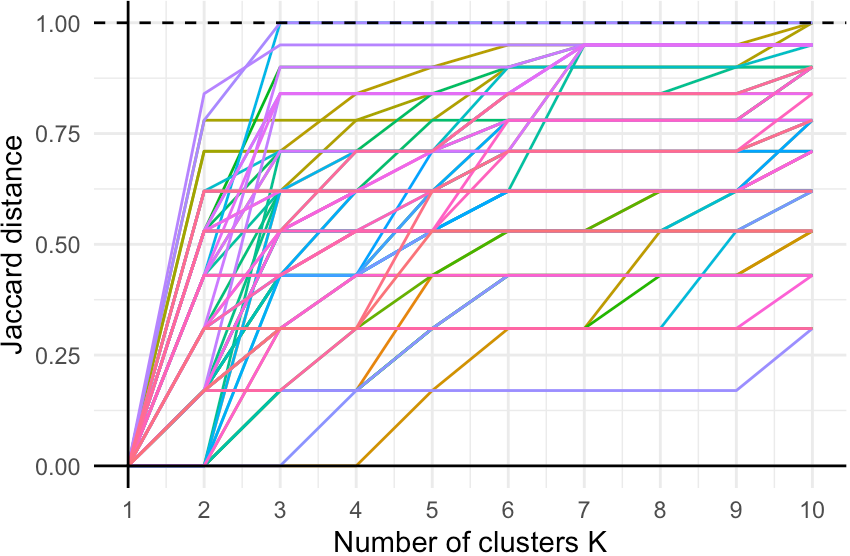}%
\label{fig:Jaccard2_a}}
\subfigure[Normalized AUCs]{%
\includegraphics[height=5cm]{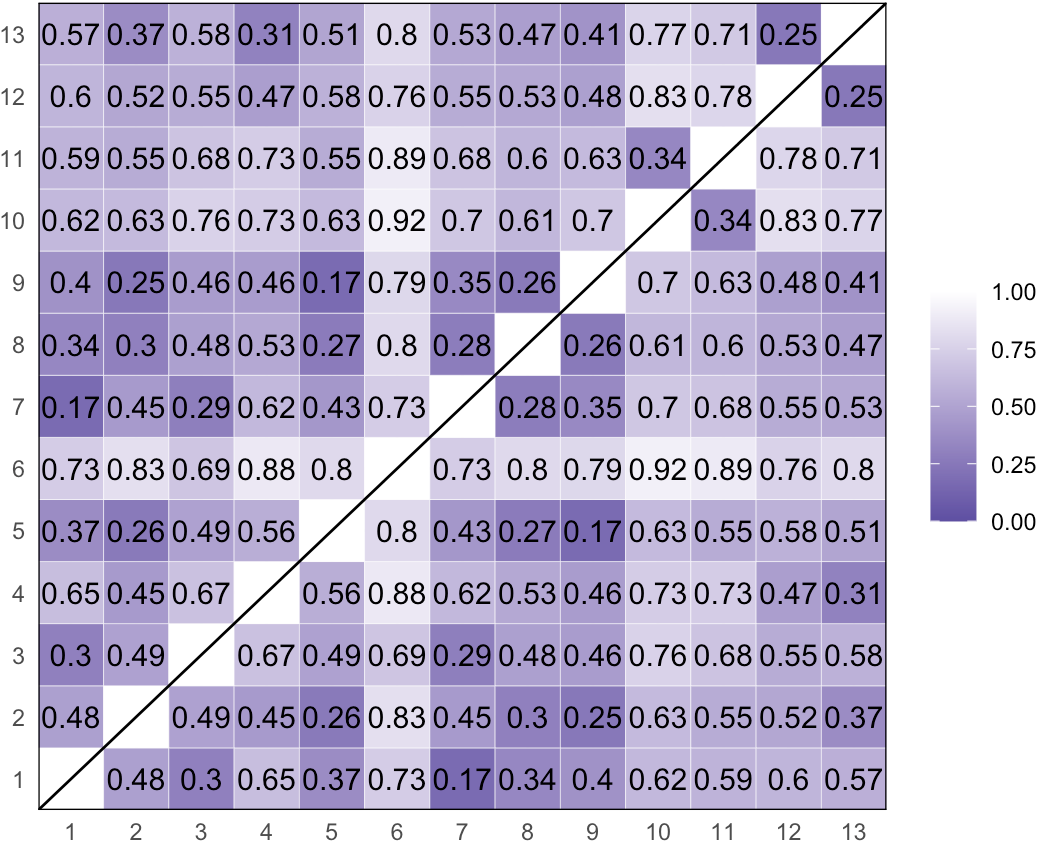}%
\label{fig:Jaccard2_b}}
\caption{Panel~(a) shows the Jaccard distances for all pairs of labels when 
we tune $K$ from $1$ to $K_{\max}=10$. 
The dashed line represents the maximum value, $1$, that the Jaccard distance can achieve. We have total $\binom{13}{2}=78$ lines in Panel~(a) to represent all pairs of these $13$ labels in different colors.
Panel~(b) shows the normalized AUCs for all pairs of labels.}
\label{fig:Jaccard2}
\end{figure}

\subsection{Additional results for trade dynamic networks}\label{sec:supp_trade}

To complement the hierarchical clustering result in Figure~\ref{fig:food} of the main text, we also apply CMDS to visualize the distance relationships between products based on $\md^{\star}$, as shown in Figure~\ref{fig:food_embedding}.

\begin{figure}[htbp] 
\centering
\includegraphics[width=9cm]{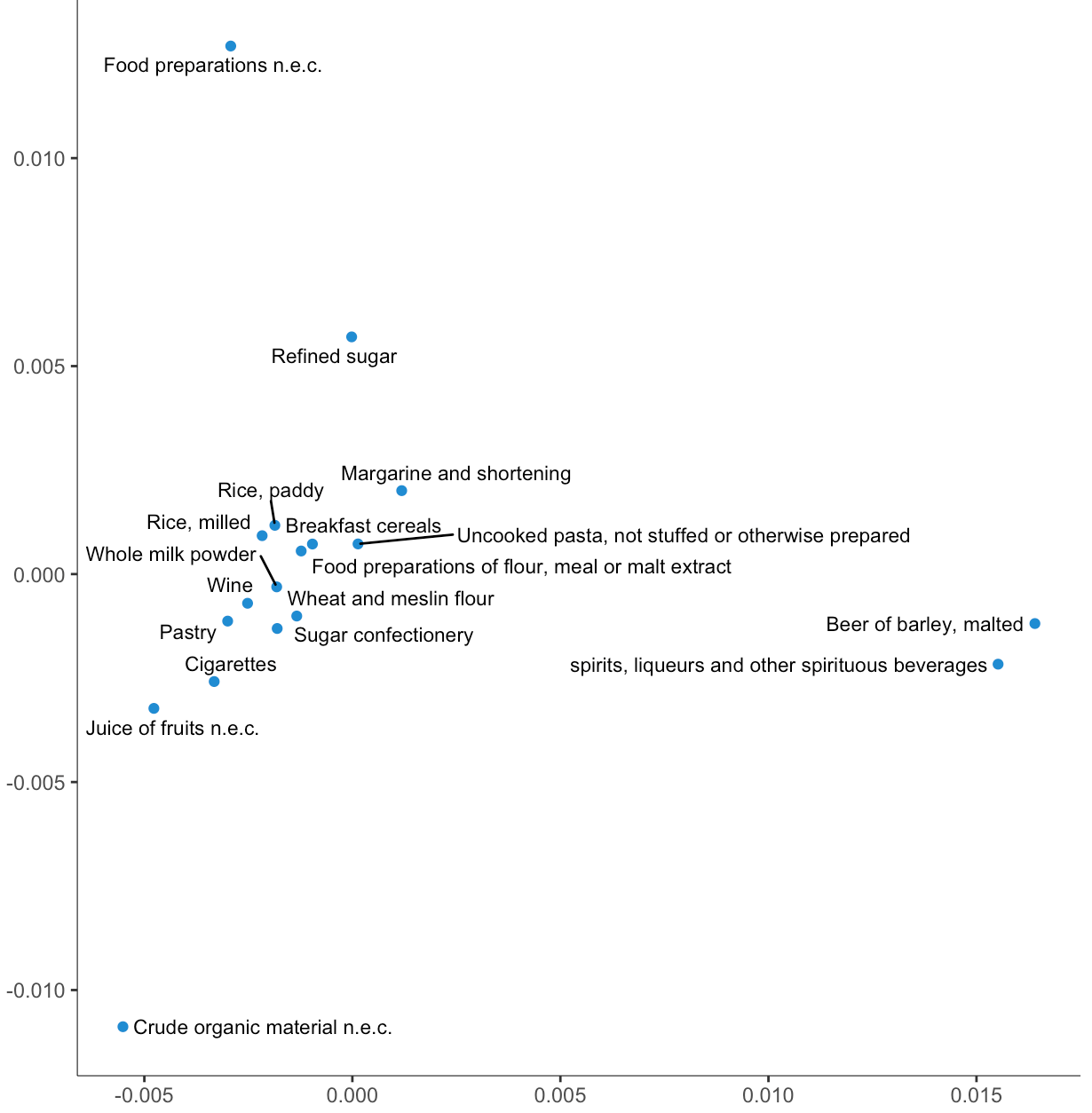}
\caption{$2$-dimensional embeddings based on $\md^\star$ for product trading patterns.}
\label{fig:food_embedding}
\end{figure}

\section{Proof of Main Results}

\subsection{Proof of Theorem~\ref{thm:same D}}
Because the dynamic networks $i,j$ are the same, for any $t\in[T]$ we have $\mx^{(i)}_t\mx^{(i)\top}_t=\mpp^{(i)}_t=\mpp^{(j)}_t=\mx^{(j)}_t\mx^{(j)\top}_t$, so there exists $\mw^{(i,j;t)}\in\mathcal{O}_d$ such that $\mx^{(i)}_t\mw^{(i,j;t)}=\mx^{(j)}_t$.

For any $t_1,t_2\in{T}$, we first prove $\md^{(j)}_{t_1,t_2}\geq\md^{(j)}_{t_1,t_2}$.
Let $\tilde\mw^{(j;t_1,t_2)}:=\underset{\mo\in\mathcal{O}_d}{\operatorname{argmin}}\|\mx^{(j)}_{t_1}\mo-\mx^{(j)}_{t_2}\|_F$.
Then we have
$$
\begin{aligned}
	\md^{(j)}_{t_1,t_2}
    &=n^{-1/2}\|\mx^{(j)}_{t_1}\tilde\mw^{(j;t_1,t_2)}-\mx^{(j)}_{t_2}\|_F\\
    &=n^{-1/2}\|\mx^{(i)}_{t_1}\mw^{(i,j;t_1)}\tilde\mw^{(j;t_1,t_2)}-\mx^{(i)}_{t_2}\mw^{(i,j;t_2)}\|_F\\
    &=n^{-1/2}\|\mx^{(i)}_{t_1}\mw^{(i,j;t_1)}\tilde\mw^{(j;t_1,t_2)}\mw^{(i,j;t_2)\top}-\mx^{(i)}_{t_2}\|_F
    \geq n^{-1/2}\min_{\mo\in\mathcal{O}_d}\|\mx^{(i)}_{t_1}\mo-\mx^{(i)}_{t_2}\|_F
    =\md^{(i)}_{t_1,t_2}
\end{aligned}
$$
because $\mw^{(i,j;t_1)}\tilde\mw^{(j;t_1,t_2)}\mw^{(i,j;t_2)\top}\in\mathcal{O}_d$.
With the identical analysis, by $\mx^{(i)}_t=\mx^{(j)}_t\mw^{(i,j;t)\top}$ we also have $\md^{(j)}_{t_1,t_2}\leq\md^{(j)}_{t_1,t_2}$. So finally we have $\md^{(j)}_{t_1,t_2}=\md^{(j)}_{t_1,t_2}$ for any $t_1,t_2\in{T}$, and therefore we have $\md^{(i)}=\md^{(j)}$.
\hspace*{\fill} $\square$

\subsection{Proof of Theorem~\ref{thm:same M}}

Because $\md^{(i)}=\md^{(j)}$, we have $\mb^{(i)}=\mb^{(j)}$. 
Consider two eigen-decompositions of $\mb^{(i)}$ that may be different $$\mb^{(i)}=\muu^{(i)}\mLambda^{(i)}\muu^{(i)\top}+\muu^{(i)}_\perp\mLambda^{(i)}_\perp\muu^{(i)\top}_\perp=\tilde\muu^{(i)}\mLambda^{(i)}\tilde\muu^{(i)\top}+\tilde\muu^{(i)}_\perp\mLambda^{(i)}_\perp\tilde\muu^{(i)\top}_\perp,$$ and let $\mm^{(i)}=\muu^{(i)}(\mLambda^{(i)})^{1/2}$, $\mm^{(j)}=\tilde\muu^{(i)}(\mLambda^{(i)})^{1/2}$.
Because $\lambda_r(\mb^{(i)})>\lambda_{r+1}(\mb^{(i)})$, $\mLambda^{(i)}$ contains different eigenvalues with $\mLambda^{(i)}_\perp$. Therefore $\text{span}(\muu^{(i)})$ and $\text{span}(\tilde\muu^{(i)})$ are the direct sum of the eigenspaces for all eigenvalues in $\mLambda^{(i)}$, where $\text{span}(\muu)$ denote the span of columns of $\muu$. It means that $\text{span}(\muu^{(i)})$ and $\text{span}(\tilde\muu^{(i)})$ must be the same, so their projection matrices are the same, i.e. $$\muu^{(i)}(\muu^{(i)\top}\muu^{(i)})^{-1}\muu^{(i)\top}=\tilde\muu^{(i)}(\tilde\muu^{(i)\top}\tilde\muu^{(i)})^{-1}\tilde\muu^{(i)\top}.$$
It follows that
$$
\muu^{(i)}\mLambda^{(i)}\muu^{(i)\top}
=\muu^{(i)}(\muu^{(i)\top}\muu^{(i)})^{-1}\muu^{(i)\top}\mb^{(i)}
=\tilde\muu^{(i)}(\tilde\muu^{(i)\top}\tilde\muu^{(i)})^{-1}\tilde\muu^{(i)\top}\mb^{(i)}
=\tilde\muu^{(i)}\mLambda^{(i)}\tilde\muu^{(i)\top}.
$$
Therefore $\mm^{(i)}\mm^{(i)\top}=\mm^{(j)}\mm^{(j)\top}$ and there exists $\mw_\mm^{(i,j)}\in\mathcal{O}_r$ such that $\mm^{(i)}\mw_\mm^{(i,j)}=\mm^{(j)}$.
\hspace*{\fill} $\square$

\subsection{Proof of Theorem~\ref{thm:D error}}

For ease of exposition we will fix a value of $i\in[m]$ and thereby drop the index $i$ from our matrices.

For any fixed $t_1,t_2\in[T]$, for the latent position $\mx_{t_1}$, $\mx_{t_2}$ and their corresponding estimates $\hat\mx_{t_1}$, $\hat\mx_{t_2}$ we define the following orthogonal matrices
$$
\begin{aligned}
	&\mw:=\underset{\mo\in\mathcal{O}_d}{\operatorname{argmin}}\|\mx_{t_1}\mo-\mx_{t_2}\|_F,\quad
	\hat\mw:=\underset{\mo\in\mathcal{O}_d}{\operatorname{argmin}}\|\hat\mx_{t_1}\mo-\hat\mx_{t_2}\|_F,\\
	&\mw^{(t_1)}:=\underset{\mo\in\mathcal{O}_d}{\operatorname{argmin}}\|\hat\mx_{t_1}\mo-\mx_{t_1}\|_F,\quad
	\mw^{(t_2)}:=\underset{\mo\in\mathcal{O}_d}{\operatorname{argmin}}\|\hat\mx_{t_2}\mo-\mx_{t_2}\|_F.
\end{aligned}
$$
Then for $|\hat\md_{t_1,t_2}-\md_{t_1,t_2}|$ we have
\begin{equation}\label{eq:D-D=...}
	\begin{aligned}
	n^{1/2} |\hat\md_{t_1,t_2}-\md_{t_1,t_2}|
	&= \big|\|\hat\mx_{t_1}\hat\mw-\hat\mx_{t_2}\|_F
	-\|\mx_{t_1}\mw-\mx_{t_2}\|_F
	\big|\\
	&=\big|\|\hat\mx_{t_1}\hat\mw\mw^{(t_2)}-\hat\mx_{t_2}\mw^{(t_2)}\|_F
	-\|\mx_{t_1}\mw-\mx_{t_2}\|_F\big|\\
	&\leq \|\hat\mx_{t_2}\mw^{(t_2)}-\mx_{t_2}\|_F
	+\|\hat\mx_{t_1}\hat\mw\mw^{(t_2)}-\mx_{t_1}\mw\|_F\\
	&\leq \|\hat\mx_{t_2}\mw^{(t_2)}-\mx_{t_2}\|_F
	+\|\hat\mx_{t_1}\mw^{(t_1)}-\mx_{t_1}\mw\mw^{(t_2)\top}\hat\mw^{\top}\mw^{(t_1)}\|_F\\
	&\leq \|\hat\mx_{t_2}\mw^{(t_2)}-\mx_{t_2}\|_F
	+\|\hat\mx_{t_1}\mw^{(t_1)}-\mx_{t_1}+\mx_{t_1}(\mi-\mw\mw^{(t_2)\top}\hat\mw^{\top}\mw^{(t_1)})\|_F\\
	&\leq \|\hat\mx_{t_2}\mw^{(t_2)}-\mx_{t_2}\|_F
	+\|\hat\mx_{t_1}\mw^{(t_1)}-\mx_{t_1}\|_F
	+\|\mx_{t_1}\|_F\cdot\|\mi-\mw\mw^{(t_2)\top}\hat\mw^{\top}\mw^{(t_1)}\|\\
	&\leq \|\hat\mx_{t_2}\mw^{(t_2)}-\mx_{t_2}\|_F
	+\|\hat\mx_{t_1}\mw^{(t_1)}-\mx_{t_1}\|_F
	+\|\mx_{t_1}\|_F\cdot\|\mw-\mw^{(t_1)\top}\hat\mw\mw^{(t_2)}\|.
\end{aligned}
\end{equation}
For $\hat\mx_{t_1}\mw^{(t_1)}-\mx_{t_1}$ and $\hat\mx_{t_2}\mw^{(t_2)}-\mx_{t_2}$, according to Theorem~2.1 in \cite{tang2018limit} {\color{black}under the assumption $n\rho_n =\Omega(\log n)$} 
we have
\begin{equation}\label{eq:||X-XW||}
\begin{aligned}
	\|\hat\mx_{t_1}\mw^{(t_1)}-\mx_{t_1}\|_F\lesssim 1
	\text{ and }
	\|\hat\mx_{t_2}\mw^{(t_2)}-\mx_{t_2}\|_F\lesssim 1
\end{aligned}
\end{equation}
with high probability.
For $\mx_{t_1}$ and $\mx_{t_2}$, because $\|\mpp_{t_1}\|\asymp n\rho_n$ and $\|\mpp_{t_2}\|\asymp n\rho_n$ we have 
\begin{equation}\label{eq:||X||}
	\begin{aligned}
		\|\mx_{t_1}\|\asymp  (n\rho_n)^{1/2}
\text{ and }\|\mx_{t_2}\|\asymp  (n\rho_n)^{1/2}.
	\end{aligned}	
\end{equation}
We now bound $\mw-\mw^{(t_1)\top}\hat\mw\mw^{(t_2)}$. Notice
$$
\begin{aligned}
&\mw^{(t_1)\top}\hat\mw\mw^{(t_2)}
=\underset{\mo\in\mathcal{O}_d}{\operatorname{argmin}}\|\hat\mx_{t_1}\mw^{(t_1)}\mo-\hat\mx_{t_2}\mw^{(t_2)}\|_F,\\
&\mw=\underset{\mo\in\mathcal{O}_d}{\operatorname{argmin}}\|\mx_{t_1}\mo-\mx_{t_2}\|_F.
\end{aligned}
$$
We therefore have, by perturbation bounds for polar decompositions, that
\begin{equation}
 \label{eq:rencang} 
  \|\mw-\mw^{(t_1)\top}\hat\mw\mw^{(t_2)}\|
\leq \frac{2\|\mw^{(t_1)\top}\hat\mx_{t_1}^\top\hat\mx_{t_2}\mw^{(t_2)}-\mx_{t_1}^\top\mx_{t_2}\|}
{\sigma_{\min}(\mx_{t_1}^\top\mx_{t_2})}. 
\end{equation}
Indeed, $\mx_{t_1}^\top\mx_{t_2}$ is invertible.
Now suppose $\|\mw^{(t_1)\top}\hat\mx_{t_1}^\top\hat\mx_{t_2}\mw^{(t_2)}-\mx_{t_1}^\top\mx_{t_2}\| <  \sigma_{\min}(\mx_{t_1}^\top\mx_{t_2})$. Then 
$\hat\mx_{t_1}^\top\hat\mx_{t_2}$ is also invertible and hence, by Theorem~1 in \cite{rencang} we have
$$
\begin{aligned}
	 \|\mw-\mw^{(t_1)\top}\hat\mw\mw^{(t_2)}\| &\leq \frac{2\|\mw^{(t_1)\top}\hat\mx_{t_1}^\top\hat\mx_{t_2}\mw^{(t_2)}-\mx_{t_1}^\top\mx_{t_2}\|}{\sigma_{\min}(\hat\mx_{t_1}^\top\hat\mx_{t_2}) + \sigma_{\min}(\mx_{t_1}^\top\mx_{t_2})} \leq \frac{2\|\mw^{(t_1)\top}\hat\mx_{t_1}^\top\hat\mx_{t_2}\mw^{(t_2)}-\mx_{t_1}^\top\mx_{t_2}\|}{\sigma_{\min}(\mx_{t_1}^\top\mx_{t_2})}.
\end{aligned}
$$
Otherwise if $\|\mw^{(t_1)\top}\hat\mx_{t_1}^\top\hat\mx_{t_2}\mw^{(t_2)}-\mx_{t_1}^\top\mx_{t_2}\| \geq  \sigma_{\min}(\mx_{t_1}^\top\mx_{t_2})$
then, as $\|\mw-\mw^{(t_1)\top}\hat\mw\mw^{(t_2)}\| \leq 2$, Eq.~\eqref{eq:rencang} holds trivially. 
By Eq.~\eqref{eq:||X-XW||} and Eq.~\eqref{eq:||X||} we have
$$
\begin{aligned}
	\|\mw^{(t_1)\top}\hat\mx_{t_1}^\top\hat\mx_{t_2}\mw^{(t_2)}-\mx_{t_1}^\top\mx_{t_2}\|
	=&\|(\hat\mx_{t_1}\mw^{(t_1)}-\mx_{t_1})^\top(\hat\mx_{t_2}\mw^{(t_2)}-\mx_{t_2})
	\\+&(\hat\mx_{t_1}\mw^{(t_1)}-\mx_{t_1})^\top\mx_{t_2}
	+\mx_{t_1}^\top(\hat\mx_{t_2}\mw^{(t_2)}-\mx_{t_2})\|\\
	\leq &\|\hat\mx_{t_1}\mw^{(t_1)}-\mx_{t_1}\|_F\cdot\|\hat\mx_{t_2}\mw^{(t_2)}-\mx_{t_2}\|_F
	\\+&\|\hat\mx_{t_1}\mw^{(t_1)}-\mx_{t_1}\|_F\cdot\|\mx_{t_2}\|
	+\|\mx_{t_1}\|\cdot\|\hat\mx_{t_2}\mw^{(t_2)}-\mx_{t_2}\|_F\\
	\lesssim &1\cdot 1+1\cdot  (n\rho_n)^{1/2}+  (n\rho_n)^{1/2}\cdot 1
	\lesssim  (n\rho_n)^{1/2}
\end{aligned}
$$
with high probability.
Notice we have $\lambda_k(\mpp_{t_1})\asymp n\rho_n$ and $\lambda_k(\mpp_{t_2})\asymp n\rho_n$ for any $k\in[d]$, it follows that $\sigma_k(\mx_{t_1}^\top\mx_{t_2})\asymp n\rho_n $ for any $k\in[d]$. Then by Eq.~\eqref{eq:rencang} we have
\begin{equation}\label{eq:|W-WWW|}
	\begin{aligned}
		\|\mw-\mw^{(t_1)\top}\hat\mw\mw^{(t_2)}\|
\lesssim \frac{(n\rho_n)^{1/2}}
{n \rho_n}
\lesssim  (n\rho_n)^{-1/2}
	\end{aligned}
\end{equation}
with high probability.
Combining Eq.~\eqref{eq:D-D=...}, Eq.~\eqref{eq:||X-XW||}, Eq.~\eqref{eq:||X||} and Eq.~\eqref{eq:|W-WWW|} we have
$$
 n^{1/2}|\hat\md_{t_1,t_2}-\md_{t_1,t_2}|
\lesssim 1+ 1+ (n\rho_n)^{1/2}\cdot (n\rho_n)^{-1/2}
\lesssim 1
$$
with high probability. Therefore the desired result for $\|\hat\md-\md\|_{\max}$ is obtained.
\hspace*{\fill} $\square$

\subsection{Proof of Theorem~\ref{thm:M error}}
For ease of exposition we will fix a value of $i\in[m]$ and thereby drop the index $i$ from our matrices.

For any orthogonal matrix $\mw$ we have
\begin{equation*}
	\begin{aligned}
		\hat\mm\mw-\mm
		&=\hat\muu\hat\mLambda^{1/2}\mw-\muu\mLambda^{1/2}\\
		&=\muu\mw^\top(\hat\mLambda^{1/2}\mw-\mw\mLambda^{1/2})+(\hat\muu\mw-\muu)\mw^\top\hat\mLambda^{1/2}\mw.
	\end{aligned}
\end{equation*}
Then by Lemma~\ref{lemma:Lambda1/2W-WLambda1/2} there exists orthogonal matrix $\mw$ such that
	$$
	\begin{aligned}
		\|\hat\mm\mw-\mm\|_{F}
		&\leq r^{1/2}[\|\hat\mLambda^{1/2}\mw-\mw\mLambda^{1/2}\|
		+\|\hat\muu\mw-\muu\|\cdot \|\hat\mLambda^{1/2}\|]\\
		&\lesssim \frac{T(n\rho_n)^{1/2}}{n \lambda_r^{1/2}(\mb^{(i)})}
		\Big(1+\frac{T(n\rho_n)^{1/2}\lambda_1(\mb)}{n \lambda_r^{2}(\mb^{(i)})}\Big)
		+\frac{T(n\rho_n)^{1/2}}{n \lambda_r(\mb^{(i)})}
		\cdot \lambda_1^{1/2}(\mb)\\
		&\lesssim \frac{T(n\rho_n)^{1/2}\lambda_1^{1/2}(\mb^{(i)})}{n \lambda_r(\mb^{(i)})}
		\Big(1+\frac{T(n\rho_n)^{1/2}\lambda_1^{1/2}(\mb)}{n \lambda_r^{3/2}(\mb^{(i)})}\Big)
	\end{aligned}
	$$
	with high probability, and the desired result of $\hat\mm$ is obtained.
\hspace*{\fill} $\square$

\begin{lemma}\label{lemma:Lambda1/2W-WLambda1/2}
Consider the setting of Theorem~\ref{thm:M error}.
Then for $\hat\mb^{(i)}$ we have
\begin{equation}\label{eq:lambda B}
	\begin{aligned}
	    &\lambda_k(\hat\mb^{(i)})\asymp \lambda_k(\mb^{(i)})\text{ for }k=1,\dots,r,\\&\lambda_{k}(\hat\mb^{(i)})\lesssim T n^{-1}(n\rho_n)^{1/2}\text{ for }k=r+1,\dots,T
	\end{aligned}
\end{equation}
	with high probability,
and for $\hat\mLambda^{(i)}$ and $\hat\muu^{(i)}$ there exists orthogonal matrix $\mw$ such that
$$
	\begin{aligned}
	    &\|\hat\muu^{(i)}\mw-\muu^{(i)}\|\lesssim \frac{T(n\rho_n)^{1/2}}{n \lambda_r(\mb^{(i)})},\\
		&\|(\hat\mLambda^{(i)})^{1/2}\mw-\mw(\mLambda^{(i)})^{1/2}\|
		\lesssim \frac{T(n\rho_n)^{1/2}}{n \lambda_r^{1/2}(\mb^{(i)})}
		\Big(1+\frac{T(n\rho_n)^{1/2}\lambda_1(\mb)}{n \lambda_r^{2}(\mb^{(i)})}\Big)
	\end{aligned}
	$$
	with high probability.
\end{lemma}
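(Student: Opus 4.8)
The plan is to reduce everything to a spectral-norm bound on $\hat\mb^{(i)}-\mb^{(i)}$ and then invoke standard matrix perturbation theory; as elsewhere in the proof I suppress the superscript $i$. \textbf{Step 1 (perturbation of the doubly centered matrix).} Since $\mb=-\tfrac12\mj\md^{\circ 2}\mj$ and likewise for $\hat\mb$, and $\mj$ is an orthogonal projection with $\|\mj\|=1$, we have $\|\hat\mb-\mb\|\le\tfrac12\|\hat\md^{\circ 2}-\md^{\circ 2}\|\le\tfrac12 T\,\|\hat\md^{\circ 2}-\md^{\circ 2}\|_{\max}$ for these $T\times T$ matrices. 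Entrywise, $(\hat\md_{t_1,t_2})^2-(\md_{t_1,t_2})^2=(\hat\md_{t_1,t_2}-\md_{t_1,t_2})(\hat\md_{t_1,t_2}+\md_{t_1,t_2})$; the bounded-condition-number assumption on $\mpp_t$ gives $\|\mx_t\|_F^2=\operatorname{tr}(\mpp_t)\asymp n\rho_n$, whence $\md_{t_1,t_2}\le n^{-1/2}(\|\mx_{t_1}\|_F+\|\mx_{t_2}\|_F)\lesssim\rho_n^{1/2}$, and by \cref{thm:D error} also $\hat\md_{t_1,t_2}\lesssim\rho_n^{1/2}$ with high probability (using $n\rho_n\gg\log n$, so $n^{-1/2}\lesssim\rho_n^{1/2}$). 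Then \cref{thm:D error} yields $|(\hat\md_{t_1,t_2})^2-(\md_{t_1,t_2})^2|\lesssim n^{-1/2}\rho_n^{1/2}=n^{-1}(n\rho_n)^{1/2}$, so $\|\hat\mb-\mb\|\lesssim Tn^{-1}(n\rho_n)^{1/2}$ with high probability.

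\textbf{Step 2 (eigenvalues and eigenvectors).} Weyl's inequality gives $|\lambda_k(\hat\mb)-\lambda_k(\mb)|\le\|\hat\mb-\mb\|\lesssim Tn^{-1}(n\rho_n)^{1/2}$ for every $k$. For $k\le r$ the assumption $\lambda_r(\mb)=\omega(Tn^{-1}(n\rho_n)^{1/2})$ makes this perturbation $o(\lambda_k(\mb))$, so $\lambda_k(\hat\mb)\asymp\lambda_k(\mb)$; in particular $\lambda_r(\hat\mb)>0$, so $\hat\mLambda^{1/2}$ is well-defined. For $k\ge r+1$, $\lambda_k(\hat\mb)\le\lambda_{r+1}(\hat\mb)\le\lambda_{r+1}(\mb)+\|\hat\mb-\mb\|\lesssim Tn^{-1}(n\rho_n)^{1/2}$ using $\lambda_{r+1}(\mb)=O(Tn^{-1}(n\rho_n)^{1/2})$; this is \eqref{eq:lambda B}. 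The separation between $\{\lambda_1(\mb),\dots,\lambda_r(\mb)\}$ and the remaining eigenvalues is at least $\lambda_r(\mb)-\lambda_{r+1}(\hat\mb)\asymp\lambda_r(\mb)$, so the Davis--Kahan $\sin\Theta$ theorem gives $\|\sin\Theta(\hat\muu,\muu)\|\lesssim\|\hat\mb-\mb\|/\lambda_r(\mb)$; taking $\mw:=\arg\min_{\mo\in\mathcal{O}_r}\|\hat\muu\mo-\muu\|_F$ and using $\|\hat\muu\mw-\muu\|\le\sqrt2\,\|\sin\Theta(\hat\muu,\muu)\|$ yields the first displayed bound $\|\hat\muu\mw-\muu\|\lesssim T(n\rho_n)^{1/2}/(n\lambda_r(\mb))$.

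\textbf{Step 3 (square roots of eigenvalues).} Write $\ms:=\hat\muu\mw$, $\me:=\ms-\muu$, so $\epsilon:=\|\me\|\lesssim T(n\rho_n)^{1/2}/(n\lambda_r(\mb))$ by Step 2. From $\hat\mLambda=\hat\muu^\top\hat\mb\hat\muu$ and $\mLambda=\muu^\top\mb\muu$ one gets $\mw^\top\hat\mLambda\mw-\mLambda=\ms^\top(\hat\mb-\mb)\ms+\muu^\top\mb\me+\me^\top\mb\muu+\me^\top\mb\me$. The first term is $\le\|\hat\mb-\mb\|$; for the cross terms use $\muu^\top\mb=\mLambda\muu^\top$ together with the second-order estimate $\|\me^\top\muu\|=\|\mw^\top\hat\muu^\top\muu-\mi\|\lesssim\|\sin\Theta(\hat\muu,\muu)\|^2\lesssim\epsilon^2$ (valid because, for the Procrustes $\mw$, the matrix $\mw^\top\hat\muu^\top\muu$ is symmetric positive semidefinite with eigenvalues $\cos\theta_i$), which bounds them by $\lesssim\lambda_1(\mb)\epsilon^2$; and $\|\me^\top\mb\me\|\lesssim\lambda_1(\mb)\epsilon^2$. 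Hence $\|\mw^\top\hat\mLambda\mw-\mLambda\|\lesssim Tn^{-1}(n\rho_n)^{1/2}\bigl(1+\lambda_1(\mb)Tn^{-1}(n\rho_n)^{1/2}/\lambda_r^2(\mb)\bigr)$. Finally, $F:=\hat\mLambda^{1/2}\mw-\mw\mLambda^{1/2}$ satisfies the Sylvester identity $\hat\mLambda^{1/2}F+F\mLambda^{1/2}=\hat\mLambda\mw-\mw\mLambda$; since $\hat\mLambda^{1/2},\mLambda^{1/2}$ are positive definite with smallest eigenvalue $\gtrsim\lambda_r^{1/2}(\mb)$, this gives $\|F\|\lesssim\|\hat\mLambda\mw-\mw\mLambda\|/\lambda_r^{1/2}(\mb)=\|\mw^\top\hat\mLambda\mw-\mLambda\|/\lambda_r^{1/2}(\mb)$, which after substitution is exactly the claimed second bound.

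\textbf{Expected main obstacle.} The delicate part is Step 3: a naive first-order expansion only gives $\|\mw^\top\hat\mLambda\mw-\mLambda\|\lesssim\lambda_1(\mb)\epsilon$, which is too crude, so one must exploit the orthogonality relation $\mb\muu=\muu\mLambda$ together with the genuinely second-order bound $\|\me^\top\muu\|\lesssim\epsilon^2$ to push the cross terms down to order $\lambda_1(\mb)\epsilon^2$, and then pass carefully from a bound on $\hat\mLambda-\mw\mLambda\mw^\top$ to one on $\hat\mLambda^{1/2}-\mw\mLambda^{1/2}\mw^\top$ through the Sylvester equation while keeping track of which orthogonal matrix aligns which pair of objects.
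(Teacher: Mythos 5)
Your overall route is the same as the paper's: an entrywise bound on $\hat\md^{\circ 2}-\md^{\circ 2}$ from \cref{thm:D error} (using $\md_{t_1,t_2},\hat\md_{t_1,t_2}\lesssim\rho_n^{1/2}$) giving $\|\hat\mb-\mb\|\lesssim Tn^{-1}(n\rho_n)^{1/2}$, then Weyl for the eigenvalues, a Davis--Kahan/Wedin $\sin\Theta$ bound plus the second-order fact $\|\mw^\top\hat\muu^\top\muu-\mi\|\le\|\sin\Theta\|^2$ for the eigenvectors, and finally an alignment bound for the eigenvalue matrices converted to square roots. Your two cosmetic deviations are harmless and essentially equivalent to the paper's steps: the paper passes from $\hat\mLambda\mw-\mw\mLambda$ to $\hat\mLambda^{1/2}\mw-\mw\mLambda^{1/2}$ by a Hadamard product with the matrix of $(\sqrt{\lambda_\ell(\hat\mb)}+\sqrt{\lambda_k(\mb)})^{-1}$, where you use the Sylvester identity $\hat\mLambda^{1/2}F+F\mLambda^{1/2}=\hat\mLambda\mw-\mw\mLambda$; both give the same $\lambda_r^{-1/2}$ factor.

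The one step that is not justified as written is in your Step 3: you bound $\|\me^\top\mb\me\|\le\|\mb\|\,\|\me\|^2$ and then silently replace $\|\mb\|$ by $\lambda_1(\mb)$. But $\mb=-\tfrac12\mj\md^{\circ2}\mj$ is a doubly centered dissimilarity matrix and need not be positive semidefinite; the hypotheses of \cref{thm:M error} control only $\lambda_r(\mb)$ and $\lambda_{r+1}(\mb)$, not the magnitude of the most negative eigenvalue, so $\|\mb\|$ can in principle exceed $\lambda_1(\mb)$ (one only gets $|\lambda_T(\mb)|\lesssim \lambda_1(\mb)+T\delta$ from $\operatorname{tr}(\mb)\ge 0$, and the extra $T\delta\,\epsilon^2$ term is not always dominated by the claimed bound). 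The paper sidesteps this entirely by using the commutator identity $\mLambda\muu^\top\hat\muu-\muu^\top\hat\muu\hat\mLambda=-\muu^\top(\hat\mb-\mb)\hat\muu$, so that only $\|\hat\mb-\mb\|$ and the top-$r$ blocks $\|\mLambda\|,\|\hat\mLambda\|$ ever appear. Your argument is easily patched in the same spirit: write $\mb\me=\hat\muu\hat\mLambda\mw-(\hat\mb-\mb)\hat\muu\mw-\muu\mLambda$, so that $\|\me^\top\mb\me\|\lesssim \lambda_1(\mb)\|\sin\Theta\|^2+\|\hat\mb-\mb\|\,\|\me\|$, which is within the claimed rate; with that repair (or by switching to the paper's identity) your proof is complete.
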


\begin{proof}
For ease of exposition we will fix a value of $i\in[m]$ and thereby drop the index $i$ from our matrices.

Let $\me_\mb:=\hat\mb-\mb$ and $\delta= T n^{-1}(n\rho_n)^{1/2}$. Then according to Lemma~\ref{lemma:||hatB-B||} we have $\|\me_\mb\|\leq \delta$ with high probability.
Then by perturbation theorem for singular values (see Problem~III.6.13 in \cite{horn2012matrix}) we have $$|\lambda_k(\hat\mb)-\lambda_k(\mb)|\leq \|\me_\mb\|\lesssim \delta$$
with high probability for all $k\in[T]$. Therefore {\color{black} under the assumption $\lambda_r(\mb)=\omega(\delta)$ and $\lambda_{r+1}(\mb)=O(\delta)$} we have the desired result of the eigenvalues of $\hat\mb^{(i)}$.

	By Eq.~\eqref{eq:lambda B} and Wedin's $\sin\Theta$ Theorem (see e.g., Theorem~4.4 in Chapter~4 of  \cite{stewart1990matrix}) we have
	$$
	\begin{aligned}
		\|\sin\Theta(\hat\muu,\muu)\|
		\leq \frac{\|\me_\mb\|}{\lambda_r(\hat\mb)-\lambda_{r+1}(\mb)}
		\lesssim \frac{\delta}{\lambda_r(\mb)}
	\end{aligned}
	$$
	with high probability.
	Therefore there exists orthogonal matrix $\mw$ such that
	\begin{equation}\label{eq:UU-W}
		\begin{aligned}
		&\|\muu^{\top}\hat\muu-\mw^{\top}\|
    	\leq \|\sin\Theta(\hat\muu,\muu)\|^2
    	\lesssim \frac{\delta^2}{\lambda_r^2(\mb)},\\
    	&\|\hat\muu\mw-\muu\|
    	\leq \|\sin\Theta(\hat\muu,\muu)\|
    	+\|\muu^{\top}\hat\muu-\mw^{\top}\|
    	\lesssim \frac{\delta}{\lambda_r(\mb)}
	\end{aligned}
	\end{equation}
	with high probability.
	
	For $\mLambda\muu^{\top}\hat\muu-\muu^{\top}\hat\muu\hat\mLambda$, we have
	\begin{equation}\label{eq:LambdaUU-UULambda}
		\begin{aligned}
		\|\mLambda\muu^{\top}\hat\muu-\muu^{\top}\hat\muu\hat\mLambda\|
		=\|-\muu^{\top}\me_\mb\hat\muu\|
		\leq\|\me_\mb\|
		\lesssim \delta
	\end{aligned}
	\end{equation}
	with high probability. By Eq.~\eqref{eq:lambda B}, Eq.~\eqref{eq:UU-W} and Eq.~\eqref{eq:LambdaUU-UULambda} we have
	\begin{equation}\label{eq:LambdaW-WLambda}
		\begin{aligned}
		\|\mLambda\mw^{\top}
		-\mw^{\top}\hat\mLambda\|
		&=\|\mLambda(\mw^{\top}-\muu^\top\hat\muu)
		+(\mLambda\muu^\top\hat\muu-\muu^\top\hat\muu\hat\mLambda)
		+(\muu^\top\hat\muu-\mw^{\top})\hat\mLambda\|\\
		&\leq \|\mLambda\muu^{\top}\hat\muu-\muu^{\top}\hat\muu\hat\mLambda\|
		+\|\muu^\top\hat\muu-\mw^{\top}\|\cdot (\|\mLambda\|+\|\hat\mLambda\|)\\
		&\lesssim \delta+\frac{\delta^2}{\lambda_r^2(\mb)}\cdot \lambda_1(\mb)
		\lesssim \delta+\frac{\delta^2\lambda_1(\mb)}{\lambda_r^2(\mb)}
	\end{aligned}
	\end{equation}
	with high probability.
	Then for $\hat\mLambda^{1/2}\mw-\mw\mLambda^{1/2}$, because
	$$
	\begin{aligned}
		\hat\mLambda^{1/2}\mw-\mw\mLambda^{1/2}
		=(\hat\mLambda\mw-\mw\mLambda)\circ \mh,
	\end{aligned}
	$$
	where we define $\mh$ as an $r\times r$ matrix with entries $\mh_{k,\ell}=(\sqrt{\lambda_\ell(\hat\mb)}+\sqrt{\lambda_k(\mb)})^{-1}$ for $k,\ell\in[r]$, and thus $\|\mh\|_{\max}\lesssim \lambda_r^{-1/2}(\mb)$ with high probability. Therefore by Eq.~\eqref{eq:LambdaW-WLambda} we have
	$$
	\begin{aligned}
		\|\hat\mLambda^{1/2}\mw-\mw\mLambda^{1/2}\|
		\leq r^{1/2}\|\hat\mLambda\mw-\mw\mLambda\|\cdot\|\mh\|_{\max}
		\lesssim \frac{\delta}{\lambda_r^{1/2}(\mb)}+\frac{\delta^2\lambda_1(\mb)}{\lambda_r^{5/2}(\mb)}
	\end{aligned}
	$$
	with high probability.
\end{proof}

\begin{lemma}\label{lemma:||hatB-B||}
Consider the setting of Theorem~\ref{thm:M error}.
For the error between $\hat\mb^{(i)}$ and $\mb^{(i)}$ we have
$$
\begin{aligned}
	&\|\hat\mb^{(i)}-\mb^{(i)}\|_{\max}\lesssim n^{-1}(n\rho_n)^{1/2},\\
	&\|\hat\mb^{(i)}-\mb^{(i)}\|\lesssim T n^{-1}(n\rho_n)^{1/2}
\end{aligned}
$$
with high probability.
\end{lemma}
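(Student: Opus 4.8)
The plan is to exploit the explicit double-centering identities $\mb^{(i)}=-\frac12\mj(\md^{(i)})^{\circ 2}\mj$ and $\hat\mb^{(i)}=-\frac12\mj(\hat\md^{(i)})^{\circ 2}\mj$. As in the earlier proofs I would fix $i\in[m]$ and drop the superscript, writing $\mb,\hat\mb,\md,\hat\md,\mx_t$. Then
\[
\hat\mb-\mb=-\tfrac12\,\mj\big(\hat\md^{\circ 2}-\md^{\circ 2}\big)\mj,
\]
so both asserted bounds reduce to an entrywise control of the squared-distance matrix error $\hat\md^{\circ 2}-\md^{\circ 2}$, which will follow from \cref{thm:D error}. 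Since $\md$ is deterministic and the only randomness sits in $\hat\md$, all the estimates below take place on the high-probability event supplied by \cref{thm:D error}.

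First I would bound $\|\hat\md^{\circ 2}-\md^{\circ 2}\|_{\max}$ by factoring, for each pair $t_1,t_2$,
\[
\big|\hat\md_{t_1,t_2}^{2}-\md_{t_1,t_2}^{2}\big|
=\big|\hat\md_{t_1,t_2}-\md_{t_1,t_2}\big|\cdot\big|\hat\md_{t_1,t_2}+\md_{t_1,t_2}\big|.
\]
The first factor is $\lesssim n^{-1/2}$ by \cref{thm:D error}. For the second factor, $\md_{t_1,t_2}\le n^{-1/2}\big(\|\mx_{t_1}\|_F+\|\mx_{t_2}\|_F\big)$, and since $\mpp_t=\mx_t\mx_t^\top$ has rank $d$ with all nonzero eigenvalues $\asymp n\rho_n$ (bounded condition number and $\|\mpp_t\|=\Theta(n\rho_n)$), we get $\|\mx_t\|_F^2=\operatorname{tr}(\mpp_t)\asymp n\rho_n$, hence $\md_{t_1,t_2}\lesssim\rho_n^{1/2}$; together with $n^{-1/2}\lesssim\rho_n^{1/2}$, which holds because $n\rho_n\gg\log n$, the second factor is $\lesssim\rho_n^{1/2}$. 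Multiplying gives $\|\hat\md^{\circ 2}-\md^{\circ 2}\|_{\max}\lesssim n^{-1/2}\rho_n^{1/2}=n^{-1}(n\rho_n)^{1/2}$ with high probability.

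It then remains to pass from $\mathbf{M}:=\hat\md^{\circ 2}-\md^{\circ 2}$ to $\hat\mb-\mb=-\frac12\mj\mathbf{M}\mj$. For the max norm, double centering forms only affine combinations: the $(s,t)$-entry of $\mj\mathbf{M}\mj$ equals $\mathbf{M}_{s,t}$ minus its row average, minus its column average, plus its global average, so $\|\mj\mathbf{M}\mj\|_{\max}\le 4\|\mathbf{M}\|_{\max}$, giving the first claimed bound. For the spectral norm, use $\|\mj\|=1$ together with the crude estimate $\|\mathbf{M}\|\le\|\mathbf{M}\|_F\le T\|\mathbf{M}\|_{\max}$, valid for $T\times T$ matrices, to get $\|\hat\mb-\mb\|\le\frac12\|\mj\|^2\|\mathbf{M}\|\lesssim Tn^{-1}(n\rho_n)^{1/2}$.

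I expect the only mildly delicate point to be the second-factor estimate: one has to check that the cross term $|\hat\md_{t_1,t_2}-\md_{t_1,t_2}|\cdot|\md_{t_1,t_2}|$, of order $n^{-1}(n\rho_n)^{1/2}$, dominates the square term $|\hat\md_{t_1,t_2}-\md_{t_1,t_2}|^2$, of order $n^{-1}$ — which is precisely where the sparsity assumption $n\rho_n\gg\log n$ is used. Everything else (the rank-$d$ eigenvalue computation for $\|\mx_t\|_F$ and the two elementary norm inequalities for the centering operator) is routine and deterministic.
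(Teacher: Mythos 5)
Your proposal is correct and follows essentially the same route as the paper: factor $|\hat\md_{t_1,t_2}^2-\md_{t_1,t_2}^2|$ into $|\hat\md-\md|\cdot|\hat\md+\md|$, control the two factors by \cref{thm:D error} and by $\|\mx_t\|_F\lesssim(n\rho_n)^{1/2}$ (using $n\rho_n\gg\log n$ so the $n^{-1/2}$ error does not dominate), and then push the entrywise bound through the double centering, with the spectral bound obtained by paying a factor $T$. Your explicit "row average/column average" bound and the $\|\mathbf{M}\|\le\|\mathbf{M}\|_F\le T\|\mathbf{M}\|_{\max}$ step are just equivalent restatements of the paper's $\|\mj\|_1,\|\mj\|_\infty$ argument and its max-to-spectral passage.
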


\begin{proof}
For ease of exposition we will fix a value of $i\in[m]$ and thereby drop the index $i$ from our matrices.

	Let $\me_{\md^{\circ 2}}:=\hat\md^{\circ 2}-\md^{\circ 2}$.
	For any $t_1,t_2\in[T]$, we have 
	\begin{equation}\label{eq:Ett}
		\begin{aligned}
			|(\me_{\md^{\circ 2}})_{t_1,t_2}|=|\hat\md_{t_1,t_2}^2-\md_{t_1,t_2}^2|\leq|\hat\md_{t_1,t_2}-\md_{t_1,t_2}|\cdot |\hat\md_{t_1,t_2}+\md_{t_1,t_2}|.
		\end{aligned}
	\end{equation}
	By Theorem~\ref{thm:D error} we have 
	\begin{equation}\label{eq:Dtt-Dtt}
		\begin{aligned}
			|\hat\md_{t_1,t_2}-\md_{t_1,t_2}|\lesssim n^{-1/2}
		\end{aligned}
	\end{equation} with high probability. For $\md_{t_1,t_2}$, because
	$
	|\md_{t_1,t_2}|=n^{-1/2}\|\mx_{t_1}\mw-\mx_{t_2}\|_F
	$ for some $\mw\in\mathcal{O}_d$, by Eq.~\eqref{eq:||X||} we have
	\begin{equation}\label{eq:Dtt}
		\begin{aligned}
			|\md_{t_1,t_2}|\leq n^{-1/2}(\|\mx_{t_1}\|_F+\|\mx_{t_2}\|_F)
	\lesssim n^{-1/2}(n\rho_n)^{1/2}\lesssim \rho_n^{1/2}.
		\end{aligned}
	\end{equation}
	For $\hat\md_{t_1,t_2}$, by Eq.~\eqref{eq:Dtt-Dtt} and Eq.~\eqref{eq:Dtt} we have
	\begin{equation}\label{eq:hatDtt}
		\begin{aligned}
			|\hat\md_{t_1,t_2}|
			\leq |\md_{t_1,t_2}|+|\hat\md_{t_1,t_2}-\md_{t_1,t_2}|
			\lesssim \rho_n^{1/2}+n^{-1/2}
			\lesssim \rho_n^{1/2}
		\end{aligned}
	\end{equation}
	with high probability {\color{black} under the assumption $n\rho_n=\Omega(\log n)$}.
	Combining Eq.~\eqref{eq:Ett}, Eq.~\eqref{eq:Dtt-Dtt}, Eq.~\eqref{eq:Dtt} and Eq.~\eqref{eq:hatDtt} we have
	\begin{equation*}
	\begin{aligned}
		|(\me_{\md^{\circ 2}})_{t_1,t_2}|
		\lesssim n^{-1/2}\cdot \rho_n^{1/2}
		\lesssim n^{-1}(n\rho_n)^{1/2}
	\end{aligned}	
	\end{equation*}
	with high probability. Therefore we have
	\begin{equation}\label{eq:E}
		\begin{aligned}
			\|\me_{\md^{\circ 2}}\|_{\max} 
			\lesssim n^{-1}(n\rho_n)^{1/2}
		\end{aligned}
	\end{equation}
    with high probability.
	For the centering matrix $\mj$, it is easy to know $\|\mj\|_1\lesssim 1$ and $\|\mj\|_\infty\lesssim 1$.
		
	$$
	\begin{aligned}
		\|\hat\mb-\mb\|_{\max}
		=\frac{1}{2}\|\mj\me_{\md^{\circ 2}}\mj\|_{\max}
		\leq \frac{1}{2}\|\mj\|_1\cdot \|\me_{\md^{\circ 2}}\|_{\max}\cdot \|\mj\|_\infty
		\lesssim  1\cdot n^{-1}(n\rho_n)^{1/2}\cdot 1
		\lesssim n^{-1}(n\rho_n)^{1/2}
	\end{aligned}
	$$
	with high probability. It follows that
	$
		\|\hat\mb-\mb\|
		\lesssim T n^{-1}(n\rho_n)^{1/2}
	$
	with high probability.
\end{proof}

\subsection{Proof of Theorem~\ref{thm:Dstar error}}

For any dynamic networks $i,j\in[m]$, by Theorem~\ref{thm:M error} there exist $\mw_{\mm}^{(i)}$ and $\mw_{\mm}^{(j)}\in\mathcal{O}_r$ such that
\begin{equation}\label{eq:hatMW-M}
	\begin{aligned}
	\frac{1}{\sqrt{T}}\|\hat\mm^{(i)}\mw_{\mm}^{(i)}-\mm^{(i)}\|_F\lesssim
 \frac{T^{1/2}(n\rho_n)^{1/2}}{n \lambda_r^{1/2}(\mb^{(i)})},
\quad \frac{1}{\sqrt{T}}\|\hat\mm^{(j)}\mw_{\mm}^{(j)}-\mm^{(j)}\|_F\lesssim
 \frac{T^{1/2}(n\rho_n)^{1/2}}{n \lambda_r^{1/2}(\mb^{(j)})}
\end{aligned}
\end{equation}
with high probability. 

We now bound the error for $\hat\md^\star_{i,j}$ as an estimator of $\md^\star_{i,j}$.
We define the following orthogonal matrices
$$
\begin{aligned}
	&\mw_\mm^{(i,j)}:=\underset{\mo\in\mathcal{O}_r}{\operatorname{argmin}}\|\mm^{(i)}\mo-\mm^{(j)}\|_F,\quad
	\hat\mw_\mm^{(i,j)}:=\underset{\mo\in\mathcal{O}_r}{\operatorname{argmin}}\|\hat\mm^{(i)}\mo-\hat\mm^{(j)}\|_F.
\end{aligned}
$$
Then for $|\hat\md^\star_{i,j}-\md^\star_{i,j}|$, with the identical analysis of Eq.~\eqref{eq:D-D=...} we have
\begin{equation}\label{eq:D-D=...star}
	\begin{aligned}
	|\hat\md^\star_{i,j}-\md^\star_{i,j}|
	&\leq \frac{1}{\sqrt{T}}\|\hat\mm^{(i)}\mw_{\mm}^{(i)}-\mm^{(i)}\|_F
	+\frac{1}{\sqrt{T}}\|\hat\mm^{(j)}\mw_\mm^{(j)}-\mm^{(j)}\|_F\\
	&+\frac{1}{\sqrt{T}}\min\{\|\mm^{(i)}\|_F,\|\mm^{(j)}\|_F\}\cdot\|\mw_\mm^{(i,j)}-\mw_{\mm}^{(i)\top}\hat\mw_\mm^{(i,j)}\mw_\mm^{(j)}\|.
\end{aligned}
\end{equation}
Furthermore, with the identical analysis of Eq.~\eqref{eq:|W-WWW|} and by Eq.~\eqref{eq:hatMW-M} we have
\begin{equation}\label{eq:W-WWWstar}
	\begin{aligned}
		\|\mw_\mm^{(i,j)}-\mw_{\mm}^{(i)\top}\hat\mw_\mm^{(i,j)}\mw_\mm^{(j)}\|
		&\lesssim \big[\|\hat\mm^{(i)}\mw_{\mm}^{(i)}-\mm^{(i)}\|_F
		\cdot \|\hat\mm^{(j)}\mw_{\mm}^{(j)}-\mm^{(j)}\|_F+\|\hat\mm^{(i)}\mw_{\mm}^{(i)}-\mm^{(i)}\|_F
		\cdot \|\mm^{(j)}\|_F\\
		&
		+\|\mm^{(i)}\|_F
		\cdot \|\hat\mm^{(j)}\mw_{\mm}^{(j)}-\mm^{(j)}\|_F\big]
		/[\sigma_{r-1}(\mm^{(i)\top}\mm^{(j)})
		+\sigma_{r}(\mm^{(i)\top}\mm^{(j)})]
		\\
		&\lesssim \Big[\frac{T(n\rho_n)^{1/2}}{n \lambda_r^{1/2}(\mb^{(i)})}
		\cdot \frac{T(n\rho_n)^{1/2}}{n \lambda_r^{1/2}(\mb^{(j)})}
		+\frac{T(n\rho_n)^{1/2}}{n \lambda_r^{1/2}(\mb^{(i)})}\cdot \lambda_r^{1/2}(\mb^{(j)})\\
		&
		+\lambda_r^{1/2}(\mb^{(i)})\cdot \frac{T(n\rho_n)^{1/2}}{n \lambda_r^{1/2}(\mb^{(j)})}\Big]
		\cdot [\lambda_r^{1/2}(\mb^{(i)})\lambda_r^{1/2}(\mb^{(j)})]^{-1}\\
		&\lesssim \frac{T(n\rho_n)^{1/2}}{n \lambda}
	\end{aligned}
\end{equation}
with high probability {\color{black}under the assumption $\frac{T(n\rho_n)^{1/2}}{n  \lambda_r}=O(1)$}.
Therefore by Eq.~\eqref{eq:hatMW-M}, Eq.~\eqref{eq:D-D=...star} and Eq.~\eqref{eq:W-WWWstar} we have
$$
\begin{aligned}
	|\hat\md^\star_{i,j}-\md^\star_{i,j}|
	&\lesssim \frac{T^{1/2}(n\rho_n)^{1/2}}{n \lambda_r^{1/2}(\mb^{(i)})}
	+\frac{T^{1/2}(n\rho_n)^{1/2}}{n \lambda_r^{1/2}(\mb^{(j)})}
	+\frac{1}{\sqrt{T}}\cdot \min\{\lambda_r^{1/2}(\mb^{(i)}),\lambda_r^{1/2}(\mb^{(j)})\}\cdot \frac{T(n\rho_n)^{1/2}}{n\lambda}\\
	&\lesssim \frac{T^{1/2}(n\rho_n)^{1/2}}{n \lambda^{1/2}}
\end{aligned}
$$
with high probability. The desired result of $\hat\md^{\star}$ follows.
\hspace*{\fill} $\square$

\subsection{Proof of Proposition~\ref{prop:M error2}}
	
	Recall that $\{\mathbf{m}^{(i)}_t\}_{t\in[T]}$ approximate the change of $\{\mx_t^{(i)}\}_{t\in[T]}$ we have
	$$\|\mathbf{m}^{(i)}_{t_1}-\mathbf{m}^{(i)}_{t_2}\|^2\asymp (\md^{(i)}_{t_1,t_2})^2=\frac{1}{n}\min_{\mo\in\mathcal{O}_d}\|\mx^{(i)}_{t_1}\mo-\mx^{(i)}_{t_2}\|_F^2\asymp \rho_n \tilde \rho \quad\text{for }t_1\neq t_2.$$
	It follows that
	$$\|\mathbf{m}^{(i)}_{t}\|^2
	=\Big\|\mathbf{m}^{(i)}_{t}-\frac{1}{T}\sum_{t'\in[T]} \mathbf{m}^{(i)}_{t'}\Big\|^2
	\leq \frac{1}{T}\sum_{t'\in[T]} \|\mathbf{m}^{(i)}_{t}-\mathbf{m}^{(i)}_{t'}\|^2
	\asymp \rho_n \tilde \rho
	\quad 
	\text{for each }t\in[T], 
	$$
	because $\{\mathbf{m}^{(i)}_t\}_{t\in[T]}$ are centered, i.e. $\sum_{t'\in[T]} \mathbf{m}^{(i)}_{t'}=\mathbf{0}$.
	Under the setting of Theorem~\ref{thm:M error}, top $r$ eigenvalues have higher order compared with remaining eigenvalues so we have
	\begin{equation}\label{eq:M_F}
		\sum_{k=1}^r\lambda_k(\mb^{(i)})
	=\|\mm^{(i)}\|_F^2=\sum_{t=1}^T \|\mathbf{m}^{(i)}_{t}\|^2 
	\asymp T\rho_n\tilde\rho.
	\end{equation}
	Therefore when $r$ is bounded and $\mb^{(i)}$ has bounded condition number, we have $\lambda_r(\mb^{(i)})\asymp T\rho_n\tilde\rho$ and Eq.~\eqref{eq:MO-M} in Theorem~\ref{thm:M error} becomes
	$$
	\frac{1}{\sqrt{T}}\min_{\mo\in\mathcal{O}_r}\|\hat\mm^{(i)}\mo-\mm^{(i)}\|_F\lesssim
 \frac{T^{1/2}(n\rho_n)^{1/2}}{n (T\rho_n\tilde\rho)^{1/2}}
 \lesssim \frac{1}{(n\tilde\rho)^{1/2}}
	$$
	with high probability. Notice by Eq.~\eqref{eq:M_F} we have $\frac{1}{\sqrt{T}}\|\mm^{(i)}\|_F\asymp (\rho_n\tilde \rho)^{1/2}$, and therefore there exists orthogonal matrix $\mw^{(i)}_\mm$ such that $\hat\mm^{(i)}\mw^{(i)}_\mm\xrightarrow{w.h.p.} \mm^{(i)}$ {\color{black}under the assumptions $n\rho_n\tilde\rho^2=\omega(1)$ and $T=O(n^{\ell})$ for some $\ell>0$}.
\hspace*{\fill}  $\square$ 

\subsection{Proof of Proposition~\ref{prop:Dstar error2}}
By Theorem~\ref{thm:Dstar error} and the result $\lambda_r(\mb^{(i)})\asymp T\rho_n\tilde\rho$ derived in Proposition~\ref{prop:M error2}, the desired result of $\hat\md^{\star}$ follows.
\hspace*{\fill} $\square$

\subsection{Proof of Theorem~\ref{thm:M error2} and Theorem~\ref{thm:Dstar error2}}
For any $t_1,t_2\in[T]$, according to the proof of Theorem~6 in \cite{athreya2024discovering} we have
\begin{equation}\label{eq:random X}
	|(\hat\md^{(i)})^{\circ 2}-(\breve\md^{(i)})^{\circ 2}|_{\max}\lesssim n^{-1}(n\breve\rho_n)^{1/2}\log n
\end{equation}
with high probability. 
Based on Eq.~\eqref{eq:random X}, with almost identical analysis of Theorem~\ref{thm:M error} and Theorem~\ref{thm:Dstar error}, the desired results are derived.
\hspace*{\fill} $\square$

\end{document}